\newcommand{\tin}{t \in [0, T]}
\newcommand{\lambar}{\overline{\lambda}}
\newcommand{\ddy}{\frac{\partial}{\partial y}}
\newcommand{\ddv}{\frac{\partial}{\partial v}}
\newcommand{\piu}{\pi^{\ast}_{u}}
\newcommand{\pic}{\pi^{\ast}_{c}}
\newcommand{\rBrackets}[1]{\left( #1 \right)}
\newcommand{\sBrackets}[1]{\left[ #1 \right]}
\newcommand{\EVtyv}[2]{\mathbb{E}^{\mathbb{#1}}_{t,y,v} \sBrackets{#2}}
\newcommand{\EVtxv}[2]{\mathbb{E}^{\mathbb{#1}}_{t,x,v} \sBrackets{#2}}
\newcommand{\udptyv}{\overline{U}^{D,\mathbb{P}}(t,y,v)}
\newcommand{\udp}{\overline{U}^{D,\mathbb{P}}}
\newcommand{\dqlatyv}{D^{\mathbb{Q}(\lambda^{v})}(t,y,v)}
\newcommand{\dqla}{D^{\mathbb{Q}(\lambda^{v})}}
\newcommand{\dq}{D^{\mathbb{Q}}}
\newcommand{\vfc}{\mathcal{V}^{c}}
\newcommand{\vfu}{\mathcal{V}^{u}}
\newcommand{\ubard}{\overline{U}^{D}}
\newcommand{\ubardi}[1]{\overline{U}^{D}_{#1}}
\newcommand{\vcansatz}{\hat{\mathcal{V}}^{c}}
\crefname{hypothesis}{Hypothesis}{Hypotheses}
\title{Value-at-Risk constrained portfolios in incomplete markets: a dynamic programming approach to Heston's model}
\author{Marcos Escobar-Anel\thanks{Department of Statistical and Actuarial Sciences, Western University, 1151 Richmond street, London, Canada
  (\email{marcos.escobar@uwo.ca}).}
\and Yevhen Havrylenko\thanks{TUM School of Computation, Information and Technology, Technical University of Munich, Parkring 11, 85748 Garching bei M\"unchen, Germany. Present address: Department of Mathematical Sciences, University of Copenhagen, Universitetsparken 5, 2100, Copenhagen, Denmark (\email{yh@math.ku.dk}).}
\and Rudi Zagst\thanks{TUM School of Computation, Information and Technology, Technical University of Munich, Parkring 11, 85748 Garching bei M\"unchen, Germany (\email{zagst@tum.de}).}}
\begin{document}

\maketitle

\begin{abstract}
We solve an expected utility-maximization problem with a Value-at-risk constraint on the terminal portfolio value in an incomplete financial market due to stochastic volatility. To derive the optimal investment strategy, we use the dynamic programming approach. We demonstrate that the value function in the constrained problem can be represented as the expected modified utility function of a vega-neutral financial derivative on the optimal terminal wealth in the unconstrained utility-maximization problem. Via the same financial derivative, the optimal wealth and the optimal investment strategy in the constrained problem are linked to the optimal wealth and the optimal investment strategy in the unconstrained problem. In numerical studies, we substantiate the impact of risk aversion levels and investment horizons on the optimal investment strategy.  We observe a $20\%$ relative difference between the constrained and unconstrained allocations for average parameters in a low-risk-aversion short-horizon setting.
\end{abstract}

\begin{keywords}
Portfolio optimization, Hamilton-Jacobi-Bellman equations,  utility maximization, investment management, stochastic volatility 
\end{keywords}

\begin{MSCcodes}
91G10, 49L20
\end{MSCcodes}

\section*{Acknowledgments}
 Yevhen Havrylenko and Rudi Zagst acknowledge the financial support of the ERGO Center of Excellence in Insurance at the Technical University of Munich promoted by ERGO Group. Yevhen Havrylenko expresses gratitude to the participants of the 15-th Actuarial and Financial Mathematics Conference and the participants of a Mathematical and Computational Finance seminar at the University of Calgary for their feedback on the paper. 

 \section*{Conflict of interest}
The authors declare no conflict of interest.

\section{Introduction}
In the context of incomplete financial markets, this paper establishes the relevance of dynamic programming techniques for portfolio optimization problems with terminal wealth constraints. This development allows us to find the first analytical solution to an expected utility maximizer in the presence of a Value-at-Risk constraint on the terminal wealth in a stochastic-volatility environment as per Heston model (see \cite{Heston1993}). This is an important problem in the banking and insurance sectors; not only there is ample evidence of time-dependent volatilities in financial markets, see \cite{Wiggins1987} and \cite{Taylor1994}, but also financial institutions have to comply with the minimum capital reserve based on Value-at-Risk (VaR) required by the Basel Committee on Banking (see \cite{Basak2001}), and similarly the insurance sector must provide minimum guarantees due to solvency regulations (see \cite{Basak1995}). These are effective constraints on their operating portfolios (see \cite{Boyle2007}). 

Our methodology generalizes the pioneering work of \cite{Kraft2013} to incomplete markets due to stochastic volatility by demonstrating that the optimal wealth in the constrained optimization problem can be represented as a rolling-over vega-neutral financial derivative on the optimal wealth in the unconstrained optimization problem. To prove the result, we use a convenient financial derivative and match Hamilton-Jacobi-Bellman (HJB) equations under the real-world probability measure as well as an equivalent-martingale measure (EMM). Importantly, the link between the optimal terminal wealth in the wealth-constrained portfolio optimization problem and the optimal terminal wealth in the unconstrained problem was established in \cite{Basak2001}, where the authors considered a complete financial market and used the martingale approach (see \cite{Pliska1986}, \cite{Karatzas1987}). In contrast to \cite{Kraft2013}, \cite{Basak2001} and other related papers, we demonstrate this link for constrained problems in an incomplete financial market with stochastic volatility.

In closely related literature,  \cite{EscobarAnel2022a} derives the analytical representation of the optimal investment strategy for a decision maker maximizing his/her expected power utility of terminal wealth subject to a VaR constraint on the running minimum of the wealth process. To achieve this, the author extends the methodology of \cite{Kraft2013} to path-dependent constraints in a complete Black-Scholes market. \cite{Chen2018b} solves the portfolio optimization problem with a VaR constraint on terminal wealth in a complete financial market with three assets: a risk-free asset, a stock whose price dynamics follows the Heston model, and a continuously traded financial derivative that allows the investor to hedge the variance risk. In their setting of a complete market, the authors apply the martingale approach. When the market is incomplete, the martingale approach becomes significantly more difficult even for problems without constraints on the terminal portfolio value, since the investor cannot hedge any generic contingent claim and there are infinitely many equivalent martingale measures. \cite{Ntambara2017} addresses portfolio optimization problems with a constraint limiting the present expected short-fall of terminal wealth in an incomplete financial market that consists of one risky asset and one money-market account with a stochastic interest rate following 1-factor or 2-factor Vasicek model. Using the martingale approach, duality and Malliavin calculus, the researcher derives optimal investment strategies. According to \cite{Ntambara2017}, the distribution of a so-called deflator (also known as the pricing kernel) must be obtained prior to solving the dual optimization problem and a deflator is not known when a stochastic interest rate is described by the Cox-Ingersoll-Ross (CIR) model. This aspect motivates us to tackle our wealth-constrained portfolio optimization problem via a dynamic programming approach in contrast to the martingale approach, since the stochastic volatility in our incomplete market is also governed by the CIR model. Our work provides more evidence of the usefulness of Bellman's principle of optimality for portfolio optimization problems commonly tackled via the martingale approach.

Our methodology can be extended in many directions, for instance, to other incomplete market problems, e.g., stochastic market price of risk, stochastic interest rates, or stochastic correlations; other terminal or intermediate constraints on wealth like expected shortfall; or other utilities like HARA or piece-wise concave. Each of these cases would need special considerations in terms of matching partial differential equations (PDEs) and terminal conditions. In other words, each problem requires a special crafting, i.e. an ansatz, of the financial derivatives permitting the matching of PDEs and, hence, linking constrained and unconstrained problems.

Closed-form solutions to wealth-constrained optimization problems in incomplete markets have remained elusive through the years, mainly due to the lack of techniques to tackle the problem. Next, we highlight relevant sources on utility maximization in incomplete markets for general constraints. \cite{Karatzas1991} consider an extension of the Black-Scholes market where the number of risk drivers is larger than the number of traded stocks, placing constraints on investment strategies, rather than wealth. Their idea is to complete fictitiously the financial market. This completion is based on a suitably parameterized family of fictitious completions, which correspond to exponential local martingales. The \enquote{right} completion should satisfy a certain minimality property. \cite{Gundel2007} study this approach for the only optimal terminal wealth but do not derive the corresponding investment strategies. The explicit representation of the optimal terminal portfolio value is derived via certain worst-case measures, which can be characterized as minimizers of a dual problem. In parallel, \cite{He1991} apply a martingale approach to solve a continuous-time consumption-investment problem in a setting with an incomplete market and short-sale (strategy) constraints. They introduce the notion of minimax local martingale, transforming the dynamic problem into a static problem. Showing when the minimax local measure exists and how it is characterized, they derive conditions when the optimization has a solution, then linking the optimal strategies to the solution of a quasi-linear PDE. 

Our paper is organized as follows. Section \ref{sec:constrained_portfolios_general} introduces the problem at hand, a few important well-known results, and a first general representation of the main theorem of the paper. Section \ref{sec:explicit_formulas} focuses on explicitly applying the theorem to the power-utility maximization problem subject to VaR constraints in a Heston-model-based financial market. Section \ref{sec:numerical_studies} reports details on the numerical implementation and the most significant results. Conclusions and an outlook for further research are presented in Section \ref{sec:consclusion}. Appendix \ref{app:unconstrained_problem} contains the results related to the unconstrained optimization problem. Appendix \ref{app:constrained_OP_solution_short} provides proofs of theoretical results for solving the VaR-constrained problem. Appendix \ref{app:explicit_formulas} contains the derivation of explicit formulas to calculate the price and the Greeks of a synthetic derivative linking the constrained optimization problem and the unconstrained one. Additional plots from numerical studies are provided in Appendix \ref{app:numerical_studies_turbulent_markets}.

\section{Constrained portfolio optimization problem and its solution}\label{sec:constrained_portfolios_general}

We consider an investor maximizing utility from terminal wealth at time $T$
with respect to a continuous and increasing utility (primal) function $%
U$. The price process $B(t)$ of the risk-free asset
evolves according to $dB(t)=rB(t)dt,B(0) = 1$, and the interest rate $r$ is
assumed to be constant. The price process $S(t)$ of the risky asset follows
Heston's stochastic volatility model, introduced in \cite{Heston1993}. Its dynamics under the real-world measure $\mathbb{P}$ is given by the stochastic differential equation (SDE):
\begin{align}
d S(t)& =S(t)\left[ \left( r+\overline{\lambda }v(t)\right) dt+\sqrt{v(t)}\,%
dW^{\mathbb{P}}_{1}(t)\right]  \label{Heston-stock}; \\
d v(t)& =\kappa \left( \theta -v(t)\right) dt+\sigma \sqrt{v(t)} \rho \,d%
W^{\mathbb{P}}_{1}(t)\,+\sigma \sqrt{v(t)}\sqrt{1-\rho ^{2}}dW^{\mathbb{P}}_{2}(t)
\label{Heston-vol};
\end{align}%
with starting values $ S(0) = s_0 > 0$ and $v(0) = v_0 > 0$, premium for volatility $\overline{\lambda }>0$, mean-reversion rate $\kappa >0$, long-run mean $\theta >0$, volatility of the variance $\sigma>0$ and fulfilling Feller's condition $\kappa \theta > \frac{\sigma^2}{2}$. The portfolio value process under the real-world measure $\mathbb{P}$ evolves according to:
\begin{equation*}
dX^{x_0,\pi }(t)=X^{x_0, \pi }(t)\left[ \left( r+\pi (t)\overline{\lambda }%
v(t)\right) dt+\pi (t)\sqrt{v(t)}dW^{\mathbb{P}}_{1}(t)\right], \qquad
X(0) = x_{0}>0,
\end{equation*}%
where $\pi (t)$ denotes the proportion of wealth invested in the stock at time $\tin$, with
$1-\pi (t)$ invested in the cash account, and $x_{0}$ is the initial budget.

We consider the set of EMMs that have the following Radon-Nikodym derivatives w.r.t. $\mathbb{P}$:
\begin{equation*}
\begin{aligned}
    \frac{d \mathbb{Q}(\bar{\lambda}, \lambda^v(\cdot))}{d \mathbb{P}} = \exp \biggl(&-\int \limits_{0}^{T} \lambar \sqrt{v(s)}\,d W^{\mathbb{P}}_{1}(s) - \int \limits_{0}^{T} \lambda^{v}(s) \sqrt{v(s)}\,d W^{\mathbb{P}}_{2}(s) \\
    &- \frac{1}{2}\int \limits_{0}^{T}\rBrackets{\rBrackets{\lambar \sqrt{v(s)}}^2 + \rBrackets{\lambda^{v}(s) \sqrt{v(s)}}^2}\,ds \biggr) ,
    \end{aligned}
\end{equation*}
where $\mathbb{Q}(\bar{\lambda}, \lambda^{v})$ denotes a specific EMM, $\lambda^{v}(s)$ is assumed to be dependent on $t$ and independent of $v(s)$, as per \cite{Heston1993}, and it also satisfies the Novikov's condition:

\begin{equation*}
    \mathbb{E}^{\mathbb{P}}\sBrackets{\exp \rBrackets{\frac{1}{2}\int \limits_{0}^{T}\rBrackets{\rBrackets{\lambar \sqrt{v(s)}}^2 + \rBrackets{\lambda^{v}(s) \sqrt{v(s)}}^2}\,ds}} < +\infty.
\end{equation*}

To make the notation concise, we will write only $\lambda^{v}$ and $\mathbb{Q}(\lambda^{v})$, since only $\lambda^{v}$ is a degree of freedom in the choice of the EMM. Moreover, we assume that $\lambda^{v}$ is such that
$v(t) \geq 0$ $\forall\,\tin$ under $\mathbb{Q}(\bar{\lambda}, \lambda^{v})$ (see \eqref{eq:Heston_S_under_Q} and \eqref{eq:Heston_v_under_Q} below). 

The Heston model under the EMM $\mathbb{Q}(\lambda^v)$ is given by%
\begin{align}
& dS(t)=S(t)\left[ rdt+\sqrt{v(t)}dW^{\mathbb{Q}}_{1}(t)\right] \label{eq:Heston_S_under_Q}; \\
& d v(t)=\tilde{\kappa}\left( \tilde{\theta}-v(t)\right) dt+\sigma 
\sqrt{v(t)} \rho d W^{\mathbb{Q}}_1(t)\;+\sigma \sqrt{v(t)}\sqrt{1-\rho ^{2}} d
W^{\mathbb{Q}}_{2}(t)  \label{eq:Heston_v_under_Q},
\end{align}%
where $S(0) = s_0 > 0$, $v(0) = v_0 > 0$,  $d W^{\mathbb{P}}_{1}(t)=-\overline{\lambda }\sqrt{v(t)}dt+d W^{\mathbb{Q}}_1(t)$, $d W^{\mathbb{P}}_{2}(t)=-\lambda ^{v}\sqrt{v(t)}dt+dW^{\mathbb{Q}}_{2}(t)$, $\tilde{\kappa}=\kappa +\sigma
\overline{\lambda }\rho +\sigma \lambda^v\sqrt{1-\rho ^{2}}$, $\tilde{%
\theta}=\kappa \theta /\tilde{\kappa}$. 

The wealth process under the EMM $\mathbb{Q}(\lambda^v)$ evolves according to:
\begin{equation*}
dX^{x_0,\pi }(t)=X^{x_0, \pi }(t)\left[ rdt+\pi (t)\sqrt{v(t)}dW^{\mathbb{Q}}_{1}(t)%
\right], \qquad X^{x_0, \pi }(0)=x_0 > 0.
\end{equation*}

Let $\mathcal{U}(x_0, v_0)$ be the set of all admissible unconstrained investment strategies on $[0, T]$:
\begin{align*}
     \mathcal{U}(x_0, v_0) := \Biggl\{\left. \pi := \rBrackets{\pi(t)}_{\tin} \right \rvert &\pi \text{ is progressively measurable}, X^{x_0, \pi}(0) = x_0,\,\\
     & v(0) = v_0, \, \int_0^T \rBrackets{\pi (t)X^{x_0, \pi}(t)}^2\,dt < \infty\,\mathbb{P}\text{-a.s.} \Biggr\}
\end{align*}
and $\mathcal{U}(t, x, v)$ be the set of all admissible unconstrained investment strategies $\pi$ on $[t, T]$, given that $X^{x, \pi}(t) = x$, and $v(t) = v$.
Denote by $\mathcal{A}(x_0, v_0)\subset \mathcal{U}(x_0, v_0)$ the set of all admissible investment strategies that satisfy at $t=0$ a VaR constraint $\mathcal{A}(x_0, v_0) = \left\{ \pi \in \mathcal{U}(x_0, v_0) \mid \mathbb{P}\left( X^{x_0, \pi}(T)<K\right) \leq \varepsilon \right\}$. As in \cite{Basak2001} and \cite{Kraft2013}, the VaR constraint is static, i.e., it is satisfied only at the initial time $t = 0$ and may not be satisfied in general at a later time $t > 0$. Readers interested in a dynamic version of a risk constraint are referred to \cite{pirvu2007portfolio}, where the author considers a dynamic VaR of the projected portfolio loss over infinitesimally small time periods. 

Unless otherwise stated, the decision maker maximizes the expected power utility function $U\left( x\right) ={x^{\gamma }} / {\gamma }$, $\gamma \in (-\infty, 0) \cup (0,1)$, $x>0$, evaluated at the terminal wealth $X^{x_0,\pi}(T)$. So, the main problem we study is
\begin{equation}\label{eq:OP_main}
    \mathcal{V}^{c}\left(0,x_0,v_0\right):= \underset{\pi \in \mathcal{A}(x_0, v_0)}{\max}\mathbb{E}^{\mathbb{P}}_{0, x_0, v_0}\sBrackets{U\left( X^{x_0, \pi}(T)\right)},
\end{equation}
where we write $\EVtxv{M}{\cdot} := \mathbb{E}^{\mathbb{M}}\sBrackets{\cdot|X^{x, \pi}(t) = x, v(t) = v}$ for $\mathbb{M} \in \{\mathbb{P}, \mathbb{Q} \}$. Analogously, we will use the notation $\mathbb{M}_{t,x,v}\rBrackets{\cdot} := \mathbb{M}\rBrackets{\cdot\,|\,X^{x, \pi}(t) = x, v(t) = v}$ for $\mathbb{M} \in \{\mathbb{P}, \mathbb{Q} \}$.

Since the VaR constraint is static and must be abided by the investor only at $t = 0$, \eqref{eq:OP_main} can be transformed to an equivalent problem using a proper (optimal) Lagrange multiplier $\lambda _{\varepsilon } \geq 0$:
\begin{equation}\label{eq:OP_main_Lagrangian}
    \mathcal{V}^{c}\left(0,x_0,v_0\right) = \underset{\pi \in \mathcal{U}(x_0, v_0)}{\max}\mathbb{E}^{\mathbb{P}}_{0, x_0, v_0}\sBrackets{\overline{U}\left( X^{x_0, \pi}(T)\right)},
\end{equation}
where $\overline{U}\left( x\right) =U\left( x\right) -\lambda _{\varepsilon }\Bigl(1_{\{x <  K\}}$ $- \varepsilon\Bigr)$ is a modified utility function. Static VaR constraint implies that $\lambda_\varepsilon$ is constant and the problem is time consistent, i.e., the Bellman's principle of optimality holds. If we imposed a dynamic VaR constraint, i.e., $\mathbb{P}_{t,x,v}\left( X^{x, \pi}(T) < K \right) \leq \varepsilon$ $\forall (t, x, v) \in [0, T] \times (0, +\infty) \times (0, +\infty) $, then the $\lambda_{\varepsilon}$ would be a function of $(t,x, v)$ and the respective optimization problem would be time inconsistent, i.e., the dynamic programming approach would not be applicable and a different notion of optimality would be needed, e.g, see \cite{Bjoerk2021}. 

We embed \eqref{eq:OP_main_Lagrangian} into a family of related problems by varying time $\tin$: 
\begin{equation}\label{MainControlProb} \tag{PC}
\mathcal{V}^{c}\left( t,x,v\right)  := \underset{\pi \in \mathcal{U}(t, x, v)}{\max}\EVtxv{P}{ \overline{U}\left( X^{x, \pi}(T)\right) }. 
\end{equation}
We denote by $\pic = (\pic(t))_{\tin}$ the optimal investment strategy for \eqref{MainControlProb} and by $X^{\ast}(t):= X^{x, \pic}(t), \tin,$ the corresponding optimal wealth process.

We will solve \eqref{MainControlProb} using the solution to the following unconstrained optimization problem:
\begin{equation}\label{eq:OP_unconstrained}
    \mathcal{V}^{u}(0,y_0,v_0):=\underset{\pi \in \mathcal{U}(y_0, v_0)}{\max}\mathbb{E}^{\mathbb{P}}_{0, y_0, v_0}\sBrackets{U\left( Y^{y_0, \pi}(T)\right)},
\end{equation}
where we denote by $Y^{y_0, \pi}(t), \tin$ the respective wealth process to emphasize its relation to the unconstrained problem. As in the constrained case, we can embed \eqref{eq:OP_unconstrained} into a family of similar unconstrained, time consistent, problems that start at a different initial point $(t, y, v) \in [0, T] \times (0, +\infty) \times (0, +\infty)$:
\begin{equation}\label{PowerUnconstProb} \tag{PU}
\mathcal{V}^{u}(t,y,v):=\underset{\pi \in \mathcal{U}(t, y, v)}{\max }~\EVtyv{P}{
U(Y^{y, \pi}(T))}. 
\end{equation}

Let $\piu = (\piu(t))_{\tin}$ be the optimal unconstrained investment strategy and $Y^{\ast}(t):= Y^{y_0, \piu}(t), \tin$, be the optimal unconstrained wealth process \eqref{eq:OP_unconstrained}. These objects are known, since \eqref{PowerUnconstProb} has been well studied in the literature. In particular, for Heston's models whose parameters satisfy the following condition (same as Condition (26) in \cite{Kraft2005}):
\begin{equation}
    \frac{\gamma }{1-\gamma }\overline{\lambda } \left(\frac{\kappa \rho}{\sigma} +\frac{\overline{\lambda }}{2} \right) <\frac{\kappa ^{2}}{2 \sigma^2 },\label{KraftCondition}
\end{equation}
\cite{Kraft2005} solves the unconstrained utility maximization problem using the HJB approach to derive a candidate solution and then provides a verification result. \cite{Kallsen2010} combines the martingale method, the concept of an opportunity process, and the calculus of semi-martingale characteristics for parameters that may violate Condition \eqref{KraftCondition}.

In Appendix \ref{app:unconstrained_OP_solution}, for completeness, we provide two propositions regarding the unconstrained optimization problem. In Proposition \ref{prop:unconstrained_problem_solution} we show the optimal investment strategy, the optimal wealth, and the value function in \eqref{PowerUnconstProb}, which is a concise version of the results obtained in \cite{Kraft2013} adapted to our notation. In Proposition \ref{prop:characteristic_fct_log_Y}, we derive the characteristic functions of the logarithm of the unconstrained optimal wealth $Z^\ast(t):=\ln \rBrackets{Y^{\ast}(t)},\, \tin,$ under $\mathbb{P}$ and $\mathbb{Q}(\lambda^{v})$. These characteristic functions are needed later for pricing financial derivatives on the optimal unconstrained wealth and calculating their Greeks with the help of the inverse Fourier transform.

If $\mathbb{P}_{0, x_0, v_0}\rBrackets{Y^{x_0, \piu}(T) < K} \leq \varepsilon$, then, obviously, the VaR constraint is non-binding, $\lambda_{\varepsilon}^\ast = 0$, $\pic(t) = \piu(t)$ and $X^{x_0, \pic}(t) = Y^{x_0, \piu}(t)$ $\forall \tin$. So, from now on, we assume that $\mathbb{P}_{0, x_0, v_0}\rBrackets{Y^{x_0, \piu}(T) < K} > \varepsilon$ and the investor's initial capital $x_0$ is sufficiently large to satisfy the VaR constraint on the terminal wealth. The optimal wealth $X^{x_0, \piu}$ for the constrained problem \eqref{MainControlProb} will be represented via a to-be-conjectured financial derivative on the optimal unconstrained wealth $Y^{\ast}$ and the variance process $v$. They have the following SDEs under the EMM $\mathbb{Q}(\lambda^{v})$:%
\begin{eqnarray}
dY^{\ast}(t) &=&Y^{\ast}(t)rdt+Y^{\ast}(t)\piu(t)\sqrt{v(t)}dW^{\mathbb{Q}}_{1}(t) ;\label{eq:Y_Star_SDE_under_Q}\\
d v(t) &=&\tilde{\kappa}\left( \tilde{\theta}-v(t)\right) dt+\sigma \sqrt{%
v(t)}\rho d W^{\mathbb{Q}}_{1}(t)+\sigma \sqrt{v(t)}\sqrt{1-\rho ^{2}}d W^{\mathbb{Q}}_{2}(t). \notag
\end{eqnarray}

Let $D(\cdot, \cdot)$ be a Borel-measurable payoff function\footnote{The payoff function $D(\cdot, \cdot)$ may have points of discontinuity, but the function must be Borel measurable, so that we may later apply the Feynman-Kac theorem, see, e.g., Theorem 6.4.1 in \cite{Shreve2004}).} of a financial derivative on $Y^\ast$ and $v$.  We denote  by $\dqla(t,y,v)$ the price of such a contingent claim at $\tin$ such that $\dqla(T, y, v) = D(y,v)$. At $t = 0$, this financial derivative should satisfy the budget constraint and the terminal-wealth constraint, i.e., $\dqla(0, y_0, v_0) = x_0$ and $\mathbb{P}\rBrackets{D(Y^{\ast}(T), v(T)) < K|Y^{\ast}(0) = y_0, v(0) = v_0} = \varepsilon$ respectively. 

The PDE for the price of $D(Y^\ast(T), v(T))$
\begin{equation}\label{eq:def_of_financial_derivative_on_Y}
\dqlatyv=\mathbb{E}_{t,y,v}^{\mathbb{Q}(\lambda^{v})}\left[\exp\rBrackets{-r(T-t)} D(Y^{\ast}(T), v(T))\right]
\end{equation}%
is known via the Feynman-Kac (FK) theorem:%
\begin{flalign}
\dqla_{t}&=r\dqla - ry\dqla_{y}-\tilde{\kappa}\left( \tilde{\theta}-v\right) \dqla_{v}\label{eq:dq_pde} \\
 &\quad -\frac{1}{2}v\Bigg[y^{2}(\piu)^{2}\dqla_{yy}+2\sigma \rho y\piu \dqla_{yv}+\sigma ^{2}\dqla_{vv}\Bigg];\notag \\
\dqla(T,y,v)&=D(y,v). \notag
\end{flalign}

The expected modified utility of the claim that is based on the modified utility function $\overline{U}(\cdot)$ is:
\begin{equation}\label{eq:def_expected_aux_utility}
\udptyv=\EVtyv{P}{\overline{U}(D(Y^{\ast}(T), v(T)))},
\end{equation}
where the optimal wealth process under $\mathbb{P}$ comes from
Proposition \ref{prop:unconstrained_problem_solution}. Due to the FK theorem, the investor's expected modified utility $\rBrackets{\udptyv}$ of the contingent claim satisfies the following PDE:
\begin{flalign}
&0=\udp_{t}+\left( r+\piu \overline{\lambda }v\right) y\udp_{y}+\kappa
\left( \theta -v \right) \udp_{v} +\frac{1}{2}v\Bigg[y^{2}(\piu)^{2}\udp_{yy}+2\sigma \rho y\piu \udp_{yv}+\sigma ^{2}\udp_{vv}\Bigg] \label{eq:udp_pde} ;\\
&\udp(T,y,v)=\overline{U}(D(y, v)). \label{eq:udp_terminal_condition}
\end{flalign}

We show now that the wealth of the constrained problem 
can be represented by the price $\dqla$ of a contingent claim
on $Y^\ast$ and $v$, and the value function $\mathcal{V}^{c}$ by the expected utility on the contingent claim $\udp$. The following theorem is our main result. It provides three conditions under which the PDEs and the terminal conditions associated with $\mathcal{V}^{c}\left( t, x ,v\right) $ and $\udp(t,y,v)$ coincide, with $x=\dqla \left( t,y,v\right)$.

\bigskip
\begin{theorem}[Representation of constrained-problem solution]\label{MainTheo}
$\,$\\
Assume that Condition \eqref{KraftCondition} holds and that the VaR constraint is feasible in \eqref{MainControlProb}. Let $D(\cdot,\cdot)$, $y_0$, $\lambda^v(\cdot)$ and $\lambda_{\varepsilon}$ be such that $\mathbb{P}_{0, y_0, v_0}\rBrackets{D(Y^{y_0, \piu}(T), v(T)) < K} = \varepsilon $ at $t=0$, $D(y,v)$ is non-decreasing in $y \in (0, +\infty)$ for any $v>0$ and strictly increasing on a non-empty open sub-interval of $(0, +\infty)$, and the following three conditions are satisfied at each time $t \in [0,T]$:
\begin{eqnarray}
-\frac{{y}\udp_{yy}(t,y,v)}{\udp_{y}(t,y,v)}&=&-\frac{{y}\dqla_{yy}(t,y,v)}{\dqla_{y}(t,y,v)}+{1-\gamma };
\label{cond:U_D_yy_y} \\
\frac{\udp_{yv}(t,y,v)}{\udp_{y}(t,y,v)}&=& \frac{\dqla_{yv}(t,y,v)}{\dqla_{y}(t,y,v)} + b(t)
\label{cond:U_D_yv_y} ;\\
\dqla_{v}(t,y,v) &=&0, \label{cond:D_v}
\end{eqnarray}%
where $\dqla$ is given by \eqref{eq:def_of_financial_derivative_on_Y}, $\udp$ is defined in \eqref{eq:def_expected_aux_utility}, $Y^{\ast}(t) = y$, $v(t) = v$. Then a candidate optimal for the terminal portfolio value is:
\begin{align}
    X^{x, \pi^\ast_c}(T) & =  D(Y^{y, \piu}(T), v(T)) \label{eq:optimal_terminal_wealth} \\
    \text{with} \,\,x &=  \mathbb{E}_{t,y,v}^{\mathbb{Q}(\lambda^{v})}\left[\exp\rBrackets{-r(T-t)} D(Y^{y, \piu}(T),v(T))\right] = \dqlatyv  \notag
\end{align}
and the value function and optimal investment strategy in \eqref{MainControlProb} at time $\tin$ are:
\begin{align}
    \mathcal{V}^{c}\left( t, x ,v\right) =  \EVtxv{P}{ U\left( X^{x, \pi^\ast_c}(T)\right) }   & =\EVtyv{P}{\overline{U}(D(Y^{y, \piu}(T),v(T)))}  = \udp(t,y,v) ;\label{eq:link_btw_utility_functons}\\
    \pic(t) &= \piu(t) \cdot y  \cdot \frac{\dqla_{y}(t,y,v)}{\dqla (t,y,v)}. \label{OptConsProp}
\end{align}

If $\rho = 0$, solely Conditions \eqref{cond:U_D_yy_y} and \eqref{cond:D_v} are required.
\end{theorem}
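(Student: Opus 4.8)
\emph{Overall approach.} The plan is to show that $\udp$, read as a function of $(t,x,v)$ through the substitution $x=\dqlatyv$, solves the Hamilton--Jacobi--Bellman (HJB) equation associated with \eqref{MainControlProb} together with its terminal condition, then to read off the optimal control from the first-order condition, and finally to identify the optimal wealth process and check admissibility and the budget/VaR constraints. Optimising the HJB generator of $\mathcal{V}^{c}$ over $\pi$ and inserting the maximiser $\pi=-(\overline{\lambda}\mathcal{V}^{c}_x+\sigma\rho\mathcal{V}^{c}_{xv})/(x\mathcal{V}^{c}_{xx})$ gives the reduced HJB
\begin{equation*}
0=\mathcal{V}^{c}_t+rx\mathcal{V}^{c}_x+\kappa(\theta-v)\mathcal{V}^{c}_v+\tfrac12 v\sigma^2\mathcal{V}^{c}_{vv}-\frac{v\rBrackets{\overline{\lambda}\mathcal{V}^{c}_x+\sigma\rho\mathcal{V}^{c}_{xv}}^{2}}{2\mathcal{V}^{c}_{xx}},\qquad \mathcal{V}^{c}(T,x,v)=\overline{U}(x).
\end{equation*}
Condition \eqref{cond:D_v} in fact forces $\dqla_{yv}=\dqla_{vv}=0$ as well, so $\dqlatyv=\dqla(t,y)$ does not depend on $v$; since $D(\cdot,v)$ is non-decreasing and strictly increasing on an open subinterval, the price $\dqla(t,\cdot)$ is \emph{strictly} increasing for $t<T$ (full support of $Y^{\ast}(T)$ under $\mathbb{Q}(\lambda^{v})$), hence $y\mapsto x=\dqla(t,y)$ is a $C^{1}$-diffeomorphism with inverse $y=y(t,x)$, and interior parabolic regularity makes $\dqla$ and $\udp$ smooth for $t<T$ despite the jump of $\overline{U}$ at $K$.

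\emph{Matching the PDEs.} Differentiating the identity $\mathcal{V}^{c}(t,\dqla(t,y),v)=\udp(t,y,v)$ yields $\mathcal{V}^{c}_x=\udp_y/\dqla_y$, $\mathcal{V}^{c}_v=\udp_v$, $\mathcal{V}^{c}_{xv}=\udp_{yv}/\dqla_y$, $\mathcal{V}^{c}_{vv}=\udp_{vv}$, $\mathcal{V}^{c}_t=\udp_t-\udp_y\,\dqla_t/\dqla_y$, and $\mathcal{V}^{c}_{xx}=\dqla_y^{-2}\bigl(\udp_{yy}-\udp_y\,\dqla_{yy}/\dqla_y\bigr)$, which by \eqref{cond:U_D_yy_y} equals $-(1-\gamma)\udp_y/(y\dqla_y^{2})<0$ (so the maximiser above is indeed a maximiser). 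Substituting these into the reduced HJB, eliminating $\dqla_t$ via the Feynman--Kac PDE \eqref{eq:dq_pde} (with its $v$-derivatives now absent) and $\udp_t$ via the Feynman--Kac PDE \eqref{eq:udp_pde}, and using \eqref{cond:U_D_yy_y} and \eqref{cond:U_D_yv_y}, all second-order terms, the $\dqla_t$-, $\udp_t$- and $\kappa(\theta-v)$-terms cancel and the HJB collapses to the scalar identity
\begin{equation*}
\bigl((1-\gamma)\,\piu(t)-\overline{\lambda}-\sigma\rho\, b(t)\bigr)^{2}=0 .
\end{equation*}
By Proposition \ref{prop:unconstrained_problem_solution} one has $\mathcal{V}^{u}(t,y,v)=\tfrac{y^{\gamma}}{\gamma}\exp\!\bigl(A(t)+B(t)v\bigr)$ and $\piu(t)=(\overline{\lambda}+\sigma\rho B(t))/(1-\gamma)$ depends on $t$ alone, so choosing $b(t)=B(t)$ — which is $v$-independent, consistently with \eqref{cond:U_D_yv_y} since $\dqla_{yv}=0$ — makes the identity hold, and the terminal condition matches because $\dqla(T,y,v)=D(y,v)$ gives $\mathcal{V}^{c}(T,x,v)=\udp(T,y,v)=\overline{U}(D(y,v))=\overline{U}(x)$.

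\emph{Optimal control and verification.} Inserting the change-of-variables formulas into the HJB maximiser gives $\pic(t)=-(\overline{\lambda}\mathcal{V}^{c}_x+\sigma\rho\mathcal{V}^{c}_{xv})/(x\mathcal{V}^{c}_{xx})=\piu(t)\,y\,\dqla_y/\dqla$, which is \eqref{OptConsProp}. Setting $X^{\ast}(t):=\dqla(t,Y^{\ast}(t),v(t))$ and applying Itô's formula together with \eqref{eq:dq_pde} shows that under $\mathbb{Q}(\lambda^{v})$ the process $e^{-rt}X^{\ast}(t)$ is a local martingale and — crucially because $\dqla_v=0$ — that $X^{\ast}$ carries no $dW^{\mathbb{Q}}_2$ term; hence $X^{\ast}$ is exactly the wealth process of the strategy $\pic$, with $X^{\ast}(T)=D(Y^{\ast}(T),v(T))$ (this is \eqref{eq:optimal_terminal_wealth}) and $X^{\ast}(0)=\dqla(0,y_0,v_0)=x_0$, so the budget and (by hypothesis) the VaR constraints are met; since the latter binds, $\mathbb{E}^{\mathbb{P}}[\overline{U}(X^{\ast}(T))]=\mathbb{E}^{\mathbb{P}}[U(X^{\ast}(T))]$, giving \eqref{eq:link_btw_utility_functons}. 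A verification argument (Appendix \ref{app:constrained_OP_solution_short}) then promotes the HJB solution to the value function and $\pic$ to the optimiser. When $\rho=0$ every $\sigma\rho$-term disappears from the HJB, so \eqref{cond:U_D_yv_y} is never invoked and the scalar identity reduces to $((1-\gamma)\piu(t)-\overline{\lambda})^{2}=0$, which holds because $\piu(t)=\overline{\lambda}/(1-\gamma)$ in that case; nonetheless \eqref{cond:D_v} stays indispensable, both for the change of variables and for $X^{\ast}$ to be a legitimate (non-$W_2$-loading) wealth process.

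\emph{Main obstacle.} The substance of the proof is the verification step, not the algebra: one must cope with the discontinuity of $\overline{U}$ at $x=K$ (so $\udp$ is only piecewise smooth and the HJB/terminal analysis must be done off $\{x=K\}$), establish the integrability needed to apply Feynman--Kac and the local-martingale/verification arguments under both $\mathbb{P}$ and $\mathbb{Q}(\lambda^{v})$ (where $\lambda^{v}$ is only assumed to satisfy Novikov's condition), confirm $\pic\in\mathcal{U}(t,x,v)$, and justify that the price $\dqla(t,\cdot)$ is genuinely strictly increasing even though the payoff $D(\cdot,v)$ is only weakly monotone. The PDE matching itself is routine once the exponential--affine form of the Heston power-utility value function is used, which is precisely what renders $\piu$ and $b$ functions of $t$ alone.
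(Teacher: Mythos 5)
Your overall strategy coincides with the paper's: derive the reduced HJB equation for $\mathcal{V}^{c}$ after inserting the first-order maximiser, define the ansatz $\vcansatz(t,x,v)$ from $\udp$ via the change of variables $x=\dqla(t,y,v)$, match PDEs and terminal conditions using \eqref{cond:U_D_yy_y}--\eqref{cond:D_v}, verify $\vcansatz_{xx}<0$, and read off $\pic$ from the first-order condition; the $\rho=0$ specialization is also handled the same way.

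There is, however, one genuine gap. You assert that Condition \eqref{cond:D_v} ``forces $\dqla_{yv}=\dqla_{vv}=0$,'' so that $\dqla$ is $v$-independent, and you then use the simplified formulas $\mathcal{V}^{c}_{xv}=\udp_{yv}/\dqla_{y}$ and $\mathcal{V}^{c}_{vv}=\udp_{vv}$. That inference requires $\dqla_{v}\equiv 0$ on an open set in $(y,v)$, which is neither what the theorem is meant to assume nor what the application delivers: in Corollary~\ref{cor:heston_var_rho_nonzero_solution} the vega is zeroed only at the realized state $(t,y_t,v_t)$ by re-solving for the strikes at each time, so the price of the claim $D(Y^{\ast}(T),v(T))$ genuinely depends on $v$ away from the current state and $\dqla_{yv},\dqla_{vv}$ do not vanish there. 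Your reading also trivializes Condition \eqref{cond:U_D_yv_y}, whose right-hand side explicitly carries the vanna ratio $\dqla_{yv}/\dqla_{y}$. The paper's computation keeps these terms: it uses only the pointwise identity $\dqla_{v}=0$, obtains $\vcansatz_{xv}=\rBrackets{\udp_{yv}\dqla_{y}-\udp_{y}\dqla_{yv}}/\dqla_{y}^{2}$ and $\vcansatz_{vv}=\udp_{vv}-\vcansatz_{x}\dqla_{vv}$, absorbs the $\dqla_{yy}$, $\dqla_{yv}$, $\dqla_{vv}$ contributions through the pricing PDE \eqref{eq:dq_pde}, and after inserting \eqref{cond:U_D_yy_y} reduces the residual to $\frac{\udp_{y}y}{1-\gamma}\frac{\sigma^{2}\rho^{2}}{2}\rBrackets{b(t)-B}^{2}$ with $B=\udp_{yv}/\udp_{y}-\dqla_{yv}/\dqla_{y}$ --- which is precisely where \eqref{cond:U_D_yv_y} enters nontrivially. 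You would need to redo the mixed-derivative bookkeeping with $\dqla_{yv}$ and $\dqla_{vv}$ retained; you should also supply the argument that $\udp_{y}>0$ (the paper derives it from the hypothesis that $D(\cdot,v)$ is strictly increasing on an open subinterval) rather than asserting the concavity of $\vcansatz$ in $x$ directly.
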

\begin{proof}
See  Appendix \ref{app:constrained_OP_solution_short}. 
\end{proof}
\bigskip

\textbf{Remarks to Theorem \ref{MainTheo}}
\begin{enumerate}
    \item We do not yet need to impose any condition on $\lambda^v$. Along with the parameters of the payof{}f function $D$, it is an important degree of freedom to ensure Conditions \eqref{cond:U_D_yy_y} -- \eqref{cond:D_v}, as we will see in the following corollaries. 
    \item Condition \eqref{cond:U_D_yy_y} is the same as in \cite{Kraft2013}. Moreover, in the absence of stochastic volatility, we recover their results for the Black-Scholes market. Recall that the relative-risk aversion (RRA) coefficient of $U(x)=x^\gamma / \gamma$ is $1 - \gamma$. Therefore, from an economic perspective, condition \eqref{cond:U_D_yy_y} means that the RRA coefficient of $\udp$, which is induced by $U$, is $1 - \gamma$, since $\dq$ can be interpreted as the value function of a risk-neutral decision maker with the RRA coefficient of $0$.  
    \item Condition \eqref{cond:U_D_yv_y} conveys a deterministic relation between the ratio of the Greeks vanna and delta for both the contingent claim value ($\dqla$) and the expected modified utility ($\udp$). This is similar to the deterministic relation on the ratio of the Greeks gamma and delta implied by Condition \eqref{cond:U_D_yy_y}.
    \item Condition \eqref{cond:D_v} means that the financial derivative with terminal payof{}f $D$ has to be vega-neutral at time $t$ and the value $v$ of the variance process (i.e., $\ddv D = \frac{\partial  D}{\partial \sqrt{v}}\frac{\partial \sqrt{v}}{\partial v} = \frac{\partial D}{\partial \sqrt{v}}\frac{1}{2\sqrt{v}}$). The complexity lies in crafting this payof{}f function $D$.
    \item If the optimal terminal wealth in the unconstrained problem with initial capital $x$ satisfies the VaR constraint, then it is obviously the optimal wealth in the VaR-constrained optimization problem.  In this case,  $D(y,v) = y$, $\lambda_{\varepsilon} = 0$ and:
    \begin{itemize}
        \item $\eqref{cond:U_D_yy_y} \iff \udp_{y}(t,y,v) = y^{\gamma - 1} G_A(t, v)$ for some function $G_A(t, v)$, which holds for $\mathcal{V}^{u}(t,y,v)$ from  \eqref{eq:unconstrained_value_function};
        \item $\eqref{cond:U_D_yv_y} \iff \udp_{y}(t,y,v) = \exp\rBrackets{b(t)v} G_B(t, y)$ for some function $G_B(t, y)$, which holds for $\mathcal{V}^{u}(t,y,v)$ from  \eqref{eq:unconstrained_value_function};
        \item \eqref{cond:D_v} holds;
        \item \eqref{eq:link_btw_utility_functons} becomes $\mathcal{V}^{c}\left( t, y ,v\right) = \udp(t,y,v) \stackrel{\lambda_{\varepsilon}= 0}{=} \mathcal{V}^{u}\left( t, y ,v\right) $;
        \item \eqref{OptConsProp}  $\pic(t) = y \rBrackets{\frac{\overline{\lambda }}{1-\gamma}+\frac{\sigma \rho }{1-\gamma }b(t)}\frac{1}{y}= \piu(t)$.
    \end{itemize}
\end{enumerate}

Next, we provide convenient sufficient conditions to facilitate the applications of Theorem \ref{MainTheo}.

\begin{lemma}[Sufficient condition for \eqref{cond:U_D_yy_y} and \eqref{cond:U_D_yv_y}]
\label{lem:sufficient_condition}
$\,$\\
Condition \eqref{cond:U_D_yy_y} is satisfied at time $t \in [0, T]$ given $Y^{\ast}(t) = y$ and $v(t) = v$, if there exists a function $H(t,v)$ such that the following sufficient condition holds:
\begin{equation}\label{eq:sufficient_condition_rho_zero}
\udp_{y}(t,y,v)=y^{\gamma -1} H(t,v) \dqla_{y}(t,y,v).
\tag{SC0}
\end{equation}

Both Condition \eqref{cond:U_D_yy_y} and Condition \eqref{cond:U_D_yv_y} are satisfied at time $t \in [0, T]$ given $Y^{\ast}(t) = y$ and $v(t) = v$, if \eqref{eq:sufficient_condition_rho_zero} holds with $H(t,v) = h(t) \exp \left(b(t)v\right)$ for some function $h(t)$, i.e.:
\begin{equation}\label{cond:SC_in_lemma}
\udp_{y}(t,y,v) = y^{\gamma -1} h(t) \exp \left(b(t)v\right) \dqla_{y}(t,y,v).
\tag{SC}
\end{equation}
\end{lemma}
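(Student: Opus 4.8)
The plan is to prove the lemma by elementary logarithmic differentiation, exploiting the multiplicative structure of \eqref{eq:sufficient_condition_rho_zero} and \eqref{cond:SC_in_lemma}. Observe first that \eqref{cond:SC_in_lemma} is merely the special case of \eqref{eq:sufficient_condition_rho_zero} obtained with $H(t,v)=h(t)\exp(b(t)v)$, so it suffices to establish two facts: (i) \eqref{eq:sufficient_condition_rho_zero} implies \eqref{cond:U_D_yy_y}, and (ii) \eqref{cond:SC_in_lemma} additionally implies \eqref{cond:U_D_yv_y}. Throughout I would fix $t\in[0,T]$ and a pair $(y,v)$ with $Y^{\ast}(t)=y$, $v(t)=v$, and use that $\udp$ and $\dqla$ are smooth in $(y,v)$ on the relevant domain --- this follows from the Feynman--Kac representations \eqref{eq:def_of_financial_derivative_on_Y}, \eqref{eq:def_expected_aux_utility} together with the parabolic PDEs \eqref{eq:dq_pde}, \eqref{eq:udp_pde} --- so that all partial derivatives appearing in \eqref{cond:U_D_yy_y}--\eqref{cond:U_D_yv_y} exist.

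For (i), I would take logarithms in \eqref{eq:sufficient_condition_rho_zero}, writing $\ln\udp_{y}(t,y,v)=(\gamma-1)\ln y+\ln H(t,v)+\ln\dqla_{y}(t,y,v)$ on the region where the three factors are positive, and then differentiate in $y$. Since $\ln H(t,v)$ does not depend on $y$, this yields
\[
\frac{\udp_{yy}(t,y,v)}{\udp_{y}(t,y,v)}=\frac{\gamma-1}{y}+\frac{\dqla_{yy}(t,y,v)}{\dqla_{y}(t,y,v)},
\]
and multiplying through by $-y$ gives exactly \eqref{cond:U_D_yy_y}. (Equivalently, one differentiates \eqref{eq:sufficient_condition_rho_zero} directly in $y$ by the product rule and divides by \eqref{eq:sufficient_condition_rho_zero}: the prefactor $y^{\gamma-1}H(t,v)$ cancels and the derivative of $y^{\gamma-1}$ contributes the $(\gamma-1)/y$ term.)

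For (ii), the condition \eqref{cond:U_D_yy_y} holds automatically by (i), since \eqref{cond:SC_in_lemma} is an instance of \eqref{eq:sufficient_condition_rho_zero}. For \eqref{cond:U_D_yv_y} I would start from \eqref{cond:SC_in_lemma}, take logarithms, $\ln\udp_{y}(t,y,v)=(\gamma-1)\ln y+\ln h(t)+b(t)v+\ln\dqla_{y}(t,y,v)$, and differentiate in $v$. The terms $(\gamma-1)\ln y$ and $\ln h(t)$ drop out, $\partial_{v}\bigl(b(t)v\bigr)=b(t)$, and we obtain
\[
\frac{\udp_{yv}(t,y,v)}{\udp_{y}(t,y,v)}=b(t)+\frac{\dqla_{yv}(t,y,v)}{\dqla_{y}(t,y,v)},
\]
which is precisely \eqref{cond:U_D_yv_y}.

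Since the algebra is routine, the only step requiring genuine care --- and the one I would regard as the main obstacle --- is justifying that the divisions and the passage to logarithms are legitimate, i.e. that $\udp_{y}$, $\dqla_{y}$ and $H(t,v)$ are nonvanishing (in fact of a definite sign) on the domain considered. I would argue $\dqla_{y}>0$ from the hypothesis that $D(\cdot,v)$ is non-decreasing on $(0,+\infty)$ and strictly increasing on a non-empty open subinterval, together with the fact that under $\mathbb{Q}(\lambda^{v})$ the wealth $Y^{\ast}(T)$ from \eqref{eq:Y_Star_SDE_under_Q} is a strictly increasing (indeed proportional) function of its initial value $y$ whose law has full support on $(0,+\infty)$, so that the expectation in \eqref{eq:def_of_financial_derivative_on_Y} is strictly increasing in $y$; positivity of $H(t,v)$, and hence of $\udp_{y}$, then follows from \eqref{eq:sufficient_condition_rho_zero}. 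With these sign facts in hand, the two short derivations above complete the proof.
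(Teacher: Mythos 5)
Your proposal is correct and is essentially the paper's own argument: the paper differentiates \eqref{eq:sufficient_condition_rho_zero} in $y$ (resp.\ \eqref{cond:SC_in_lemma} in $v$) by the product rule and divides by $\udp_{y}$, which is exactly your logarithmic differentiation. Your extra care about the nonvanishing of $\udp_{y}$, $\dqla_{y}$ and $H$ is a reasonable refinement that the paper leaves implicit here (positivity of these quantities is only addressed later, in the proof of Theorem \ref{MainTheo}), but it does not change the substance of the argument.
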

\begin{proof}
See Appendix \ref{app:constrained_OP_solution_short}.
\end{proof}

\bigskip

In contrast to the sufficient condition in {\cite{Kraft2013}}, Condition \eqref{cond:SC_in_lemma} has an additional term $\exp \left(b(t)v\right)$. As we will see later, $h(t) = \exp(a(t))$ with $a(t)$ from Proposition \ref{prop:unconstrained_problem_solution}.

\section{Detailed application of Theorem \ref{MainTheo}} \label{sec:explicit_formulas}

In solving the VaR-constrained power utility problem in a complete Black-Scholes market, \cite{Kraft2013} use a contingent claim $D^{BS}(\cdot)$ with the following payof{}f as seen from time $\tin$:
\begin{align}
    X^\ast(T) & =  Y^{\ast}(T) + (K - Y^{\ast}(T)) \mathbbm{1}_{\{ k_{\varepsilon} < Y^{\ast}(T) \leq K\}}  \notag\\
    & =   Y^{\ast}(T) + \left( K - Y^{\ast}(T) \right) 1_{\left\{ Y^{\ast}(T) \leq K\right\} } \notag \\
    &\quad  - \left( k_{\varepsilon} - Y^{\ast}(T) \right) 1_{\left\{ Y^{\ast}(T) < k_{\varepsilon}\right\} }-\left( K-k_{\varepsilon}\right) 1_{\left\{ Y^{\ast}(T) <k_{\varepsilon}\right\} } =: D^{BS}(Y^{\ast}(T);k_{\varepsilon}, t) ,  \label{eq:D_conjecture_BS} 
\end{align}
where $0\leq k_{\varepsilon} \leq K$. The payof{}f \eqref{eq:D_conjecture_BS} is illustrated in Figure \ref{sfig:payoff_KS}. It consists of a long position of the optimal unconstrained wealth, a long put option, a short put option with a lower strike, and a binary put option.

The main result in \cite{Kraft2013} (Theorem 1) and their VaR application (i.e., Proposition 2) requires only one condition, their Equation (8), to ensure a successful writing of the constrained problem in terms of a derivative on the unconstrained optimal wealth (i.e., the matching of PDEs). This is the key observation behind their need for only one degree of freedom, the parameter $k_{\varepsilon}$ in their choice of contingent claim. However, their condition must be met at all times $t \in [0,T]$. It turns out that the condition leads to the same constant $k_{\varepsilon}$ at all times and for all state variables. In other words, the contingent claim is the same at all times: $D^{BS}(Y^{\ast}(T);k_{\varepsilon},t)=D^{BS}(Y^{\ast}(T))$. This further facilitates the calculation of $k_{\varepsilon}$ and helps simplify the contingent claim. This conclusion is also supported by a concavification argument, as explained by the authors.

\begin{figure}[!ht]
        \centering
        \begin{subfigure}{0.5\textwidth}
          \centering
          \includegraphics[width=\linewidth]{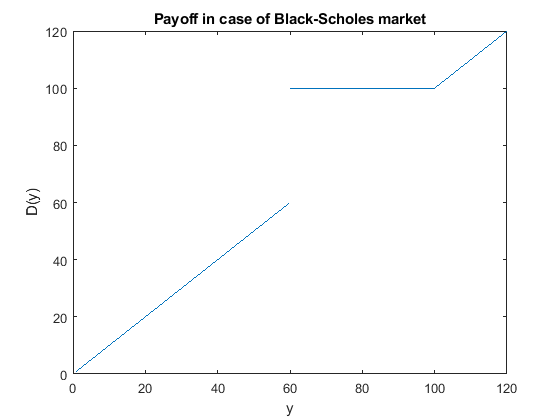}
          \caption{Example of Payoff \eqref{eq:D_conjecture_BS}, complete Black-Scholes \newline market}
          \label{sfig:payoff_KS}
        \end{subfigure}%
        \begin{subfigure}{0.5\textwidth}
          \centering
          \includegraphics[width=\linewidth]{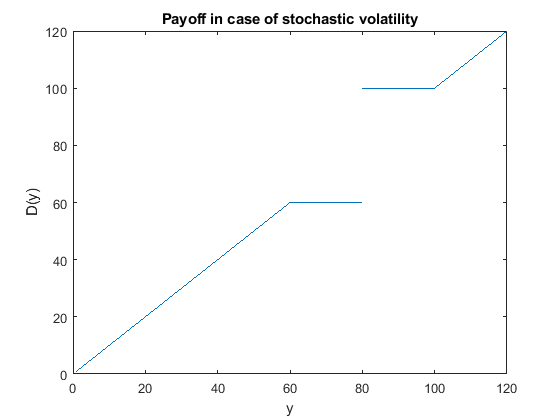}
          \caption{Example of Payoff \eqref{eq:D_conjecture_Heston}, incomplete stochastic \\ volatility market}
          \label{sfig:payoff_HM}
        \end{subfigure}
        \caption{\color{black} Comparison of payof{}f structures of to-be-conjectured $D$ in complete and incomplete markets}
        \label{fig:payoffs_KS_HM}
\end{figure}

In the presence of stochastic volatility, we have to ensure two conditions, namely Equation \eqref{cond:SC_in_lemma} (i.e., a sufficient condition for Equations \eqref{cond:D_v} and \eqref{cond:U_D_yv_y}), and Equation \eqref{cond:U_D_yy_y}. Therefore, the payof{}f structure requires a second degree of freedom, while \eqref{eq:D_conjecture_BS} is no longer viable. Our choice of the payof{}f a contingent claim at a given time $\tin$ is presented next:
\begin{align}
X^\ast(T) & =   Y^{\ast}(T)+\left( K-Y^{\ast}(T)\right) 1_{\left\{ k_{\varepsilon} \leq Y^{\ast}(T)\leq K\right\} }-\left( Y^{\ast}(T)-k_{v}\right) 1_{\left\{ k_{v}\leq Y^{\ast}(T)<k_{\varepsilon}\right\} } \notag\\
& =  Y^{\ast}(T) + \left( K - Y^{\ast}(T) \right) 1_{\left\{ Y^{\ast}(T) \leq K\right\} } \notag \\
&\quad -\left( k_{v} - Y^{\ast}(T) \right) 1_{\left\{ Y^{\ast}(T) < k_{v}\right\} }-\left( K-k_{v}\right) 1_{\left\{ Y^{\ast}(T) <k_{\varepsilon}\right\} } =: \widehat{D}\left( Y^{\ast}(T); k_{\varepsilon},k_{v}, t \right), \label{eq:D_conjecture_Heston} 
\end{align}%
with $0 \leq k_{v}\leq k_{\varepsilon}\leq K$. This payof{}f has an additional degree of freedom $k_v$, which is crafted to ensure Condition \eqref{cond:D_v} at the corresponding time point $\tin$, while $k_{\varepsilon}$ remains to ensure the sufficient condition \eqref{cond:SC_in_lemma}. {\color{black} In general, for $t_1 \in [0,T]$ and $t_2 \in [0,T]$ such that $t_1 \neq t_2$, the payof{}fs $\widehat{D}\left( Y^{\ast}(T); k_{\varepsilon},k_{v}, t_1 \right)$ and $\widehat{D}\left( Y^{\ast}(T); k_{\varepsilon},k_{v}, t_2 \right)$ are different. To emphasize this time-dependence while making the notation more concise, we write $\widehat{D}\left( Y^{\ast}(T); k_{\varepsilon, t},k_{v,t} \right) := \widehat{D}\left( Y^{\ast}(T); k_{\varepsilon},k_{v}, t \right)$, where $k_{\varepsilon, t}$ and $k_{v,t}$ are the related parameters of the contingent claim at time $\tin$.} Since $\mathbb{P}\left(X^{x_0, \pic}(T)<K\right) =\varepsilon \Leftrightarrow \mathbb{P}\left( Y^{y_0, \piu}(T) < k_{\varepsilon,0}\right)
=\varepsilon $, $\widehat{D}(\cdot; k_{\varepsilon,t},k_{v,t})$ has enough flexibility to ensure Condition \eqref{cond:D_v}, which can be seen as vega neutrality of the financial derivative. This payof{}f is illustrated in Figure \ref{sfig:payoff_HM}.

We show in the proof of Corollary \ref{cor:heston_var_rho_nonzero_solution} that $(k_{\varepsilon,t}, k_{v,t})$ must be modified at every time $\tin$ to ensure Condition \eqref{cond:SC_in_lemma}. This means that we deal with a state-dependent payof{}f. This is a financial derivative that can be hedged with a self-financing investment strategy and that has changing payof{}fs. In other words, we must use an infinite sequence, a continuum, of contingent claims to match the two conditions at all times. A similar development is needed in \cite{Kraft2013} to tackle the so-called Expected Shortfall constraint in their Section 3.3, although the authors do not dwell on it.

Each contingent claim in the above-mentioned continuum of claims has the underlying asset $Y^\ast$ and a payof{}f $\widehat{D}(\cdot;k_{\varepsilon, t}, k_{v, t})$ that has the structure of \eqref{eq:D_conjecture_Heston}. Therefore, in order to apply our main theorem, we need to show that such a continuum of contingent claims (i.e., with payof{}fs $\widehat{D}\left( \cdot; k_{\varepsilon,t }, k_{v, t} \right)$) can be modeled as a single contingent claim with a non-state-dependent payof{}f, denoted by $D(Y^{\ast}(T), v(T))$, as required by our Theorem \ref{MainTheo}. This connection is presented in the next proposition.

\begin{proposition}\label{prop:equivalence_btw_sequence_and_single_D}
Let $\widehat{G}(Y^\ast(T); k(t, Y^\ast(t), v(t)))$ be some payof{}f function of a contingent claim with a state-dependent strike denoted by $k(t, Y^\ast(t), v(t))$. Let $\widehat{D}(t, Y^\ast(t), v(t); k(t, Y^\ast(t), v(t)))$ be the price process corresponding to this payof{}f:
\begin{equation*}
    \widehat{D}(t,Y^\ast(t),v(t);k\left( t,Y^\ast(t),v(t)\right) )=E_{t,y,v}^{\mathbb{Q}}\left[ \widehat{G}\left(
Y^\ast(T);k\left( t,Y^\ast(t),v(t)\right) \right) \right].
\end{equation*}
{\color{black} Assume that $\widehat{D}$ is vega neutral at each $\tin$ and that there is a self-financing investment strategy that replicates the price process of this financial derivative.} Then $\widehat{D}$ can be interpreted as the price process of a single contingent claim with price $\widetilde{D}(t, Y^\ast(t), v(t), \widetilde{k}(t))$ and a non-state-dependent payof{}f $\widetilde{G}(Y^\ast(T), \widetilde{k}(T)) \equiv D(Y^\ast(T), v(T))$, such that for all $\tin$:
\begin{equation*}
    \widetilde{D}(t, Y^\ast(t), v(t), \widetilde{k}(t)) =  \widehat{D}(t,Y^\ast(t),v(t);k\left( t,Y^\ast(t),v(t)\right)),
\end{equation*}
where $ \widetilde{k}$ can be interpreted as an artificial asset fully explained by $(Y^* , v)$.
\end{proposition}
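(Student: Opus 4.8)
The plan is to extract the single, non-state-dependent contingent claim directly from the terminal value of the rolling-over price process, and then to use the replication hypothesis to identify $\widehat{D}$ with the Feynman--Kac price of that claim. First I would set $\widetilde{k}(t) := k(t, Y^\ast(t), v(t))$; since $Y^\ast$ and $v$ are adapted and $k$ is a deterministic Borel map, $\widetilde{k}$ is an adapted process that is a Borel function of $(t, Y^\ast(t), v(t))$ --- this is precisely the ``artificial asset fully explained by $(Y^\ast, v)$''. Then I would relabel the strike slot of $\widehat{G}$ and $\widehat{D}$ as a free coordinate, $\widetilde{G}(y, \widetilde{k}) := \widehat{G}(y; \widetilde{k})$ and $\widetilde{D}(t, y, v, \widetilde{k}) := \widehat{D}(t, y, v; \widetilde{k})$, so that $\widetilde{D}\rBrackets{t, Y^\ast(t), v(t), \widetilde{k}(t)} = \widehat{D}\rBrackets{t, Y^\ast(t), v(t); k(t, Y^\ast(t), v(t))}$ for every $\tin$ by construction.

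Evaluating at $t = T$ gives $\widehat{D}\rBrackets{T, Y^\ast(T), v(T); k(T, Y^\ast(T), v(T))} = \widehat{G}\rBrackets{Y^\ast(T); k(T, Y^\ast(T), v(T))}$, which is a deterministic Borel function of $\rBrackets{Y^\ast(T), v(T)}$; I would \emph{define} this to be $D(Y^\ast(T), v(T))$, so that $\widetilde{G}(Y^\ast(T), \widetilde{k}(T)) \equiv D(Y^\ast(T), v(T))$. Borel-measurability of $\widehat{G}$ and $k$ is exactly what guarantees that the Feynman--Kac theorem is applicable to the payoff $D(\cdot,\cdot)$, as required later by Theorem \ref{MainTheo}.

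The key step would be to invoke the two standing hypotheses to pin down the price. By assumption there is a self-financing strategy whose value process $V(\cdot)$ coincides with $\widehat{D}\rBrackets{t, Y^\ast(t), v(t); k(t, Y^\ast(t), v(t))}$ for all $\tin$, and hence $V(T) = D(Y^\ast(T), v(T))$ by the previous paragraph. Vega-neutrality is what makes such replication compatible with market incompleteness: the only traded risky asset is driven by $W_1^{\mathbb{Q}}$, so a replicable claim must carry no $W_2^{\mathbb{Q}}$-exposure, and the $dW_2^{\mathbb{Q}}$-coefficient of $d\widehat{D}$ along the path is a multiple of the total $v$-derivative of the composite, which vanishes by the assumed vega-neutrality of $\widehat{D}$ (the analogue of \eqref{cond:D_v}). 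Under $\mathbb{Q}(\lambda^v)$ the discounted value of a self-financing portfolio obeys
\begin{equation*}
d\rBrackets{e^{-rt}V(t)} = e^{-rt}\,V(t)\,\pi(t)\sqrt{v(t)}\,dW_1^{\mathbb{Q}}(t),
\end{equation*}
hence is a local martingale; admissibility/integrability of the replicating strategy upgrades it to a true $\mathbb{Q}(\lambda^v)$-martingale. Since $V(T) = D(Y^\ast(T), v(T))$, this yields $V(t) = \mathbb{E}_{t,y,v}^{\mathbb{Q}(\lambda^{v})}\sBrackets{e^{-r(T-t)} D(Y^\ast(T), v(T))} = \dqla(t,y,v)$ at $y = Y^\ast(t)$, $v = v(t)$, i.e.\ the rolling-over price process coincides with the Feynman--Kac price \eqref{eq:def_of_financial_derivative_on_Y} of the single claim $D(Y^\ast(T), v(T))$.

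Putting the pieces together, $\widetilde{D}\rBrackets{t, Y^\ast(t), v(t), \widetilde{k}(t)} = \widehat{D}\rBrackets{t, Y^\ast(t), v(t); k(t,\cdot)} = \dqla(t, Y^\ast(t), v(t))$ is the price process of the \emph{single} contingent claim with the non-state-dependent payoff $\widetilde{G}(Y^\ast(T), \widetilde{k}(T)) = D(Y^\ast(T), v(T))$, and $\widetilde{k}(t) = k(t, Y^\ast(t), v(t))$ is by construction a deterministic function of $(Y^\ast, v)$, as claimed. I expect the main obstacle to be the local-to-true martingale step --- showing the discounted replicating value is a genuine, and not merely a local, $\mathbb{Q}(\lambda^v)$-martingale --- which is where the admissibility of the replicating strategy (together with the removal of the unhedgeable $W_2^{\mathbb{Q}}$-risk forced by vega-neutrality) has to be used with care.
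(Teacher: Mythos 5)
Your proof is correct, but it takes a genuinely different route from the paper's. You extract the terminal payoff $D(Y^\ast(T), v(T)) := \widehat{G}\bigl(Y^\ast(T); k(T, Y^\ast(T), v(T))\bigr)$ directly and then lean on the replication hypothesis: the discounted value of the self-financing replicating portfolio is a $\mathbb{Q}(\lambda^v)$-(local) martingale driven only by $W_1^{\mathbb{Q}}$ (vega-neutrality killing the $W_2^{\mathbb{Q}}$-exposure), so the rolling-over price at time $t$ must equal $\mathbb{E}^{\mathbb{Q}}_{t}\bigl[e^{-r(T-t)}D(Y^\ast(T),v(T))\bigr]$, i.e.\ the price of a single non-state-dependent claim. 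The paper instead argues at the level of SDEs and PDEs: it applies It\^{o}'s lemma to the composite $\widehat{D}(t,A,v;k(t,A,v))$, matches the result term by term against the self-financing dynamics (the $dW_2^{\mathbb{Q}}$-matching recovering exactly the vega-neutrality identity $\widehat{D}_v + \widehat{D}_k k_v = 0$), and reads off from the drift a Feynman--Kac equation for a derivative on three underlyings $(A, v, \widetilde{k})$, in which $\widetilde{k}$ is assigned an explicit SDE; since $\widetilde{k}(t) = k(t, A(t), v(t))$ is a function of the other two states, this collapses to a single claim $D^{\mathbb{Q}}(t,A,v)$ with payoff $G(A(T),v(T))$. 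Your martingale argument is shorter and makes the economic content (risk-neutral valuation of an attainable, vega-neutral claim) transparent; the paper's computation buys the explicit dynamics of the artificial asset $\widetilde{k}$ and the three-asset PDE, which is what substantiates calling $\widetilde{k}$ an ``artificial asset fully explained by $(Y^\ast, v)$.'' The one step you flag yourself --- upgrading the discounted replicating value from a local to a true $\mathbb{Q}(\lambda^v)$-martingale --- is indeed where integrability must be invoked, but the paper operates at the same formal level (it applies the Feynman--Kac theorem without verifying such conditions either), so this is not a gap relative to the paper's own standard of rigor.
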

\begin{proof}
    See Appendix \ref{app:constrained_OP_solution_short}.
\end{proof}

Next, we apply our main theorem, using the sequence of financial derivatives with payof{}f \eqref{eq:D_conjecture_Heston} and Proposition \ref{prop:equivalence_btw_sequence_and_single_D}, and derive a more explicit representation of the solution to \eqref{MainControlProb}.

\begin{corollary}[Solution to \eqref{MainControlProb}] \label{cor:heston_var_rho_nonzero_solution}
{\color{black} Assume that Condition \eqref{KraftCondition} holds and that the VaR constraint is feasible and binding in \eqref{MainControlProb}.
Set $\lambda^v(t) = - \sigma \sqrt{1 - \rho^2} b(t)$, and let $D(\cdot,\cdot)$ be the payof{}f derived -- via Proposition \ref{prop:equivalence_btw_sequence_and_single_D} -- from a continuum of payof{}fs, denoted by $\widehat{D}(\cdot; k_{\varepsilon, t},k_{v, t})$, of the type given by Equation \eqref{eq:D_conjecture_Heston}.  Assume that $\widehat{D}(\cdot; k_{\varepsilon, t}, k_{v, t})$ is such that its degrees of freedom $\left(y_t, k_{v,t}, k_{\varepsilon, t}\right)$ satisfy the system of non-linear equations
at time $t = 0$:
\begin{equation}\label{eq:SNLE_t_zero} \tag{NLS0}
    \left\{
    \begin{aligned}
        &h_{B}(y_0, k_{v, 0}, k_{\varepsilon, 0}) := \widehat{D}(0,y_0,v_0;k_{v, 0}, k_{\varepsilon, 0}) = x_0;\\ 
        &h_{VN}(y_0, k_{v, 0}, k_{\varepsilon, 0}) := \widehat{D}_{v}(0,y_0,v_0;k_{v, 0}, k_{\varepsilon, 0}) = 0;\\ 
        &h_{VaR}(y_0, k_{v, 0}, k_{\varepsilon, 0}) := \mathbb{P}\left( Y^{\ast}(T) <  k_{\varepsilon, 0}|Y^{\ast}(0) = y_0, v(0) = v_0\right) = \varepsilon;
    \end{aligned}
    \right.
\end{equation}
and the system of non-linear equations for each time $t \in (0, T]$:
\begin{equation}\label{eq:SNLE_rho_zero}  \tag{NLS}
    \left\{
    \begin{aligned}
        &h_{B}(y_t, k_{v, t}, k_{\varepsilon, t}) = x_t;\\
        &h_{VN}(y_t, k_{v, t}, k_{\varepsilon, t})  = 0;\\
        &y_t^{\gamma - 1}\exp \left(a(t) + b(t)v_t - r(T - t) \right) \left( K - k_{v, t}\right) \frac{f^{\mathbb{Q}}_{Z^\ast(T)}(\ln k_{\varepsilon, t})}{f^{\mathbb{P}}_{Z^\ast(T)}(\ln k_{\varepsilon, t})} -  \frac{K^{\gamma} - k_{v,t}^{\gamma }}{\gamma } = \lambda^\ast_{\varepsilon};
    \end{aligned}
    \right.
\end{equation}
for the Lagrange multiplier given by:
\begin{equation}\label{eq:lambda_e_for_rho_any}
    \lambda _{\varepsilon}^\ast = y_0^{\gamma - 1}\exp \left(a(0) + b(0)v_0 - rT\right) \left(K-k_{v,0}\right) \frac{f^{\mathbb{Q}}_{Z^\ast(T)}(\ln k_{\varepsilon, 0})}{f^{\mathbb{P}}_{Z^\ast(T)}(\ln k_{\varepsilon, 0})} -  \frac{K^{\gamma} - k_{v,0}^{\gamma }}{\gamma }.
\end{equation}

Here $x_t$ is the realized value of $X(t)$ at $\tin$, $v_t$ is the realized value of $v(t)$ at $\tin$, $f^{\mathbb{M}}_{Z^\ast(T)}(\cdot)$ is the conditional density function of $Z^\ast(T):=\ln(Y^{\ast}(T))$ under the measure $\mathbb{M} \in \{\mathbb{P},\, \mathbb{Q}\}$ given $Y^{\ast}(t) = y_t$ and $v(t) = v_t$. Then, the optimal terminal portfolio value is given by \eqref{eq:optimal_terminal_wealth}, the value function is given by \eqref{eq:link_btw_utility_functons}, and the solution to \eqref{MainControlProb} is given by \eqref{OptConsProp}.}
\end{corollary}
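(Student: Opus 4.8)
The plan is to verify that the triple $(D,\lambda^v,\lambda_\varepsilon)$ described in the statement satisfies every hypothesis of Theorem~\ref{MainTheo}; the three displayed identities are then its conclusions. Condition~\eqref{KraftCondition} and feasibility of the VaR constraint are inherited from the assumptions of the corollary, so only the remaining hypotheses need attention. First I would legitimise the single-claim interpretation: by construction the continuum $\widehat{D}(\cdot;k_{\varepsilon,t},k_{v,t})$ is vega-neutral at every $\tin$ — this is the equation $h_{VN}=0$ that appears in both \eqref{eq:SNLE_t_zero} and \eqref{eq:SNLE_rho_zero} — so Condition~\eqref{cond:D_v} already holds and, the price function being $v$-independent, the claim is replicated by a self-financing strategy in $Y^\ast$ and the bond; Proposition~\ref{prop:equivalence_btw_sequence_and_single_D} then supplies a single contingent claim with a non-state-dependent payoff $D(Y^\ast(T),v(T))$ whose price process coincides with that of the continuum. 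From the piecewise form \eqref{eq:D_conjecture_Heston} one reads off that, for each $v$, $D(\cdot,v)$ is non-decreasing on $(0,\infty)$ and strictly increasing near $0$ (and above $K$), which is the monotonicity requirement. The boundary data at $t=0$ are handled by the remaining equations of \eqref{eq:SNLE_t_zero}: the budget constraint $\dqla(0,y_0,v_0)=x_0$ is $h_B=x_0$, and since $\{D(Y^\ast(T),v(T))<K\}=\{Y^\ast(T)<k_{\varepsilon,0}\}$ the terminal-wealth constraint $\mathbb{P}_{0,y_0,v_0}(D(Y^\ast(T),v(T))<K)=\varepsilon$ is $h_{VaR}=\varepsilon$.

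The crux is Conditions~\eqref{cond:U_D_yy_y} and~\eqref{cond:U_D_yv_y}, which by Lemma~\ref{lem:sufficient_condition} both follow once I establish \eqref{cond:SC_in_lemma}, i.e. $\udp_y(t,y,v)=y^{\gamma-1}\exp(a(t))\exp(b(t)v)\,\dqla_y(t,y,v)$ with $a,b$ from Proposition~\ref{prop:unconstrained_problem_solution}. The structural fact I would use is that $\piu$ depends on time only, so along the optimal unconstrained dynamics the increment $Z^\ast(T)-Z^\ast(t)$ of $Z^\ast=\ln Y^\ast$ is independent of $Z^\ast(t)$ given the variance path; hence the conditional densities $f^{\mathbb{P}}_{Z^\ast(T)}$, $f^{\mathbb{Q}}_{Z^\ast(T)}$ (available from Proposition~\ref{prop:characteristic_fct_log_Y}) merely shift by $\ln y$. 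Differentiating $\udptyv=\mathbb{E}^{\mathbb{P}}_{t,y,v}[\overline{U}(D(Y^\ast(T),v(T)))]$ and $\dqlatyv=\mathbb{E}^{\mathbb{Q}(\lambda^v)}_{t,y,v}[e^{-r(T-t)}D(Y^\ast(T),v(T))]$ in $y$ via Leibniz' rule, the boundary contributions at the kinks $k_{v,t}$ and $K$ cancel pairwise on each side, leaving a smooth integral over $\{Y^\ast(T)<k_{v,t}\}\cup\{Y^\ast(T)>K\}$ (where $D$ is the identity) plus one term proportional to the conditional density at $\ln k_{\varepsilon,t}$ (from the jump of $D$ at $k_{\varepsilon,t}$ and the indicator inside $\overline{U}$). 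The smooth integrals match automatically: this is exactly \eqref{eq:sufficient_condition_rho_zero} specialised to $D(y)=y$, which holds for $\mathcal{V}^{u}$ by \eqref{eq:unconstrained_value_function}, and it is here that the choice $\lambda^v(t)=-\sigma\sqrt{1-\rho^2}b(t)$ is used — it selects the EMM under which the deflated density of $Y^\ast(T)$ is proportional to $(Y^\ast(T))^{\gamma-1}$, the first-order condition of \eqref{PowerUnconstProb}. The two point terms coincide if and only if
\[
y^{\gamma-1}\exp(a(t)+b(t)v-r(T-t))\,(K-k_{v,t})\,\frac{f^{\mathbb{Q}}_{Z^\ast(T)}(\ln k_{\varepsilon,t})}{f^{\mathbb{P}}_{Z^\ast(T)}(\ln k_{\varepsilon,t})}-\frac{K^{\gamma}-k_{v,t}^{\gamma}}{\gamma}=\lambda_\varepsilon ,
\]
which is the third equation of \eqref{eq:SNLE_rho_zero}; evaluated at $t=0$ it becomes \eqref{eq:lambda_e_for_rho_any}, and since the static VaR constraint forces $\lambda_\varepsilon$ to be constant, this same $\lambda_\varepsilon^\ast$ must be reproduced at every $\tin$, which is why the equation is imposed on all of $[0,T]$.

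With \eqref{cond:SC_in_lemma} and \eqref{cond:D_v} verified, Lemma~\ref{lem:sufficient_condition} yields \eqref{cond:U_D_yy_y} and \eqref{cond:U_D_yv_y}, so Theorem~\ref{MainTheo} applies and delivers \eqref{eq:optimal_terminal_wealth}, \eqref{eq:link_btw_utility_functons} and \eqref{OptConsProp}. I expect the main obstacle to be the computation just sketched: differentiating two conditional expectations whose common integrand is discontinuous, kinked, and — through the continuum — time- and state-dependent, checking the pairwise cancellation of the Leibniz boundary terms, and recognising that the smooth parts match precisely when the corresponding unconstrained-problem identity does. A secondary subtlety, largely delegated to Proposition~\ref{prop:equivalence_btw_sequence_and_single_D}, is justifying the replacement of the state-dependent continuum of claims by a single claim with a non-state-dependent payoff and confirming the consistency over time of the single multiplier $\lambda_\varepsilon^\ast$.
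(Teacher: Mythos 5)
Your overall architecture coincides with the paper's: verify the hypotheses of Theorem \ref{MainTheo}, obtain \eqref{cond:D_v} from the vega-neutrality equation $h_{VN}=0$ together with Proposition \ref{prop:equivalence_btw_sequence_and_single_D}, obtain \eqref{cond:U_D_yy_y}--\eqref{cond:U_D_yv_y} from Lemma \ref{lem:sufficient_condition} via \eqref{cond:SC_in_lemma}, and extract the third equation of \eqref{eq:SNLE_rho_zero} (hence \eqref{eq:lambda_e_for_rho_any}) from the point term that the jump of $D$ at $k_{\varepsilon,t}$ and the indicator inside $\overline{U}$ contribute to $\udp_y$ and $\dqla_y$. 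That part of your sketch, including the pairwise cancellation of the Leibniz boundary terms at the kinks, matches the paper's treatment of the binary-option piece.

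The gap is your claim that the smooth integrals "match automatically" because the required identity is "\eqref{eq:sufficient_condition_rho_zero} specialised to $D(y)=y$, which holds for $\mathcal{V}^{u}$ by \eqref{eq:unconstrained_value_function}". What \eqref{eq:unconstrained_value_function} gives is only the full-space moment identity $\mathbb{E}^{\mathbb{P}}_{t,y,v}\bigl[(Y^\ast(T))^{\gamma}\bigr]=y^{\gamma-1}e^{a(t)+b(t)v}\,\mathbb{E}^{\mathbb{Q}}_{t,y,v}\bigl[e^{-r(T-t)}Y^\ast(T)\bigr]$. The smooth part of your Leibniz computation needs the same identity restricted to the events $\{Y^\ast(T)<k_{v,t}\}$ and $\{Y^\ast(T)>K\}$, equivalently the pointwise density relation $\xi^{\gamma}f^{\mathbb{P}}_{Y^\ast(T)}(\xi)=y^{\gamma-1}e^{a(t)+b(t)v-r(T-t)}\,\xi\,f^{\mathbb{Q}}_{Y^\ast(T)}(\xi)$ for all $\xi>0$, which is strictly stronger than the full-space version and is exactly the paper's condition \eqref{eq:suffiecient_condition_as_financial_derivatives_SupMat} required for every strike. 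You do name the correct mechanism (the deflated $\mathbb{Q}$-density of $Y^\ast(T)$ being proportional to $(Y^\ast(T))^{\gamma-1}$ under the chosen $\lambda^v$), but you assert it rather than prove it; establishing it is the bulk of the paper's proof, which writes Feynman--Kac PDEs for $g^{\mathbb{P}}$ and $g^{\mathbb{Q}}$, matches terminal conditions, and verifies the ansatz $g^{\mathbb{P}}=y^{\gamma-1}e^{a(t)+b(t)v}g^{\mathbb{Q}}$ --- a computation in which the coefficient of $v\,g^{\mathbb{Q}}_{v}$ vanishes only for $\lambda^v(t)=-\sigma\sqrt{1-\rho^2}\,b(t)$, so that choice is forced there rather than merely "used". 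Until this identity is established, Conditions \eqref{cond:U_D_yy_y} and \eqref{cond:U_D_yv_y} are not verified and Theorem \ref{MainTheo} cannot be invoked.
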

\begin{proof}
See  Appendix \ref{app:constrained_OP_solution_short}.
\end{proof}

\textbf{ Remarks to Corollary \ref{cor:heston_var_rho_nonzero_solution}}
\begin{enumerate}
    \item The tuple $\left(y_t,k_{v,t},k_{\varepsilon,t}\right)$ needs to be updated at every $\tin$ in order to produce the right strategy. In a Black-Scholes market, \cite{Kraft2013} does not need to update $k_{\varepsilon}$ and $y$.
    \item The conditional density functions $f^{\mathbb{M}}_{Z^\ast(T)}(\cdot)$ can be calculated using the inversion of the characteristic functions of $Z^\ast(T)$ provided in Proposition \ref{prop:characteristic_fct_log_Y}.
    \item The investor's value function $\mathcal{V}^{c}$, the price of the financial derivative $\dqla$ and its Greeks  $\dqla_{y}$ as well as $\dqla_{v}$ can be numerically computed using the Carr-Madan approach to pricing options. We provide the corresponding formulas for $h_{B}(y_t, k_{v,t},\,k_{\varepsilon,t})$, $h_{VaR}(y_t, k_{v,t}, k_{\varepsilon,t})$, $h_{VN}(y_t, k_{v,t}, k_{\varepsilon,t})$ in Appendix \ref{app:explicit_formulas}.
    \item $\lambda^v$ is the same as the worst-case shadow price in the unconstrained problem that leads to $Y^\ast$. As we represent $X^\ast$ as a synthetic financial derivative on $Y^\ast$, it is not a surprise that $\lambda^v$ appears in the optimal constrained solution too.
\end{enumerate}

\section{Numerical studies}\label{sec:numerical_studies}
In this section, first, we explain how we choose the model parameters and provide details on the solution procedure for the system of non-linear equations \eqref{eq:SNLE_rho_zero}.  Second, we conduct sensitivity analysis of the optimal constrained and unconstrained investment strategy w.r.t. key parameters such as the relative risk aversion (RRA) coefficient and the investment horizon.

\subsection{Model parameterization and numerical procedure}
{\color{black}
We choose the parameters of the Heston model as in Table 4 in \cite{Escobar2016}, the row corresponding to the average case of the table mentioned. There, the authors provide model parameterization under an EMM. In particular, we set: $\kappa = 3.6129$, $\theta = 0.0291$, $\sigma = 0.3$, $\rho = -0.4$, $v_0 = 0.03$. $\lambar = 1$, $r = 3\%$. Under these parameters, $\lambda^{v} = 0.0238$, leading to $\tilde{\kappa} = 3.5$, $\tilde{\theta} = 0.03$. We set $\gamma = -2$, which corresponds to the RRA coefficient of $3$, as also considered in \cite{Chen2018b}.  We assume that the investor's time horizon is $T = 3$, his/her initial wealth is $x_0 = 100$, and the VaR constraint is specified by $K = 100$ and $\varepsilon = 1\%$ in the base case.

Solving the system of non-linear equations \ref{eq:SNLE_rho_zero} requires numerical methods. First, we need to find $A^{\mathbb{M}}$, $B^{\mathbb{M}}$, $\mathbb{M} \in \{\mathbb{P}, \mathbb{Q}\}$ appearing in the characteristic functions of $Z^\ast(T)$. As we mentioned in Remark 2 to Propostion \ref{prop:characteristic_fct_log_Y}, the ODEs for $B^{\mathbb{M}}$ have time-dependent complex-valued coefficients and are of Riccati type. To compute the solutions to those equations, we use a Matlab function \texttt{ode45} that is based on an explicit Runge-Kutta method. We chose a time grid of $10001$ points, which corresponds to a time discretization step of $3 \cdot  10^{-4}$.  Second, we compute the LHS of \ref{eq:SNLE_rho_zero} using the Carr-Madan approach, see Appendix \ref{app:explicit_formulas} for explicit formulas.
Regarding dampening factors in this approach, we use $2$ for plain vanilla put options (the  $2$-nd and $3$-rd terms in the financial derivative $D$) and $0.5$ for a digital put option (the $4$-th term in $D$). Finally,  the solution of \ref{eq:SNLE_rho_zero} is computed by minimizing the sum of squared absolute errors, which is done with the help of a Matlab function \texttt{fmincon} with sequential quadratic programming as the underlying non-linear optimization algorithm.
}

\subsection{Numerical results}
{\color{black}
In this subsection, we first compute and interpret the optimal constrained investment strategy in the base case of $\varepsilon = 1\%$. Second, we conduct a sensitivity analysis of $\pic(0)$ and the optimal parameters of the synthetic derivative $D$ are with respect to $\varepsilon$. Third, we examine the impact of the RRA coefficient and the investment horizon on the optimal constrained investment strategy. 

In the base case, the optimal unconstrained investment strategy at time $t = 0$ is equal to $33.71\%$. The optimal constrained investment strategy at time $t = 0$ equals $31.72\%$.  The optimal terminal wealth in the constrained problem equals a financial derivative on the optimal unconstrained wealth with the following parameters: $y^\ast_0 = 99.5$, $k^\ast_{v,0} = 68.55$, $k^\ast_{\varepsilon,0} = 87.96$. It means that the optimal terminal wealth in the constrained optimization problem given the starting value $x_0 = 100$ is equal to a financial derivative consisting of:
\begin{enumerate}
	\item a long position in the optimal unconstrained wealth $Y^*(T)$ with $Y^*(0) = y^\ast_0  = 99.5$;
	\item a long position in one put option on the optimal unconstrained wealth $Y^*(T)$ and with strike $K = 100$;
	\item a short position in one put option on  $Y^*(T)$ and with strike $k^\ast_{v,0} = 68.55$;
	\item a short position in $K - k^\ast_{v,0} = 31.45$ digital put options on the optimal unconstrained wealth $Y^*(T)$ and with strike $k^\ast_{\varepsilon,0} = 87.96$.
\end{enumerate}

Next, we analyze the impact of $\varepsilon$.  Denote by $\varepsilon_u := \mathbb{P}\rBrackets{Y^{x, \piu}(T) < K} \approx 12\%$ the probability that the optimal terminal unconstrained portfolio value is below $K$. Consider Figure \ref{fig:impact_of_epsilon_VaR_eq}, which consists of two subfigures.
In Subfigure \ref{sfig:piStar_vs_eps_VaR_eq}, we see that for increasing $\varepsilon$ the optimal constrained investment strategy becomes closer to the unconstrained one. This is intuitive, since the closer $\varepsilon$ is to $\varepsilon_u$ , the more the optimal constrained investment strategy should resemble the optimal unconstrained one. As Subfigure \ref{sfig:DQ_Params_vs_eps_VaR_eq} indicates, the larger $\varepsilon < \varepsilon_u$, the larger the optimal initial capital of the underlying of the financial derivative $D$ (the optimal unconstrained portfolio) and the higher the thresholds $k_{\varepsilon}$ and $k_{v}$. This is consistent with our previous finding that $\varepsilon$ closer to $\varepsilon_u$ leads to the optimal constrained investment becoming the unconstrained one. The same holds for the optimal terminal wealth, since increasing $k_{\varepsilon}$ and $k_{v}$ mean that the optimal payof{}f of the derivative $D$ is closer to the identity function (cf. Remark to Theorem \ref{MainTheo}).}

\begin{figure}[!ht]
        \centering
        \begin{subfigure}{0.5\textwidth}
          \centering
          \includegraphics[width=\linewidth]{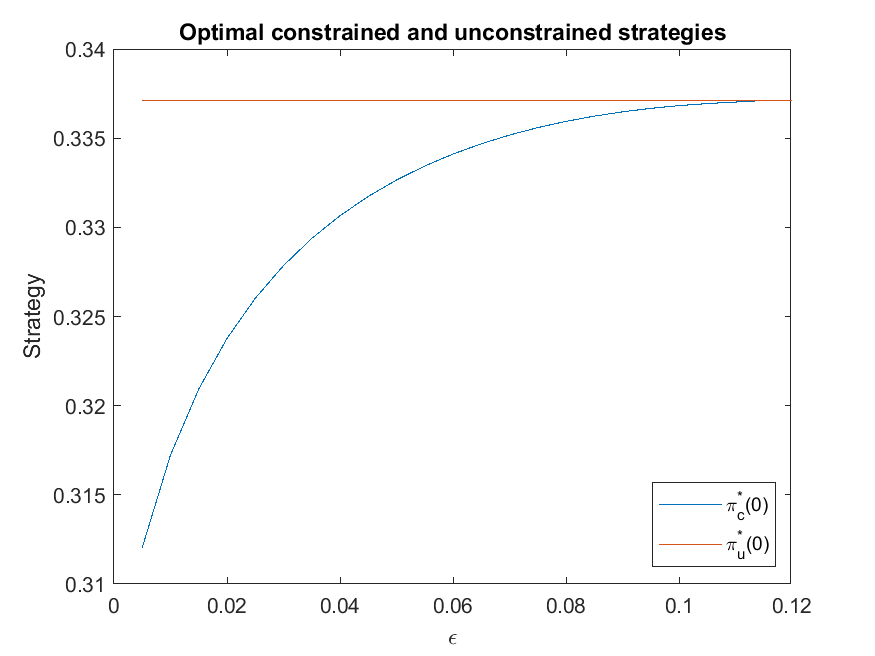}
          \caption{$\pic(0)$ vs $\varepsilon$}
          \label{sfig:piStar_vs_eps_VaR_eq}
        \end{subfigure}%
        \begin{subfigure}{0.5\textwidth}
          \centering
          \includegraphics[width=\linewidth]{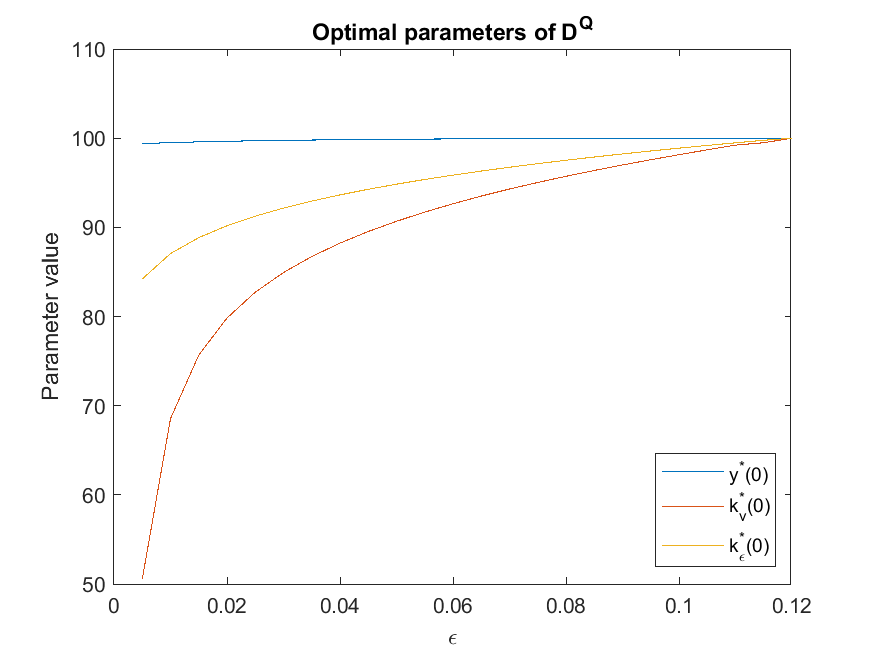}
          \caption{$\dq$ parameters vs $\varepsilon$}
          \label{sfig:DQ_Params_vs_eps_VaR_eq}
        \end{subfigure}
        \caption{The impact of VaR-probability threshold on the solution to Problem \eqref{MainControlProb}} 
        \label{fig:impact_of_epsilon_VaR_eq}
\end{figure}

{\color{black} Next, we investigate in Figure \ref{fig:impact_of_RRA_and_T} the influence of the investor's risk aversion and time horizon on the optimal investment strategies. In Subfigure \ref{sfig:piStar_vs_RRA}, we see that both constrained and unconstrained investment strategies are decreasing in the RRA coefficient $1 - \gamma$. The difference between these strategies shrinks as the investor becomes more risk averse, going from a relative difference of approximately $14\%$ ($\pic(0)=44\%$ and $\piu(0)=50\%$ for an RRA of $2$) to a relative difference of $0\%$ ($\pic(0)=\piu(0)=20\%$ for an RRA of $2$). 

In Subfigure  \ref{sfig:piStar_vs_T}, we see that the optimal constrained strategy is increasing in the investment horizon and is approaching the unconstrained one. A constrained decision maker with a $1$-year investment horizon will allocate approximately $28\%$ of his/her money to the risky asset, while an unconstrained investor would allocate almost $33.7\%$ to the risky asset, which means a relative difference of approximately $20\%$. However, over a longer period of time, e.g., $5$ years, the investor allocates more money to the risky asset while still ensuring the desired VaR constraint, i.e., he/she invests $33\%$ of the money in $S_1$. For $T = 10$, the probability that the optimal terminal unconstrained wealth is smaller than $K = 100$ is around $1\%$, which is why the optimal unconstrained strategy and the optimal constrained strategy for $\varepsilon = 1\%$ almost coincide.}

\begin{figure}[!ht]
        \centering
        \begin{subfigure}{0.5\textwidth}
          \centering
          \includegraphics[width=\linewidth]{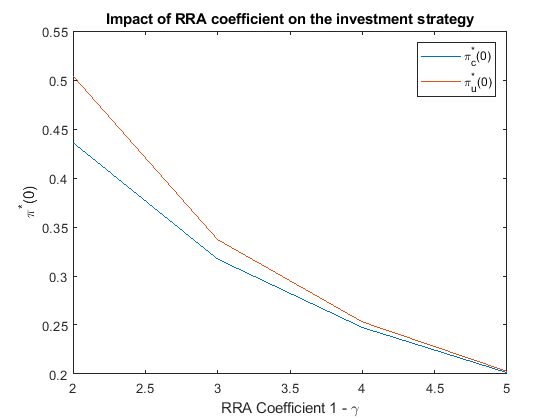}
          \caption{$\pic(0)$ and $\piu(0)$ vs $RRA = 1 - \gamma$}
          \label{sfig:piStar_vs_RRA}
        \end{subfigure}%
        \begin{subfigure}{0.5\textwidth}
          \centering
          \includegraphics[width=\linewidth]{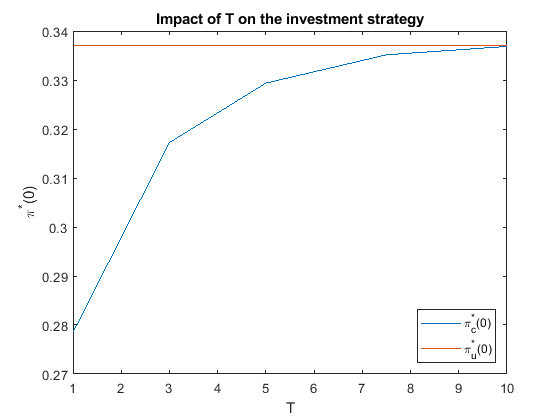}
          \caption{$\pic(0)$ and $\piu(0)$  vs $T$}
          \label{sfig:piStar_vs_T}
        \end{subfigure}
        \caption{The impact of risk aversion  and  time horizon on the optimal investment strategies} 
        \label{fig:impact_of_RRA_and_T}
\end{figure}

We also studied the impact of $\rho$, as well as the simultaneous decrease of the parameters $\sigma$ and $\kappa$ on the optimal constrained investment strategy. These last two parameters can be considered as a measure for the magnitude of the market incompleteness, since their joint decrease would eliminate the stochastic volatility. The impact of the parameters is very similar between constrained and unconstrained solutions; e.g., correlation decreases allocations almost with the same slope, while the joint decrease of $\kappa,\sigma$ causes similar decreasing effects in allocation (sightly more pronounced on constrained allocations); this is why we did not report the figures here. In Appendix \ref{app:numerical_studies_turbulent_markets} of supplementary materials, we provide such plots under the parameterization of the model that corresponds to a more turbulent market and a less risk-averse investor. In that case, the described sensitivity effects are more pronounced, and the differences between constrained and unconstrained strategies are larger.

\section{Conclusion}\label{sec:consclusion}
In this paper, we solve a VaR-constrained utility maximization problem under the Heston model via dynamic programming. Our methodology extends the methodology of \linebreak\cite{Kraft2013} from a complete market to an incomplete market; hence, it opens the door to studying other types of terminal wealth constraints, e.g., expected shortfall, or incomplete market setting, e.g. stochastic interest rates, stochastic price of risk.

The key idea is to link the solution of the constrained optimization problem with the solution to the unconstrained one via a synthetic derivative. For the VaR-constrained problem, this derivative is based on plain-vanilla put options and a digital put option, whose strikes must be determined numerically. In numerical studies, we find that, for investors with low risk aversion and short investment horizons, the relative difference between optimal constrained and unconstrained allocations could be substantial, e.g., $20\%$ for normal market parameters.
 
\bibliographystyle{apalike}
\bibliography{HestonVar_article_main}

\appendix

\section{Results on unconstrained problem}\label{app:unconstrained_problem}
\begin{proposition} \label{prop:unconstrained_problem_solution}
Assume that Condition \eqref{KraftCondition} is satisfied. Then the optimal investment strategy
for \eqref{PowerUnconstProb} is given by:
\begin{equation}
\piu(t)=-\frac{\overline{\lambda }\mathcal{V}_{y}^{u}+\sigma \rho \mathcal{V}_{yv}^{u}}{x\mathcal{V}_{yy}^{u}}=%
\frac{\overline{\lambda }}{1-\gamma }+\frac{\sigma \rho }{1-\gamma }b(t)
\label{eq:pi_star_unconstrained}
\end{equation}%
with $k_{0}=(\gamma \overline{\lambda }^{2})/(1-\gamma )$, $k_{1}=\kappa
-(\gamma \overline{\lambda }\sigma \rho )/(1-\gamma )$, $k_{2}=\sigma
^{2}+(\gamma \sigma ^{2}\rho ^{2})/(1-\gamma )$, $k_{3}=\sqrt{%
k_{1}^{2}-k_{0}k_{2}}$ and%
\begin{equation}
b(t)=k_{0}\frac{\exp\rBrackets{k_{3}(T-t)}-1}{\exp\rBrackets{k_{3}(T-t)}(k_{1}+k_{3})-k_{1}+k_{3}}.
\label{bsolution}
\end{equation}%
The value function is given by
\begin{equation}\label{eq:unconstrained_value_function}
\mathcal{V}^{u}(t,y,v) =\frac{y^{\gamma }}{\gamma }\exp (a(t)+b(t)v)
\end{equation}%
where $b(t)$ is defined by \eqref{bsolution} and
\begin{equation*}
    a(t) = \gamma r(T-t) + \frac{2\theta \kappa }{k_{2}} \ln \left( \frac{2k_{3}\exp\rBrackets{\frac{1%
}{2}(k_{1}+k_{3})(T-t)}}{2k_{3}+\left( k_{1}+k_{3}\right) (\exp\rBrackets{k_{3}(T-t)}-1)}%
\right).
\end{equation*}
The optimal wealth $Y^{\ast}$ has the following dynamics under $\mathbb{P}$:
\begin{equation}\label{eq:Y_Star_SDE_under_P}
\begin{aligned}
\!\mathrm{d}Y^{\ast}(t)=Y^{\ast}(t)\biggl[ &\biggl( r+\left( \frac{\overline{\lambda }}{%
1-\gamma }+\frac{\sigma \rho b(t)}{1-\gamma }\right) \overline{\lambda }%
v(t)\biggr) \!\mathrm{d}t\\
&+\left( \frac{\overline{\lambda }}{1-\gamma }+%
\frac{\sigma \rho b(t)}{1-\gamma }\right) \sqrt{v(t)}\,\mathrm{d}W^{\mathbb{P}}_1(t)\biggr],\,\,\,Y^{\ast}(0)=y>0.
\end{aligned}
\end{equation}
\end{proposition}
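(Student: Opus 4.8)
The plan is to use the dynamic programming (HJB) approach, exactly as in \cite{Kraft2005} and \cite{Kraft2013}. Since the controlled state in \eqref{PowerUnconstProb} is the pair $\bigl(Y^{y,\pi},v\bigr)$ with the dynamics under $\mathbb{P}$ recalled above, the HJB equation for $\mathcal{V}^u$ reads
\[
0=\sup_{\pi}\Bigl\{\mathcal{V}^{u}_{t}+(r+\pi\overline{\lambda}v)y\,\mathcal{V}^{u}_{y}+\kappa(\theta-v)\mathcal{V}^{u}_{v}+\tfrac{1}{2}v\bigl[y^{2}\pi^{2}\mathcal{V}^{u}_{yy}+2\sigma\rho y\pi\,\mathcal{V}^{u}_{yv}+\sigma^{2}\mathcal{V}^{u}_{vv}\bigr]\Bigr\},
\]
with terminal condition $\mathcal{V}^{u}(T,y,v)=y^{\gamma}/\gamma$. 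The bracketed expression is quadratic in $\pi$; assuming strict concavity ($\mathcal{V}^{u}_{yy}<0$), the first-order condition gives the candidate feedback control $\pi=-(\overline{\lambda}\mathcal{V}^{u}_{y}+\sigma\rho\mathcal{V}^{u}_{yv})/(y\mathcal{V}^{u}_{yy})$, which is the first equality in \eqref{eq:pi_star_unconstrained}. Substituting it back yields the nonlinear PDE
\[
0=\mathcal{V}^{u}_{t}+ry\,\mathcal{V}^{u}_{y}+\kappa(\theta-v)\mathcal{V}^{u}_{v}+\tfrac{1}{2}v\sigma^{2}\mathcal{V}^{u}_{vv}-\frac{v\bigl(\overline{\lambda}\mathcal{V}^{u}_{y}+\sigma\rho\mathcal{V}^{u}_{yv}\bigr)^{2}}{2\,\mathcal{V}^{u}_{yy}}.
\]

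Next I would exploit the homotheticity of power utility and posit the separable ansatz $\mathcal{V}^{u}(t,y,v)=\tfrac{y^{\gamma}}{\gamma}\exp(a(t)+b(t)v)$ with $a(T)=b(T)=0$. Computing $\mathcal{V}^{u}_{y}=y^{\gamma-1}e^{a+bv}$, $\mathcal{V}^{u}_{yy}=(\gamma-1)y^{\gamma-2}e^{a+bv}$, $\mathcal{V}^{u}_{yv}=b\,y^{\gamma-1}e^{a+bv}$, and the $v$- and $t$-derivatives, substituting into the PDE, and dividing through by $\tfrac{y^{\gamma}}{\gamma}e^{a+bv}$ reduces it to an identity that is affine in $v$. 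Matching the $v^{0}$- and $v^{1}$-coefficients decouples it into
\[
a'(t)=-\gamma r-\kappa\theta\,b(t),\qquad b'(t)=-\tfrac{1}{2}k_{2}\,b(t)^{2}+k_{1}\,b(t)-\tfrac{1}{2}k_{0},
\]
with $k_{0},k_{1},k_{2}$ as in the statement. The equation for $b$ is a scalar Riccati ODE with constant coefficients (note $k_2>0$ for all $\gamma<1$), whose discriminant is $k_{1}^{2}-k_{0}k_{2}$; a short computation shows $k_{1}^{2}-k_{0}k_{2}=2\sigma^{2}\bigl(\tfrac{\kappa^{2}}{2\sigma^{2}}-\tfrac{\gamma}{1-\gamma}\overline{\lambda}(\tfrac{\kappa\rho}{\sigma}+\tfrac{\overline{\lambda}}{2})\bigr)$, so Condition \eqref{KraftCondition} is precisely what makes $k_{3}=\sqrt{k_{1}^{2}-k_{0}k_{2}}$ real and strictly positive and guarantees that the backward solution exists on all of $[0,T]$ (indeed globally) and the candidate value function is finite. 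Solving the Riccati equation by the standard linearization $b=-\tfrac{2}{k_{2}}u'/u$ (equivalently, partial fractions through the two stationary points $(k_{1}\pm k_{3})/k_{2}$) and imposing $b(T)=0$ produces \eqref{bsolution}; then $a(t)=\gamma r(T-t)+\kappa\theta\int_{t}^{T}b(s)\,ds$, and since $\int b\,ds=-\tfrac{2}{k_{2}}\ln u$ up to a constant, this integrates to the displayed logarithmic expression, whose argument equals $1$ at $t=T$, so $a(T)=0$ as required.

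With the explicit $\mathcal{V}^{u}$ in hand, plugging the above derivatives into the first-order condition gives $\piu(t)=\tfrac{\overline{\lambda}}{1-\gamma}+\tfrac{\sigma\rho}{1-\gamma}b(t)$, i.e.\ the second equality in \eqref{eq:pi_star_unconstrained}, and inserting this feedback strategy into the wealth SDE $dX^{x_0,\pi}=X^{x_0,\pi}[(r+\pi\overline{\lambda}v)dt+\pi\sqrt{v}\,dW^{\mathbb{P}}_{1}]$ yields \eqref{eq:Y_Star_SDE_under_P} immediately. What remains is the verification step: one checks that the candidate $\mathcal{V}^{u}$ is $C^{1,2}$ and strictly concave in $y$, that $\piu$ is admissible in $\mathcal{U}(t,y,v)$ (using Feller's condition $\kappa\theta>\sigma^{2}/2$ to keep $v$ strictly positive and to obtain $\int_{0}^{T}(\piu Y^{\ast})^{2}\,dt<\infty$), and that the discounted process $s\mapsto\mathcal{V}^{u}(s,Y^{y,\pi}(s),v(s))$ is a $\mathbb{P}$-supermartingale for every admissible $\pi$ and a true martingale for $\piu$; this identifies $\mathcal{V}^{u}$ with the value function and $\piu$ as optimal. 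I would cite \cite{Kraft2005} (and \cite{Kraft2013}) for these admissibility and martingale estimates, since they are technical but standard. The main obstacle is the closed-form solution of the Riccati ODE together with pinning down the precise role of Condition \eqref{KraftCondition} as the non-explosion/finiteness criterion; the verification theorem, though laborious, is essentially quotable from the cited references.
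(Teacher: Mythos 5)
Your proposal is correct and follows essentially the same route as the paper's own proof: the HJB equation with first-order condition, the separable exponential-affine ansatz $\mathcal{V}^{u}(t,y,v)=\frac{y^{\gamma}}{\gamma}\exp(a(t)+b(t)v)$, the resulting Riccati system for $a$ and $b$, and the identification of Condition \eqref{KraftCondition} with $k_{1}^{2}-k_{0}k_{2}>0$. The only difference is that you sketch the verification (supermartingale/martingale) step explicitly, whereas the paper defers it entirely to \cite{Kraft2005} and \cite{Kallsen2010}; this is a harmless elaboration, not a divergence.
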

\label{app:unconstrained_OP_solution}

\begin{proof}[Proof of Proposition \ref{prop:unconstrained_problem_solution}]$\,$\\
For the readability of this proof, we denote  $\mathcal{V}:=\vfu$. We face a two-dimensional control problem \eqref{PowerUnconstProb} with state process $(Y,v)$ and consider the HJB equation:%
\begin{equation*}
0=\mathcal{V}_{t}+\frac{1}{2}\sigma ^{2}v\mathcal{V}_{vv}+\kappa (\theta -v)\mathcal{V}_{v}+\underset{\pi }%
{\max }\left\{ \underbrace{y(r+\pi \overline{\lambda }v)\mathcal{V}_{y}+\frac{1}{2}\pi
^{2}y^{2}v\mathcal{V}_{yy}+\pi y\sigma v\rho \mathcal{V}_{yv}}_{g(\pi )}\right\}
\end{equation*}%
and boundary condition $\mathcal{V}(T,y,v)=\frac{y^{\gamma }}{\gamma }$. Eliminating $\max$ results in a first-order condition for $\pi $:
\begin{equation}
\piu=-\frac{y\overline{\lambda }v\mathcal{V}_{y}+y\sigma v\rho \mathcal{V}_{yv}}{%
y^{2}v\mathcal{V}_{yy}}=-\frac{\overline{\lambda }\mathcal{V}_{y}+\sigma \rho \mathcal{V}_{yv}}{y\mathcal{V}_{yy}}=-%
\overline{\lambda }\frac{\mathcal{V}_{y}}{y\mathcal{V}_{yy}}-\sigma \rho \frac{\mathcal{V}_{yv}}{y\mathcal{V}_{yy}}
\end{equation}%
under the assumption that $\mathcal{V}_{yy} < 0$. Substituting the expression for $\piu$ back into the HJB equation leads to the following non-linear PDE for the value function.%
\begin{eqnarray}
0 &=&\mathcal{V}_{t}+\frac{1}{2}\sigma ^{2}v\mathcal{V}_{vv}+\kappa (\theta -v)\mathcal{V}_{v}+yr\mathcal{V}_{y}-y%
\frac{\overline{\lambda }\mathcal{V}_{y}+\sigma \rho \mathcal{V}_{yv}}{y\mathcal{V}_{yy}}\overline{\lambda
}v\mathcal{V}_{y} \notag \\
&&+\frac{1}{2}\frac{(\overline{\lambda }\mathcal{V}_{y}+\sigma \rho \mathcal{V}_{yv})^{2}}{%
y^{2}\mathcal{V}_{yy}^{2}}y^{2}v\mathcal{V}_{yy}-\frac{\overline{\lambda }\mathcal{V}_{y}+\sigma \rho
\mathcal{V}_{yv}}{y\mathcal{V}_{yy}}y\sigma v\rho \mathcal{V}_{yv} \notag \\
 &=& \mathcal{V}_{t} + \kappa \theta \mathcal{V}_{v} + y r \mathcal{V}_{y} + v \rBrackets{\frac{1}{2}\sigma^{2} \mathcal{V}_{vv} - \kappa \mathcal{V}_{v} - \frac{1}{2}\frac{\rBrackets{\lambar \mathcal{V}_{y} + \sigma \rho \mathcal{V}_{yv}}^{2}}{\mathcal{V}_{yy}}}. \label{eq:vu_pde}
\end{eqnarray}%
To find the solution, we use the separation ansatz
\begin{equation*}
\mathcal{V}(t,y,v)=\frac{y^{\gamma }}{\gamma }h(t,v),\,\,h(T,v)=1.
\end{equation*}%
In this case, $\piu(t) = \frac{\lambar}{1 - \gamma} + \frac{\sigma \rho }{1 - \gamma}\frac{h_v}{h} $. We substitute the ansatz into the HJB equation and conclude that:
\begin{equation}\label{eq:PDE_h}
0=h_{t}+\kappa \theta h_{v}+\gamma rh+v\left( \frac{1}{2}\sigma
^{2}h_{vv}-\kappa h_{v}+\frac{1}{2}\frac{\gamma (\overline{\lambda }h+\sigma
\rho h_{v})^{2}}{(1-\gamma )h}\right).
\end{equation}%
The structure implies that $h(t,v)$ is exponentially affine:
\begin{equation*}
h(t,v)=\exp (a(\tau (t))+b(\tau (t))v) =: h,
\end{equation*}%
with time horizon $\tau (t)=T-t$ and, therefore, using boundary condition $h(T,z)=1\quad \forall z$, we get:
\begin{equation*}
    a(0)=a(\tau (T))=0,b(0)=b(\tau (T))=0.
\end{equation*}
Using this structure of $h(t,v)$ and rearranging to emphasize the linearity in $v$, we obtain the following:
\begin{align*}
0=&-a^{\prime }(\tau)h+b(\tau)\kappa \theta h+\gamma rh+v\Bigr[ -b^{\prime }(\tau)h+b^{2}(\tau)\left(
\frac{1}{2}\sigma ^{2}h+\frac{\gamma \sigma ^{2}\rho ^{2}h}{2(1-\gamma )}%
\right) \\
&+b(\tau)\left( -\kappa h+\frac{\gamma \overline{\lambda }\sigma \rho h}{%
1-\gamma }\right) +\frac{\gamma \overline{\lambda }^{2}h}{2(1-\gamma )}%
\biggr]  
\end{align*}%
Cancelling $h$ out leads to Riccati equations for $a$ and $b$:
\begin{align}
a^{\prime }(\tau )& =\kappa \theta b(\tau )+\gamma r \label{eq:PDE_a_tau}; \\
b^{\prime }(\tau )& =\frac{1}{2}\underbrace{\left( \sigma ^{2}+\frac{\gamma \sigma
^{2}\rho ^{2}}{1-\gamma }\right) }_{k_{2}}b^{2}(\tau )-\underbrace{\left( \kappa -%
\frac{\gamma \overline{\lambda }\sigma \rho }{1-\gamma }\right) }_{k_{1}}b(\tau)+%
\frac{1}{2}\underbrace{\frac{\gamma \overline{\lambda }^{2}}{1-\gamma }}%
_{k_{0}}  =\frac{1}{2}k_{2}b(\tau )^{2}-k_{1}b(\tau )+\frac{1}{2}%
k_{0}  \label{eq:PDE_B(t)au};
\end{align}%
and boundary conditions $a(0)=0,b(0)=0$ with constants $k_{0},k_{1},k_{2}$
that have to satisfy $k_{1}^{2}-k_{0}k_{2}>0$. Then according to \cite{Kraft2005,  Kallsen2010} the solution is given by:
\begin{align}
a(\tau )=& \gamma r\tau +\frac{2\theta \kappa }{k_{2}}\ln \left( \frac{%
2k_{3}\exp\rBrackets{\frac{1}{2}(k_{1}+k_{3})\tau }}{2k_{3}+\left( k_{1}+k_{3}\right)
(\exp\rBrackets{k_{3}\tau }-1)}\right) \label{eq:a_tau_explicit}\\
b(\tau )& =k_{0}\frac{\exp\rBrackets{k_{3}\tau }-1}{\exp\rBrackets{k_{3}\tau
}(k_{1}+k_{3})-k_{1}+k_{3}}~.\label{eq:B(t)au_explicit}
\end{align}%
with $k_{3}=\sqrt{k_{1}^{2}-k_{0}k_{2}}$~. For the system to be
well-defined, we have to check whether our constants fulfill $%
k_{1}^{2}-k_{0}k_{2}>0$. Therefore, we formulate the following requirement
on the parameters:
\begin{align*}
k_{1}^{2}-k_{0}k_{2}&=\kappa ^{2}-2\kappa \frac{\gamma }{1-\gamma }\overline{%
\lambda }\sigma \rho +\frac{\gamma ^{2}}{(1-\gamma )^{2}}\overline{\lambda }%
^{2}\sigma ^{2}\rho ^{2}-\frac{\gamma }{1-\gamma }\overline{\lambda }%
^{2}\sigma ^{2}-\frac{\gamma ^{2}}{(1-\gamma )^{2}}\overline{\lambda }%
^{2}\sigma ^{2}\rho ^{2}\\
& =\kappa ^{2}-\frac{\gamma }{1-\gamma }\overline{%
\lambda }\sigma (2\kappa \rho +\overline{\lambda }\sigma )>0 \Leftrightarrow \frac{\gamma }{1-\gamma }\overline{%
\lambda } \left(\frac{\kappa \rho}{\sigma} +\frac{\overline{\lambda }}{2} \right) <\frac{\kappa ^{2}}{2 \sigma^2 }
\end{align*}%
which is exactly what \cite{Kraft2005} requires in their Equation (26). Note that the ansatz satisfies the assumption $\mathcal{V}_{yy} < 0$ as for $\gamma < 1$ we have $\underbrace{(\gamma - 1)}_{<0} \underbrace{y^{\gamma - 2} h(t, v)}_{>0} < 0.$
\end{proof}

\bigskip

\begin{proposition}\label{prop:characteristic_fct_log_Y}
The logarithm of the unconstrained optimal wealth has characteristic functions of the form:%
\begin{align*}
\phi^{Z^\ast(T),\mathbb{M}} (u;t,z,v)&=\mathbb{E}_{t,z,v}^{\mathbb{M}}\left[ \exp (iuZ^\ast(T))\right] \\
&=\exp \left(A^{\mathbb{M}}(T-t,u)+B^{\mathbb{M}}(T-t,u)v+iuz\right),
\end{align*}

where $\mathbb{M} \in \{\mathbb{P}, \mathbb{Q} \}$ and $A^{\mathbb{M}}$ and $B^{\mathbb{M}}$ satisfy ordinary differential equations (ODEs):
\begin{equation}\label{eq:AP_BP_pde}
    \begin{aligned}
        0 &=-B_{\tau}^{\mathbb{P}}(\tau, u) + \left( \pi^\ast(\tau) \sigma \rho iu-\kappa \right) B^{\mathbb{P}}(\tau, u) +\frac{1}{2}\sigma^{2}\rBrackets{B^{\mathbb{P}}(\tau, u)}^{2}  -\frac{1}{2}\rBrackets{\pi^\ast(\tau)}^{2}\left( u^{2}+iu\right) + \pi^\ast(\tau) \overline{\lambda } iu;  \\
        0 &=-A_{\tau}^{\mathbb{P}}(\tau, u) + riu+\kappa \theta B^{\mathbb{P}}(\tau, u).
    \end{aligned}
\end{equation}
and
\begin{equation}\label{eq:AQ_BQ_pde}
    \begin{aligned}
        0 &=-B_{\tau}^\mathbb{Q}(\tau, u) + \left( \pi^\ast(\tau) \sigma \rho iu-\tilde{\kappa} \right) B^\mathbb{Q}(\tau, u) + \frac{1}{2}\sigma^{2} \rBrackets{B^\mathbb{Q}(\tau, u)}^{2} -\frac{1}{2}\rBrackets{\pi^\ast(\tau)}^{2}\left( u^{2}+iu\right);
        \\
        0 &=-A_{\tau}^\mathbb{Q}(\tau, u) + riu+ \tilde{\kappa} \tilde{\theta} B^\mathbb{Q}(\tau, u).
    \end{aligned}
\end{equation}%

respectively, where $\tau := T - t$.
\end{proposition}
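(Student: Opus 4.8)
The plan is to pass from $Y^\ast$ to its logarithm $Z^\ast = \ln Y^\ast$, observe that the pair $(Z^\ast, v)$ is an affine diffusion under each measure, and apply the Feynman--Kac theorem together with an exponential-affine ansatz. First I would apply It\^o's formula to $Z^\ast(t) = \ln Y^\ast(t)$ using the $\mathbb{P}$-dynamics \eqref{eq:Y_Star_SDE_under_P}. Writing $\pi^\ast(t)$ for the unconstrained optimal proportion from Proposition \ref{prop:unconstrained_problem_solution}, this gives
\[
dZ^\ast(t) = \Bigl(r + \pi^\ast(t)\overline{\lambda} v(t) - \tfrac12 (\pi^\ast(t))^2 v(t)\Bigr)dt + \pi^\ast(t)\sqrt{v(t)}\,dW_1^{\mathbb{P}}(t),
\]
while $v$ retains its CIR dynamics. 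Hence $(Z^\ast, v)$ has drift and quadratic (co)variation coefficients that are affine in $v$: $d\langle Z^\ast\rangle_t = (\pi^\ast(t))^2 v(t)\,dt$, $d\langle Z^\ast, v\rangle_t = \pi^\ast(t)\sigma\rho v(t)\,dt$, $d\langle v\rangle_t = \sigma^2 v(t)\,dt$.

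Next I would write the Kolmogorov backward PDE for $\phi(u;t,z,v) := \mathbb{E}^{\mathbb{P}}_{t,z,v}[\exp(iuZ^\ast(T))]$ via Feynman--Kac, namely
\[
0 = \phi_t + \Bigl(r + \pi^\ast\overline{\lambda} v - \tfrac12(\pi^\ast)^2 v\Bigr)\phi_z + \kappa(\theta - v)\phi_v + \tfrac12(\pi^\ast)^2 v\,\phi_{zz} + \pi^\ast\sigma\rho v\,\phi_{zv} + \tfrac12\sigma^2 v\,\phi_{vv},
\]
with terminal condition $\phi(u;T,z,v) = e^{iuz}$. I would insert the ansatz $\phi = \exp(A^{\mathbb{P}}(\tau,u) + B^{\mathbb{P}}(\tau,u)v + iuz)$, $\tau = T-t$, compute the derivatives ($\phi_t = (-A^{\mathbb{P}}_\tau - B^{\mathbb{P}}_\tau v)\phi$, $\phi_z = iu\phi$, $\phi_{zz} = -u^2\phi$, $\phi_v = B^{\mathbb{P}}\phi$, $\phi_{vv} = (B^{\mathbb{P}})^2\phi$, $\phi_{zv} = iuB^{\mathbb{P}}\phi$), divide by $\phi$, and split the resulting identity into terms independent of $v$ and terms linear in $v$. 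The $v$-free part yields the affine ODE for $A^{\mathbb{P}}$ and the coefficient of $v$ yields the Riccati ODE for $B^{\mathbb{P}}$ in \eqref{eq:AP_BP_pde}; the terminal condition forces $A^{\mathbb{P}}(0,u)=B^{\mathbb{P}}(0,u)=0$. Repeating the computation verbatim under $\mathbb{Q}$ using \eqref{eq:Heston_v_under_Q} and \eqref{eq:Y_Star_SDE_under_Q} — which just deletes the $\pi^\ast\overline{\lambda}v$ drift term of $Z^\ast$ and replaces $(\kappa,\kappa\theta)$ by $(\tilde\kappa,\tilde\kappa\tilde\theta)$ — produces \eqref{eq:AQ_BQ_pde}.

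The main obstacle is the rigorous justification rather than the formal matching of coefficients: one must show $\phi$ is well defined and smooth enough for Feynman--Kac to apply to this CIR-driven system, which is degenerate at $v=0$, and that the exponential-affine candidate actually coincides with the conditional expectation. I would handle this in the standard way for affine processes: localize, verify that the local martingale obtained from $d\bigl(\exp(A^{\mathbb{M}}+B^{\mathbb{M}}v+iuZ^\ast)\bigr)$ is a true martingale — using the affine structure, Feller's condition $\kappa\theta>\sigma^2/2$, boundedness of $\tau\mapsto\pi^\ast(\tau)$ on $[0,T]$, and control of $\Re B^{\mathbb{M}}$ — and then invoke an affine-transform verification theorem (Duffie, Pan and Singleton; Filipovi\'c). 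A secondary point is that, since $\pi^\ast(\tau)$ is time dependent, the Riccati ODE for $B^{\mathbb{M}}$ has non-constant complex coefficients and no elementary closed form, so I would only establish existence of a solution on $[0,T]$ via local existence plus an a priori bound ruling out blow-up, deferring the numerical solution of these ODEs to Section \ref{sec:numerical_studies}.
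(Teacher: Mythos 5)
Your proposal matches the paper's proof essentially step for step: Itô's lemma applied to $\ln Y^\ast$ under each measure, the Feynman--Kac PDE for the characteristic function, the exponential-affine ansatz in $\tau = T-t$, and separation into the $v$-free and $v$-linear parts to obtain the ODEs for $A^{\mathbb{M}}$ and $B^{\mathbb{M}}$. The additional verification issues you raise (true-martingale property, non-blow-up of the time-inhomogeneous Riccati equation) go beyond what the paper addresses, which stops at the formal derivation and defers the ODEs to numerics.
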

\begin{proof}[Proof of Proposition \ref{prop:characteristic_fct_log_Y}]$\,$\\

Applying It\^{o}'s lemma to the wealth process $Y^*$ and the logarithmic function, we obtain the dynamics of $Z^*$ under the measure $\mathbb{P}$:
\begin{align*}
\mathrm{d}Z^\ast(t)& =\left( r+\left( \piu(t) \overline{\lambda }-\frac{1}{2}\rBrackets{\piu
(t)}^{2}\right) v(t)\right) \,\mathrm{d}t+\piu(t) \sqrt{v(t)}\,\mathrm{d}W^{\mathbb{P}}_{1}(t); \\
\mathrm{d}v(t)& =\kappa \left( \theta -v(t)\right) \!\mathrm{d}t+\sigma \rho
\sqrt{v(t)}\,\mathrm{d}W^{\mathbb{P}}_{1}(t)\,+\sigma \rho  \sqrt{v(t)}\sqrt{1-\rho
^{2}}\,\mathrm{d}W^{\mathbb{P}}_{2}(t).
\end{align*}

To make the notation within the proof concise, we write $\pi^\ast(t)$ for $\piu(t)$.
According to the Feynman-Kac theorem, the characteristic function satisfies the following relations under $\mathbb{P}$:%
\begin{align*}
\phi^{Z^\ast(T),\mathbb{P}} \left( u;t,z,v\right) & =\mathbb{E}^{\mathbb{P}}_{t,z,v}\left[ \exp (iuZ^\ast(T))\right];\\
0 &=\phi^{Z^\ast(T),\mathbb{P}}_{t}+(r+\left( \pi^\ast(t) \overline{\lambda }-\frac{1}{2}(\pi^\ast(t))^{2}\right)
v)\phi^{Z^\ast(T),\mathbb{P}} _{z}+\kappa (\theta -v)\phi^{Z^\ast(T),\mathbb{P}} _{v}\\
& \quad +\frac{1}{2} (\pi^\ast(t))^{2}v\phi^{Z^\ast(T),\mathbb{P}} _{zz}+\pi^\ast(t)
v\sigma \rho \phi^{Z^\ast(T),\mathbb{P}} _{zv}+\frac{1}{2}\sigma ^{2}v\phi^{Z^\ast(T),\mathbb{P}} _{vv}; \\
\phi^{Z^\ast(T),\mathbb{P}} (u;T,z,v) &= \exp \left( iuz\right).
\end{align*}%
Using the ansatz for the characteristic function:
\begin{equation*}
\phi ^{Z^\ast(T)\mathbb{P}}(u;t,z,v)=\exp \left(
A^{\mathbb{P}}(T-t,u)+B^{\mathbb{P}}(T-t,u)v+iuz\right),
\end{equation*}
changing the variable $\tau = T -t$, substituting and grouping under $\mathbb{P}$,we receive%
\begin{eqnarray*}
0&=&-A_{\tau}^{\mathbb{P}}(\tau, u) - B_{\tau}^{\mathbb{P}}(\tau, u)v + \rBrackets{r+\left( \pi^\ast(\tau) \overline{\lambda }-\frac{1}{2}\rBrackets{\pi^\ast(\tau)}^{2}\right) v}iu+\kappa (\theta -v)B^{\mathbb{P}}(\tau, u) \\
& &-\frac{1}{2}\rBrackets{\pi^\ast(\tau)}^{2} v u^{2}  + \pi^\ast(\tau) v\sigma \rho iu B^{\mathbb{P}}(\tau, u)+\frac{1}{2}\sigma ^{2}v \rBrackets{B^{\mathbb{P}}(\tau, u)}^{2}
\end{eqnarray*}%
and, thus,%
\begin{equation*}
    \begin{aligned}
        0 &=-B_{\tau}^{\mathbb{P}}(\tau, u) + \left( \pi^\ast(\tau) \sigma \rho iu-\kappa \right) B^{\mathbb{P}}(\tau, u) +\frac{1}{2}\sigma^{2}\rBrackets{B^{\mathbb{P}}(\tau, u)}^{2}-\frac{1}{2}\rBrackets{\pi^\ast(\tau)}^{2}\left( u^{2}+iu\right) +\pi^\ast(\tau) \overline{\lambda } iu;\\
        0 &=-A_{\tau}^{\mathbb{P}}(\tau, u) + riu+\kappa \theta B^{\mathbb{P}}(\tau, u).
    \end{aligned}
\end{equation*}%

Analogously, we obtain  the dynamics of $Z^*$ under the measure $\mathbb{Q}(\lambda^{v})$, writing $\mathbb{Q}$ for short:%
\begin{align*}
\mathrm{d}Z^\ast(t)& =\left( r-\frac{1}{2}\rBrackets{\piu(t)}^{2} v(t)\right) \mathrm{d}t+\piu(t)
\sqrt{v(t)}\, \mathrm{d}W^{\mathbb{Q}}_{1}(t); \\
\,\mathrm{d}v(t)& =\tilde{\kappa}\left( \tilde{\theta}-v(t)\right) \,%
\mathrm{d}t + \sigma \rho \sqrt{v(t)}\,\mathrm{d}W^{\mathbb{Q}}_{1}(t)\, +\sigma
\sqrt{v(t)}\sqrt{1-\rho ^{2}}\,\mathrm{d}W^{\mathbb{Q}}_{2}(t),
\end{align*}
with $\tilde{\kappa}=\kappa +\sigma \overline{\lambda }\rho +\sigma \lambda
^{v}\sqrt{1-\rho ^{2}}$ and $\tilde{\theta}=\kappa \theta /\tilde{\kappa}$.These parameters may be time-dependent due to $\lambda^{v}$.

Again using Feynman-Kac theorem and the ansatz
$$\phi^{Z^\ast(T),\mathbb{Q}}(u;t,z,v)=\exp \left(A^\mathbb{Q}(T-t,u)+B^\mathbb{Q}(T-t,u)v+iuz\right), $$
we obtain
\begin{eqnarray*}
0&=&-A_{\tau}^\mathbb{Q}(\tau, u)-B_{\tau}^\mathbb{Q}(\tau, u)v+\left( r-\frac{1}{2}\rBrackets{\pi^\ast(\tau)}^{2}v\right) iu + \tilde{\kappa}(\tilde{\theta} - v)B^\mathbb{Q}(\tau, u) \\
& &  - \frac{1}{2}\rBrackets{\pi^\ast(\tau)} ^{2}vu^{2} + \pi^\ast(\tau) v\sigma \rho iu B^\mathbb{Q} (\tau, u) +\frac{1}{2} \sigma ^{2}v \rBrackets{B^\mathbb{Q} (\tau, u)}^{2}.
\end{eqnarray*}%
Hence:
\begin{align*}
        0 &=-B_{\tau}^\mathbb{Q}(\tau, u) + \left( \pi^\ast(\tau) \sigma \rho iu-\tilde{\kappa} \right) B^\mathbb{Q}(\tau, u) + \frac{1}{2}\sigma^{2} \rBrackets{B^\mathbb{Q}(\tau, u)}^{2} -\frac{1}{2}\rBrackets{\pi^\ast(\tau)}^{2}\left( u^{2}+iu\right);
        \\
        0 &=-A_{\tau}^\mathbb{Q}(\tau, u) + riu+ \tilde{\kappa} \tilde{\theta} B^\mathbb{Q}(\tau, u).
\end{align*}%
\end{proof}

\textbf{Remark to Proposition \ref{prop:characteristic_fct_log_Y}.} The characteristic functions of $ \ln \rBrackets{Y^\ast(t)}$ have the same structural form as the characteristic functions of $\ln \rBrackets{S(t)}$. The latter function is known in closed form. The ODEs for $B^{\mathbb{P}}$ and $B^{\mathbb{Q}}$ are of Riccati type, as in the case of the characteristic functions of $\ln \rBrackets{S^\ast(T)}$. However, here, the coefficients of the Riccati ODEs for $B^{\mathbb{P}}$ and $B^{\mathbb{Q}}$ are time-dependent. Therefore, we solve them numerically. The analytical derivation of the solutions to these ODEs is beyond the scope of this paper.

\section{Proofs of main results}\label{app:constrained_OP_solution_short}

\begin{proof}[Proof of Theorem \ref{MainTheo}]

Our proof is based on the fact that two functions are equal if they satisfy the same PDEs with the same terminal conditions. In the following, we:
\begin{enumerate}
    \item use the dynamic programming approach to derive the HJB PDE of $\vfc(t,x,c)$, simplify it under the assumption that $\vfc_{xx}(t,x,v) < 0$ and get the optimal investment strategy $\pic$ in terms of the (to be found) function $\vfc(t,x,v)$;
    \item consider the PDE of $\udp(t, y, v)$ obtained via the Feynman-Kac (FK) theorem  and change of variables from $(t,y,v)$ to $(t,x,v)$ via $x = \dqla(t, y, v)$, i.e., $\vcansatz(t, x, v) := \udp(t, \rBrackets{\dqla}^{-1}(t, x, v), v)$ is our ansatz for the value function in the constrained optimization problem;
    \item simplify the PDE from Step 2 using the assumption \eqref{cond:D_v} that $\dq_v(t,y,v) = 0$ and using the PDE of $\dq(t,y,v)$ obtained via the FK theorem
    \item show that the resulting PDE in Step 3 coincides with the PDE of $\vfc(t,x,c)$:
    \begin{enumerate}
        \item for case $\rho = 0$ if Condition \eqref{cond:U_D_yy_y} holds;
        \item for case $\rho \neq 0$ if both Conditions \eqref{cond:U_D_yy_y}, \eqref{cond:U_D_yv_y} hold;
    \end{enumerate}
    \item show that the terminal conditions in the PDEs from Step 1 and Step 4 coincide and that $\vcansatz_{xx}(t,x,v) < 0$, which implies that $\vcansatz(t,x,v)$ solves the HJB PDE of $\vfc(t,x,c)$ and enables the calculation of $\pic$ from Step 1.
\end{enumerate}

To make the derivations in this theorem more readable, we omit the arguments of the functions $\mathcal{V}^{c}(t, x, v)$, \linebreak $\dqla(t, y, v)$, $\udp(t, y, v)$. We also omit the parameter $\lambda^{v}$ of the EMM $\mathbb{Q}(\lambda^{v})$.

\textbf{Step 1. HJB PDE of $\vfc$.} Similarly to the unconstrained Problem \eqref{PowerUnconstProb}, we face a two-dimensional control problem with state process $(X,v)$ and consider the HJB PDE:%
\begin{equation}\label{eq:HJB_PDE_vfc}
0=\mathcal{V}^{c}_{t}+\frac{1}{2}\sigma ^{2}v\mathcal{V}^{c}_{vv}+\kappa (\theta -v)\mathcal{V}^{c}_{v}+\underset{\pi }%
{\max }\left\{ x(r+\pi \overline{\lambda }v)\mathcal{V}^{c}_{x}+\frac{1}{2}\pi
^{2}x^{2}v\mathcal{V}^{c}_{xx}+\pi x \sigma v\rho \mathcal{V}^{c}_{xv}\right\}
\end{equation}%
and the boundary condition $\mathcal{V}^{c}(T,x,v)=\overline{U}(x)$. Eliminating $\max $ results in a first-order condition for $\pi $:
\begin{equation}\label{eq:pi_star_control}
\pic =-\frac{x\overline{\lambda }v\mathcal{V}^{c}_{x}+x\sigma v\rho \mathcal{V}^{c}_{xv}}{%
x^{2}v\mathcal{V}^{c}_{xx}}=-\frac{\overline{\lambda }\mathcal{V}^{c}_{x}+\sigma \rho \mathcal{V}^{c}_{xv}}{x\mathcal{V}^{c}_{xx}}
\end{equation}%
under the assumption that $\mathcal{V}^{c}_{xx} < 0$. Analogously to \eqref{eq:vu_pde}, we substitute the expression for $\pic$ back into the HJB PDE \eqref{eq:HJB_PDE_vfc} and get the following PDE for the value function $\mathcal{V}^{c}$:%
\begin{align}
& \mathcal{V}_{t}^{c}+xr\mathcal{V}_{x}^{c}+\kappa \theta \mathcal{V}_{v}^{c}+v\left( \frac{1}{2}\sigma
^{2}\mathcal{V}_{vv}^{c}-\kappa \mathcal{V}_{v}^{c}-\frac{1}{2}\frac{(\overline{\lambda }%
\mathcal{V}_{x}^{c}+\sigma \rho \mathcal{V}_{xv}^{c})^{2}}{\mathcal{V}_{xx}^{c}}\right) =0;
\label{eq:vc_pde} \\
& \mathcal{\mathcal{V}}^{c}(T,x,v)=\overline{U}\left( x\right) \label{eq:vc_terminal_condition}.
\end{align}%

\textbf{Steps 2-4. PDE of $\udp$ and a change of variables.} Recall from \eqref{eq:udp_pde} and \eqref{eq:udp_terminal_condition} that the FK representation of $\udp$ is given by:
\begin{align*}
0&=\udp_{t}+\left( r+\piu \overline{\lambda }v\right) y\udp_{y}+\kappa
\left( \theta -v\right) \udp_{v} + \frac{1}{2}v\Bigg[y^{2}(\piu)^{2}\udp_{yy}+2\sigma \rho y\piu \udp_{yv}+\sigma ^{2}\udp_{vv}\Bigg] \\
\udp(T,y,v)&=\overline{U}(D(y,v)).
\end{align*}

We change variables as follows:
\begin{equation}
    t = t, \quad x = \dq(t,y,v),\quad v = v.
\end{equation}
This change of variables leads to an equivalent PDE $\forall \, (t, y, v) \in [0, T] \times (0, +\infty) \times (0, +\infty)$, since:
\begin{align*}
\begin{vmatrix}
\frac{\partial t}{\partial t} & \frac{\partial t}{\partial y} & \frac{\partial t}{\partial v}  \\
\frac{\partial x}{\partial t} & \frac{\partial x}{\partial y} & \frac{\partial x}{\partial v}  \\
\frac{\partial v}{\partial t} & \frac{\partial v}{\partial y} & \frac{\partial v}{\partial v}  \notag
\end{vmatrix}
=
\begin{vmatrix}
1 & 0 & 0  \\
\dq_t & \dq_y & \dq_v  \\
0 & 0 & 1 \notag
\end{vmatrix}
= \dq_y \neq 0 \quad \forall \, (t, y, v) \in [0, T] \times (0, +\infty) \times (0, +\infty)
\end{align*}
under the assumption of $D(t,y, v)$ being non-decreasing in $y \in (0, +\infty)$ with a strictly increasing part for any $t \in [0, T], v \in (0, +\infty)$. The condition above is needed to ensure that the change of variables $(t,x,v) \leftrightarrow (t,y,v)$ is bijective, which is necessary for the equivalence of the respective PDEs on the whole domain.

Using the ansatz
\begin{equation} \label{eq:vc_ansatz}
    \udptyv = \vcansatz(t,\dq(t,y,v),v),
\end{equation}
we compute the corresponding derivatives that appear in the PDE of $\udp$:

\begin{equation}\label{eq:udp_vcansatz_derivatives}
    \begin{aligned}
    \udp_{t}& =\vcansatz_{t} + \vcansatz_{x}\dq _{t}, \\
    \udp_{y}& =\vcansatz_{x}\dq _{y}, \\
    \udp_{v}& =\vcansatz_{x}\dq _{v} + \vcansatz_{v} \stackrel{\eqref{cond:D_v}}{=} \vcansatz_{v}, \\
    \udp_{yy}& =\vcansatz_{xx}(\dq _{y})^{2}+\vcansatz_{x}\dq _{yy}, \\
    \udp_{yv}& =\vcansatz_{xv}\dq _{y}+\vcansatz_{x}\dq _{yv}+\vcansatz_{xx}\dq _{y}\dq _{v} \stackrel{\eqref{cond:D_v}}{=}\vcansatz_{xv}\dq _{y}+\vcansatz_{x}\dq _{yv}, \\
    \udp_{vv}& =2\vcansatz_{xv}\dq _{v}+\vcansatz_{x}\dq _{vv}+\vcansatz_{vv}+\vcansatz_{xx}\left(
    \dq _{v}\right) ^{2} \stackrel{\eqref{cond:D_v}}{=}\vcansatz_{x}\dq _{vv}+\vcansatz_{vv}.
    \end{aligned}%
\end{equation}
Next we substitute these derivatives into the PDE of $\udp$, also
use the PDE for $\dq _{t}$ to simplify the equation, and then we cancel out terms and insert the assumption $\dq (t,y, v)=x$.
\begin{align*}
0\stackrel{\eqref{eq:udp_pde}}{=}& \udp_{t}+yr\udp_{y}+\kappa \theta \udp_{v}+v\biggl( \frac{1}{2}\sigma
^{2}\udp_{vv}-\kappa \udp_{v}+y\overline{\lambda }\piu \udp_{y}\\
& +\frac{1}{2} y^{2}(\piu)^{2}\udp_{yy}+\sigma \rho y \piu \udp_{yv}\biggr) \\
\stackrel{\eqref{eq:udp_vcansatz_derivatives}}{=} & \vcansatz_{t} + \vcansatz_{x}\dq _{t}+yr\vcansatz_{x}\dq _{y}+\kappa \theta \vcansatz_{v} + v%
\Bigg[\frac{1}{2}\sigma ^{2}\left( \vcansatz_{x}\dq _{vv}+\vcansatz_{vv}\right)
-\kappa \vcansatz_{v}+y\overline{\lambda }\piu\vcansatz_{x}\dq _{y} \\
& +\frac{1}{2}y^{2}(\piu)^{2}\left( \vcansatz_{xx}(\dq
_{y})^{2}+\vcansatz_{x}\dq _{yy}\right) +\sigma \rho y\piu\left(
\vcansatz_{xv}\dq _{y}+\vcansatz_{x}\dq _{yv}\right) \Bigg] \\
\stackrel[\eqref{eq:dq_pde}]{\eqref{cond:D_v}}{=}& \vcansatz_{t}+\vcansatz_{x}rx+\kappa \theta \vcansatz_{v}+v\biggl( \frac{1}{2}\sigma
^{2}\vcansatz_{vv}-\kappa \vcansatz_{v}+y\overline{\lambda }\piu\vcansatz_{x}\dq
_{y}\\
& +\frac{1}{2}y^{2}(\piu)^{2}\left( \vcansatz_{xx}(\dq _{y})^{2}\right)
+\sigma \rho y\piu\left( \vcansatz_{xv}\dq _{y}\right) \biggr) \\
\stackrel{(i)}{=}&\vcansatz_{t}+xr\vcansatz_{x}+\kappa \theta \vcansatz_{v}+v\left[ \frac{1}{2}\sigma^{2}\vcansatz_{vv}-\kappa \vcansatz_{v} - \frac{1}{2}\frac{(\overline{\lambda }%
\vcansatz_{x} + \sigma \rho \vcansatz_{xv})^{2}}{\vcansatz_{xx}}\right] + v\\
&\cdot \underbrace{\Biggl( y\overline{\lambda }\piu \vcansatz_{x}\dq _{y}+\frac{1}{2}y^{2}(\piu)^{2}\vcansatz_{xx}\rBrackets{\dq_{y}}^{2}+\sigma \rho y\piu\vcansatz_{xv}\dq _{y} + \frac{1}{2}\frac{(\overline{\lambda }%
\vcansatz_{x} + \sigma \rho \vcansatz_{xv})^{2}}{\vcansatz_{xx}}\Biggr)}_{C},
\end{align*}%
where in $(i)$ we added and subtracted the term $-v\frac{1}{2}\frac{(\overline{\lambda }%
\vcansatz_{x} + \sigma \rho \vcansatz_{xv})^{2}}{\vcansatz_{xx}}$.

We show now that under Conditions \eqref{cond:U_D_yy_y} and \eqref{cond:U_D_yv_y}, the term $C$ is zero. Expanding the brackets in the last term of $C$ we get:
\begin{equation*}
\begin{aligned}
C &= y\overline{\lambda }\piu \vcansatz_{x} \dq _{y}+\frac{1}{2}%
y^{2}(\piu)^{2}\vcansatz_{xx}\rBrackets{\dq _{y}}^{2}+\sigma \rho y \piu \vcansatz_{xv}\dq _{y} + \frac{1}{2}\overline{\lambda }^{2}\frac{\left( \vcansatz_{x}\right)^{2}}{\vcansatz_{xx}} + \overline{\lambda }\sigma \rho \frac{\vcansatz_{x} \vcansatz_{xv}}{\vcansatz_{xx}}+\frac{1}{2}\sigma ^{2}\rho ^{2}\frac{\left( \vcansatz_{xv}\right)^{2}}{\vcansatz_{xx}}.
\end{aligned}
\end{equation*}

Using \eqref{eq:udp_vcansatz_derivatives}, we obtain:%
\begin{equation} \label{eq:vcansatz_derivatives}
\begin{aligned}
&\vcansatz_{x}=\frac{\udp_{y}}{\dq _{y}},\vcansatz_{xx} = \frac{\udp_{yy}\dq _{y} - \udp_{y}\dq_{yy}}{\rBrackets{\dq _{y}}^{3}}, \vcansatz_{xv} = \frac{1}{\dq _{y}}\left( \udp_{yv}-\vcansatz_{x}\dq_{yv}\right) =\frac{\udp_{yv}\dq _{y}-\udp_{y}\dq _{yv}}{\rBrackets{\dq _{y}}^{2}}.
\end{aligned}
\end{equation}%
Inserting these expressions in $C$, we get:%
\begin{eqnarray*}
C & = &y\overline{\lambda }\piu\udp_{y}+\frac{1}{2}y^{2}(\piu)^{2}%
\frac{\udp_{yy}\dq _{y}-\udp_{y}\dq _{yy}}{\dq _{y}}+\sigma \rho y\piu%
\frac{\udp_{yv}\dq _{y}-\udp_{y}\dq _{yv}}{\dq _{y}} \\
&&+\frac{1}{2}\overline{\lambda }^{2}\frac{\rBrackets{\udp_{y}}^{2}\dq _{y}}{\udp_{yy}\dq
_{y}-\udp_{y}\dq _{yy}} + \overline{\lambda }\sigma \rho \udp_{y}\frac{\udp_{yv}\dq
_{y}-\udp_{y}\dq _{yv}}{\udp_{yy}\dq _{y}-\udp_{y}\dq _{yy}} \\
&& +\frac{1}{2}\sigma
^{2}\rho ^{2}\frac{\left( \udp_{yv}\dq _{y}-\udp_{y}\dq _{yv}\right) ^{2}}{\dq
_{y}\left( \udp_{yy}\dq _{y}-\udp_{y}\dq _{yy}\right) }
\end{eqnarray*}
\begin{eqnarray*}
& = &y\overline{\lambda }\piu\udp_{y}+\frac{1}{2}y^{2}(\piu)^{2}\udp_{y}\left( \frac{\udp_{yy}}{\udp_{y}}-\frac{\dq _{yy}}{\dq _{y}}\right)
+\sigma \rho y\piu\udp_{y}\left( \frac{\udp_{yv}}{\udp_{y}}-\frac{\dq _{yv}}{%
\dq _{y}}\right) \\
&&+\frac{1}{2}\overline{\lambda }^{2}\frac{\udp_{y}}{\left( \frac{\udp_{yy}}{\udp_{y}%
}-\frac{\dq _{yy}}{\dq _{y}}\right) } + \overline{\lambda }\sigma \rho \udp_{y}%
\frac{\left( \frac{\udp_{yv}}{\udp_{y}}-\frac{\dq _{yv}}{\dq _{y}}\right) }{\left(
\frac{\udp_{yy}}{\udp_{y}}-\frac{\dq _{yy}}{\dq _{y}}\right) } + \frac{1}{2}\sigma
^{2}\rho ^{2}\udp_{y}\frac{\left( \frac{\udp_{yv}}{\udp_{y}}-\frac{\dq _{yv}}{\dq _{y}%
}\right) ^{2}}{\left( \frac{\udp_{yy}}{\udp_{y}}-\frac{\dq _{yy}}{\dq _{y}}\right)
}
\end{eqnarray*}%
Denoting%
\begin{equation*}
A=\left( \frac{\udp_{yy}}{\udp_{y}}-\frac{\dq _{yy}}{\dq _{y}}\right) \text{ and }%
B=\left( \frac{\udp_{yv}}{\udp_{y}}-\frac{\dq _{yv}}{\dq _{y}}\right),
\end{equation*}
we get:%
\begin{equation} \label{KeyPDEF}
C = \udp_{y}\rBrackets{y\overline{\lambda }\piu+\frac{1}{2}y^{2}(\piu)^{2}A+\sigma
\rho y\piu B + \frac{1}{2}\overline{\lambda }^{2}\frac{1}{A} + \overline{%
\lambda }\sigma \rho \frac{B}{A} + \frac{1}{2}\sigma ^{2}\rho ^{2}\frac{B^{2}}{%
A}}.
\end{equation}%

If $\rho = 0$, the term $B$ disappears (i.e., no condition on $B$ is required) and \eqref{KeyPDEF} becomes:
\begin{equation*}
    \begin{aligned}
        C =  \udp_{y}\rBrackets{y\overline{\lambda }\piu+\frac{1}{2}y^{2}(\piu)^{2}A + \frac{1}{2}\overline{\lambda }^{2}\frac{1}{A}} \stackrel{!}{=}0 & \stackrel{\udp_y > 0}{\iff} \frac{1}{2 A} \left( \overline{\lambda } + y\piu A \right)^2 \stackrel{!}{=}0 \stackrel{\eqref{eq:pi_star_unconstrained}}{\iff} A \stackrel{!}{=} -\frac{1- \gamma}{y},\\
    \end{aligned}
\end{equation*}
i.e., Condition \eqref{cond:U_D_yy_y} of this theorem. Thus, we conclude that $\vcansatz$ satisfies the PDE \eqref{eq:vc_pde}.

If $\rho \neq 0$, we insert $A = -\frac{1-\gamma }{y}$ into \eqref{KeyPDEF} and get:
\begin{eqnarray*}
C &=& \udp_{y}\Biggl( y\overline{\lambda }\piu+\frac{1}{2}y^{2}(\piu)^{2}\rBrackets{-\frac{1-\gamma }{y}}+\sigma
\rho y\piu B + \frac{1}{2}\overline{\lambda }^{2}\rBrackets{-\frac{y}{1-\gamma }} \\
&& + \overline{\lambda }\sigma \rho B \rBrackets{-\frac{y}{1-\gamma }} + \frac{1}{2}\sigma ^{2}\rho ^{2}B^{2} \rBrackets{-\frac{y}{1-\gamma }} \Biggr)\\
&\stackrel{\eqref{eq:pi_star_unconstrained}}{=}& \frac{\udp_{y}y}{1 - \gamma}\Biggl( \overline{\lambda }(\lambar + \sigma \rho b(t)) - \frac{1}{2}(\lambar + \sigma \rho b(t))^{2} +\sigma \rho (\lambar + \sigma \rho b(t)) B - \frac{1}{2}\overline{\lambda }^{2} - \overline{\lambda }\sigma \rho B  - \frac{1}{2}\sigma ^{2}\rho ^{2}B^{2} \Biggr) \\
& = & \frac{\udp_{y}y}{1 - \gamma} \Biggl( \lambar^{2} + \lambar \sigma \rho b(t) - \frac{1}{2}\lambar^{2}  - \lambar \sigma \rho b(t) - \frac{1}{2}(\sigma \rho b(t))^{2} + \sigma \rho \lambar B + (\sigma \rho)^{2} b(t) B \\
&& - \frac{1}{2}\lambar^{2}  - \lambar \sigma \rho B  - \frac{1}{2}\sigma ^{2}\rho ^{2}B^{2} \Biggr)\\
& = & \frac{\udp_{y}y}{1 - \gamma} \rBrackets{ - \frac{1}{2}(\sigma \rho b(t))^{2} + (\sigma \rho)^{2} b(t) B - \frac{1}{2}\sigma ^{2}\rho ^{2}B^{2}}\\
& = & \frac{\udp_{y}y}{1 - \gamma} \frac{\sigma^{2} \rho^{2}}{2} \rBrackets{b(t) - B}^{2}.
\end{eqnarray*}%
Hence, if $\rho \neq 0$, $A = -\frac{1 - \gamma}{y}$ and $B = b(t)$, i.e., Conditions \eqref{cond:U_D_yy_y} and \eqref{cond:U_D_yv_y} hold, then $C = 0$. Thus, we conclude that $\vcansatz$ satisfies PDE \eqref{eq:vc_pde}.

\textbf{Step 5. Concluding the value function and the optimal investment strategy.}\\
Having shown that $\vcansatz$ satisfies the HJB PDE of $\vfc$ for any $\rho \in [-1,1]$, we now show that $\vcansatz$ satisfies the terminal condition of the HJB PDE of $\vfc$:
\begin{equation*}
\vcansatz(T, \dq(T,y,v), v) \stackrel{\eqref{eq:vc_ansatz}}{=} \udp(T, y, v) \stackrel{\eqref{eq:udp_terminal_condition}}{=} \overline{U}(D(y,v)) \stackrel{\eqref{eq:def_of_financial_derivative_on_Y}}{=} \overline{U}(\dq(T, y, v)),
\end{equation*}%
i.e., \eqref{eq:vc_terminal_condition} holds with $x=\dq \left( T,y,v\right)$.

Next, we prove that $\vcansatz$ satisfies the assumption of concavity in $x$. Observe that:
\begin{equation}\label{eq:vcansatz_xx_sign}
\begin{aligned}
    \vcansatz_{xx} &\stackrel{\eqref{eq:vcansatz_derivatives}}{=} \frac{\udp_{yy}\dq _{y} - \udp_{y}\dq_{yy}}{\rBrackets{\dq _{y}}^{3}} \stackrel{\text{Def.}\,A}{=} \rBrackets{\dq _{y}}^{-3} A \udp_{y} \dq _{y} \stackrel{\eqref{cond:U_D_yy_y}}{=} \underbrace{\rBrackets{\dq _{y}}^{-2}}_{>0} \underbrace{\rBrackets{-\frac{1 - \gamma}{y}}}_{ < 0 } \udp_{y},
    \end{aligned}
\end{equation}
since $y > 0,\gamma < 1$, and $D(\cdot,v)$ is assumed to be non-decreasing on $(0, +\infty)$ with a strictly increasing part. If $\udp_{y} > 0$, then $\vcansatz_{xx} < 0$.

Take any $y > 0$ and $\Delta y > 0$. Obviously, for any $\omega \in \Omega$ the following holds:
\begin{equation*}
Y^\ast_{\omega}(T)|_{Y^{\ast}_{\omega}(t) = y + \Delta y} > Y^{\ast}_{\omega}(T)|_{Y^{\ast}_{\omega}(t) = y}.
\end{equation*}
Denote by $(\underline{d}, \overline{d}) \subset (0, +\infty)$ the sub-interval where $D(\cdot,v)$ is strictly increasing. Denote $\mathcal{S}(y) = \Bigl\{\omega \in \Omega: Y^{\ast}_{\omega}(T) \in (\underline{d}, \overline{d}) | Y^{\ast}_{\omega}(t) = y \Bigr\}$.
Then, according to \eqref{Heston-stock} and \eqref{Heston-vol}, $\mathbb{P}(\mathcal{S}(y)) > 0\,\,\forall y > 0$. The function $\overline{U}(D(y, v)) = U(D(y,v)) - \lambda_{\varepsilon} (\mathbbm{1}_{\{D(y,v) < K\}} - \varepsilon)$ is strictly increasing in $y$ because $U(\cdot)$ is strictly increasing, $\lambda_\varepsilon \geq 0$, and $\mathbbm{1}_{\{D(y,v) < K\}}$ is a non-increasing function as a superposition $\alpha(\beta(y;v))$ of a non-increasing function $\alpha(x) = \mathbbm{1}_{\{x < K\}}$ and a non-decreasing function $\beta(y; v) = D(y,v)$. Using these properties and the linearity of the expectation operator, we obtain that $\udp$ is strictly increasing in $y$ as follows:
\begin{flalign*}
    \udp(t, y + \Delta y, v) & = \mathbb{E}^{\mathbb{P}}_{t, y + \Delta y, v} \sBrackets{\overline{U}(D(Y^{\ast}(T),v(T)))} \\
    & =  \mathbb{E}^{\mathbb{P}}_{t, y + \Delta y, v} \sBrackets{\overline{U}(D(Y^{\ast}(T),v(T)))\mathbbm{1}_{\left\{ \mathcal{S}(y + \Delta y)\right\} }}  \\
    & \quad + \mathbb{E}^{\mathbb{P}}_{t, y + \Delta y, v} \sBrackets{\overline{U}(D(Y^{\ast}(T),v(T)))\mathbbm{1}_{\left\{\Omega \setminus \mathcal{S}(y + \Delta y)\right\}}}\\
    & >  \mathbb{E}^{\mathbb{P}}_{t,y, v} \sBrackets{\overline{U}(D(Y^{\ast}(T),v(T)))\mathbbm{1}_{\left\{ \mathcal{S}(y + \Delta y)\right\} }}  + \mathbb{E}^{\mathbb{P}}_{t,y, v} \sBrackets{\overline{U}(D(Y^{\ast}(T),v(T)))\mathbbm{1}_{\left\{\Omega \setminus \mathcal{S}(y + \Delta y)\right\}}}\\
    & =  \mathbb{E}^{\mathbb{P}}_{t,y, v} \sBrackets{\overline{U}(D(Y^{\ast}(T),v(T)))} = \udp(t, y, v)
\end{flalign*}

So $\udp$ is strictly increasing in $y$. Therefore, $\udp_{y} > 0$, and via \eqref{eq:vcansatz_xx_sign} we obtain $\vcansatz_{xx} < 0$.

Since $\vcansatz$ satisfies the PDE of $\vfc$, the corresponding terminal condition, and $\vcansatz_{xx} < 0$, we conclude that it is a candidate for the value function in the constrained optimization problem. Thus, we can now calculate the candidate for the optimal investment strategy. Plugging
\begin{eqnarray*}
\frac{\vcansatz_{xv}}{\vcansatz_{xx}}\stackrel{\eqref{eq:vcansatz_derivatives}}{=}\frac{\udp_{yv}\dq _{y}-\udp_{y}\dq_{yv}}{\rBrackets{\dq _{y}}^{2}}\frac{\rBrackets{\dq _{y}}^{3}}{\udp_{yy}\dq _{y}-\udp_{y}\dq_{yy}}=\frac{B}{A}\dq _{y} = -\frac{y b(t)}{1- \gamma}\dq_{y}.
\end{eqnarray*}%
and $\vcansatz_{x}$ as well as $\vcansatz_{xx}$ from \eqref{eq:vcansatz_derivatives} into \eqref{eq:pi_star_control}, we obtain the optimal control in the constrained portfolio optimization problem:
\begin{equation*}
\pic(t) = -\frac{\overline{\lambda }\vcansatz_{x}}{x\vcansatz_{xx}}-\frac{\sigma
\rho \vcansatz_{xv}}{x\vcansatz_{xx}}\stackrel{}{=}\frac{y\overline{\lambda }}{1-\gamma }\frac{\dq _{y}}{\dq }+\frac{y\sigma \rho }{1-\gamma }b(t)\frac{\dq _{y}}{\dq } = \piu(t)\frac{y \dq _{y}}{\dq }.
\end{equation*}
\end{proof}

\textbf{Remark}
The above proof uses $D(\cdot,\cdot)$ to ensure a matching of the terminal condition and the necessary Conditions \eqref{cond:U_D_yy_y}--\eqref{cond:D_v}. The choice of $\lambda^{v}$ is crucial to ensure the Conditions \eqref{cond:U_D_yy_y}--\eqref{cond:D_v}.

\bigskip

\begin{proof}[Proof of Lemma \ref{lem:sufficient_condition}]

If $\udp_{y} = y^{\gamma -1}H(t,v)\dq _{y}$, then:
\begin{equation*}
\begin{aligned}
    \frac{\udp_{yy}}{\udp_{y}}-\frac{\dq _{yy}}{\dq _{y}} = \frac{\left( \gamma
-1\right) H(t,v)\dq _{y}y^{\gamma -2} +H(t,v)\dq
_{yy}y^{\gamma - 1} }{H(t,v)\dq _{y}y^{\gamma -1}}-\frac{\dq _{yy}}{\dq _{y}} = -\frac{1-\gamma }{y},
\end{aligned}
\end{equation*}
i.e., Condition \eqref{cond:U_D_yy_y} holds.

If $H(t,v) = h(t)\exp(b(t)v)$, where $H(t, v)$ does not depend on $y$, then we also have the following:
\begin{eqnarray*}
\frac{\udp_{yv}}{\udp_{y}}-\frac{\dq _{yv}}{\dq _{y}} =\frac{b(t)\dq
_{y}y^{\gamma -1}H(t, v) + \dq _{yv}y^{\gamma - 1}H(t, v) }{\dq _{y}y^{\gamma -1}H(t,v) }-
\frac{\dq _{yv}}{\dq _{y}}=b(t),
\end{eqnarray*}
i.e.,  both Conditions \eqref{cond:U_D_yy_y} and \eqref{cond:U_D_yv_y} are satisfied.
\end{proof}

\begin{proof}[Proof of Proposition \ref{prop:equivalence_btw_sequence_and_single_D}]
{\color{black} Without loss of generality, we assume that $r=0$ and consider the following model under an EMM $\mathbb{Q}$:
\begin{eqnarray*}
dA(t) & = &A(t)\sigma_{A}(t)\sqrt{v(t)}dW_1^{\mathbb{Q}}; \\
dv(t) & = &\tilde{\kappa}(t) \left( \tilde{\theta}(t) - v(t)\right) dt + \sigma_{v}\sqrt{v(t)}dW_3^{\mathbb{Q}},
\end{eqnarray*}
where $W_3^{\mathbb{Q}} = \rho W_1^{\mathbb{Q}} + \sqrt{1 - \rho^2}W_2^{\mathbb{Q}}$, $A=(A(t))_{\tin}$ is the price process of a generic asset, and $\sigma_{A}(t), \tilde{\kappa}(t), \tilde{\theta}(t)$ are deterministic functions of time, whose argument we drop in the rest of this proof to make notation easier.

Consider a generic contingent claim with value:
\begin{eqnarray*}
C(t,A(t),v(t);k)=\mathbb{E}_{t}^{\mathbb{Q}}\left[ \widehat{G}\left( A(T),k\right) \right],
\end{eqnarray*}
where $k$ is assumed to be a scalar parameter for simplicity (e.g., strike an of an option), but it could be a vector of parameters (e.g., strikes of multiple options constituting the contingent claim $C$).

By the FK theorem, the price process $C$ of a contingent claim with a fixed $k$ satisfies the following PDE and terminal condition:%
\begin{eqnarray*}
C_{t} +\tilde{\kappa} \left( \tilde{\theta} -v\right) C_{v} +\frac{1}{2}\sigma _{A}^{2}A^{2} v C_{AA}+\rho \sigma _{A}\sigma
_{v} A v C_{Av}+\frac{1}{2}\sigma _{v}^{2} v C_{vv} &=& 0; \\
C(T,A,v;k) &=&\widehat{G}\left( A,k\right).
\end{eqnarray*}

If we roll over this contingent claim $C$, it creates a new product that can be interpreted as a continuum of financial derivatives. This product has the following price at $\tin$:
\begin{equation*}
    \widehat{D}(t,A(t),v(t);k\left( t,A(t),v(t)\right) )=\mathbb{E}_{t}^{\mathbb{Q}}\left[ \widehat{G}\left(A(T),k\left( t,A(t),v(t)\right) \right) \right],
\end{equation*}
where $k\left( t,A,v\right) $ is now seen as a function of $(t, A, v)$, and it is assumed to be such that $\widehat{D}$ is attainable, i.e., the financial derivative $\widehat{D}$ can be hedged by a self-financing portfolio. This means:%
\begin{equation}\label{eq:D_hat_attainability}
\frac{d\widehat{D}(t,A(t),v(t);k\left( t,A(t),v(t)\right) )}{\widehat{D}%
(t,A(t),v(t);k\left( t,A(t),v(t)\right) )} =\pi \left(
t,A(t),v(t)\right) \frac{dA(t)}{A(t)} =  \pi \left( t,A(t),v(t)\right) A(t)\sigma _{A}\sqrt{v} dW_1^{\mathbb{Q}}
\end{equation}
for some function $\pi \left( t,A,v\right)$. To make notation less cumbersome, we omit time when referring to a process at time $t$.

Applying It{\^o}'s lemma to $\widehat{D}(t,A,v;k\left( t,A,v\right) )$, we get:
\begin{equation}\label{eq:D_hat_Itos_lemma}
\begin{aligned}
    d\widehat{D}(t,A,v;k\left( t,A,v\right) ) =&\left( \widehat{%
D}_{t}+\widehat{D}_{k}k_{t} + \frac{1}{2} A^{2} \sigma _{A}^{2} v\left( \widehat{D}%
_{AA}+2\widehat{D}_{Ak}k_{A}+\widehat{D}_{kk}k_{A}^{2}+\widehat{D}%
_{k}k_{AA}\right) \right) dt \\
&+\left( \left( \widehat{D}_{v}+\widehat{D}_{k}k_{v}\right) \tilde{\kappa} \left(
\tilde{\theta} -v\right) +\frac{1}{2}\sigma _{v}^{2}v\left( \widehat{D}_{vv}+2%
\widehat{D}_{vk}k_{v}+\widehat{D}_{kk}k_{v}^{2}+\widehat{D}_{k}k_{vv}\right)
\right) dt \\
&+\left( \rho \sigma _{v}\sigma _{A} A v\left( \widehat{D}_{vA}+\widehat{D}%
_{vk}k_{A}+\widehat{D}_{kA}k_{v}+\widehat{D}_{kk}k_{A}k_{v}+\widehat{D}%
_{k}k_{vA}\right) \right) dt \\
&+\left( \widehat{D}_{A}+\widehat{D}_{k}k_{A}\right) A \sigma _{A}\sqrt{v}%
dW_1^{\mathbb{Q}}+\left( \widehat{D}_{v}+\widehat{D}_{k}k_{v}\right) \sigma _{v}\sqrt{v}dW_3^{\mathbb{Q}}.
\end{aligned}
\end{equation}


Matching the SDEs \eqref{eq:D_hat_attainability} and \eqref{eq:D_hat_Itos_lemma}, we must ensure that the terms related to $dt$, $dW^\mathbb{Q}_{1}$ and $dW^\mathbb{Q}_{2}$ are equal. The equality of diffusion terms $dW^\mathbb{Q}_{1}$ and $dW^\mathbb{Q}_{2}$ implies that: 
\begin{eqnarray*}
\widehat{D} \pi \left( t,A,v\right) A \sigma _{A}\sqrt{v}  &=& \rBrackets{\widehat{D}_{A}+\widehat{D}_{k}k_{A}} A \sigma _{A}\sqrt{v} \Longleftrightarrow \widehat{D} \pi \left( t,A,v\right)  = \widehat{D}_{A}+\widehat{D}_{k}k_{A};\\
\left( \widehat{D}_{v}+\widehat{D}_{k}k_{v}\right)\sigma _{v}\sqrt{v}  &=& 0 \Longleftrightarrow
k_{v}=-\frac{\widehat{D}_{v}}{\widehat{D}_{k}}.
\end{eqnarray*}

The previous equation is a condition on strike $k$ due to the incompleteness of the financial market. Since our rolling derivative is constructed to be vega neutral at all $\tin$, we naturally have $\widehat{D}_{v}+\widehat{D}_{k}k_{v}=0$. 

The equality of the drift terms and the terminal conditions implies:%
\begin{eqnarray*}
&&\left( \widehat{D}_{t}+\widehat{D}_{k}k_{t}+\frac{1}{2}A^{2} \sigma _{A}^{2} v\left(
\widehat{D}_{AA}+2\widehat{D}_{Ak}k_{A}+\widehat{D}_{kk}k_{A}^{2}+\widehat{D}%
_{k}k_{AA}\right) \right) + \\
&&\left( \left( \widehat{D}_{v}+\widehat{D}_{k}k_{v}\right) \tilde{\kappa} \left(
\tilde{\theta} -v\right) +\frac{1}{2}\sigma _{v}^{2}v\left( \widehat{D}_{vv}+2%
\widehat{D}_{vk}k_{v}+\widehat{D}_{kk}k_{v}^{2}+\widehat{D}_{k}k_{vv}\right)
\right) + \\
&&\left( \rho \sigma _{v}\sigma _{A} A v \left( \widehat{D}_{vA}+\widehat{D}%
_{vk}k_{A}+\widehat{D}_{kA}k_{v}+\widehat{D}_{kk}k_{A}k_{v}+\widehat{D}%
_{k}k_{vA}\right) \right) = 0; \\
&&\widehat{D}(T,A,v;k\left( t,A,v\right) )=\widehat{G}\left(
A,k\left( t,A,v\right) \right).
\end{eqnarray*}

We can rewrite the previous PDE in the following way:%
\begin{eqnarray*}
&&\widehat{D}_{t}+\left( k_{t}+k_{v}\tilde{\kappa} \left( \tilde{\theta} -v\right) +\frac{1}{2}%
A^{2}\sigma _{A}^{2}v k_{AA} + \frac{1}{2}\sigma _{v}^{2}v_t k_{vv}+\rho \sigma
_{v}\sigma _{A} A v k_{vA}\right) \widehat{D}_{k}\\
&& +\left( \frac{1}{2}\sigma
_{A}^{2} A^{2} k_{A}^{2}+\frac{1}{2}\sigma _{v}^{2}k_{v}^{2}+\rho \sigma _{v}\sigma
_{A} A k_{A} k_{v}\right) v\widehat{D}_{kk}\\
&&+\left( \sigma _{A}^{2} A^{2} k_{A}+\rho \sigma _{v}\sigma _{A} A k_{v}\right) v%
\widehat{D}_{Ak}+\left( \rho \sigma _{v}\sigma _{A} A k_{A}+\sigma
_{v}^{2}k_{v}\right) v\widehat{D}_{vk} \\
&& +\rho \sigma _{v}\sigma _{A} A v \widehat{D%
}_{vA} + \frac{1}{2}\sigma _{A}^{2} A^{2} v \widehat{D}_{AA}+\tilde{\kappa} \left( \tilde{\theta}
-v\right) \widehat{D}_{v}+\frac{1}{2}\sigma _{v}^{2}v\widehat{D}_{vv}=0.
\end{eqnarray*}


This is a FK formula for the price of a financial derivative with three underlying assets $(A(t),v(t),\widetilde{k}(t))$, one of which is perfectly correlated to the others:%
\begin{eqnarray*}
dA(t) &=&A(t)\sigma _{A}\sqrt{v(t)}dW_1^{\mathbb{Q}}; \\
dv(t) &=&\tilde{\kappa} \left( \tilde{\theta} -v(t)\right) dt+\sigma _{v}\sqrt{v(t)}dW_3^{\mathbb{Q}}; \\
d\widetilde{k}(t) &=&\left( k_{t}+k_{v}\tilde{\kappa} \left( \tilde{\theta} -v(t)\right) +\frac{1}{2}\sigma
_{A}^{2}v(t)k_{AA}+\frac{1}{2}\sigma _{v}^{2} \rBrackets{A(t)}^2 v(t) k_{vv}+\rho \sigma _{v}\sigma
_{A} A(t) v(t)k_{vA}\right) dt \\
&&+\sigma _{A}k_{A} A(t) \sqrt{v(t)}dW_1^{\mathbb{Q}}+\sigma _{v}k_{v}\sqrt{v(t)}dW_3^{\mathbb{Q}}.
\end{eqnarray*}


Therefore, $\widehat{D}$ can be interpreted as a single financial derivative $\widetilde{D}$ on three underlying assets:%
\[
\widetilde{D}\left( t,A,v,\widetilde{k}\right) =\mathbb{E}_{t}^{\mathbb{Q}}\left[ \widetilde{G}\left(
A(T),\widetilde{k}(T)\right) \right] = \mathbb{E}_{t}^{\mathbb{Q}}\left[ \widehat{G}\left( A(T),k\left(
t,A,v\right) \right) \right] =\widehat{D}(t,A,v;k\left(
t,A,v\right) ).
\]


In this derivative, $\widetilde{D}\left( t,A,v,\widetilde{k}\right) $, the process $\widetilde{k}$ is an explicit function of time, asset price, and variance, that is, $\widetilde{k}(t)=k\left( t,A(t),v(t)\right)$. Therefore, $\widetilde{D}\left( t,A,v,\widetilde{k}\right) $ can be interpreted as $D^{\mathbb{Q}}\left( t,A,v\right) $, i.e., the financial derivative invoked in Theorem \ref{MainTheo}, and the payof{}f $\widetilde{G}\left(A(T),\widetilde{k}(T)\right)$ can be seen as $G\left(A(T),v(T)\right)$ for an implied function $G$.
\[
\widetilde{D}\left( t,A,v,\widetilde{k}\right) =\mathbb{E}_{t}^{\mathbb{Q}}\left[ G\left(
A(T),\widetilde{k}(T)\right) \right] = \mathbb{E}_{t}^{\mathbb{Q}}\left[ G\left( A(T),v(T) \right) \right] =D^{\mathbb{Q}}\left( t,A,v\right).
\]

Now we apply FK theorem again and get:%
\begin{eqnarray*}
D_{t}^{\mathbb{Q}}+\tilde{\kappa} \left( \tilde{\theta} -v\right) D_{v}^{\mathbb{Q}}+\frac{1}{2}\sigma _{A}^{2}{A}^{2}D_{AA}^{\mathbb{Q}}+\rho \sigma _{A}\sigma _{v} A v D_{Av}^{\mathbb{Q}}+%
\frac{1}{2}\sigma _{v}^{2}vD_{vv}^{\mathbb{Q}} &=&0;
\\
D^{\mathbb{Q}}\left( T,A,v\right)^{\mathbb{Q}}  &=& G\left( A,v\right).
\end{eqnarray*}

These calculations indicate that a rolling-over contingent claim with a changing payof{}f
\linebreak $\widehat{G}\left( A(T),k\left( t,A,v\right) \right)$ can be interpreted as a
single financial derivative with a new payof{}f $G\left( A(T),v(T) \right) $. In other words, the financial derivative from Theorem \ref{MainTheo} with payof{}f $D(A(T),v(T))=G(A(T),v(T))$ can be constructed from a continuum of derivatives with payof{}fs $\widehat{D}(A(T),k(t,A,v))=\widehat{G}(A(T),k(t,A,v))$ as prescribed in Corollary \ref{cor:heston_var_rho_nonzero_solution}, where $A(t) = Y^\ast(t)$ and $k(t, A(t), v(t)) = (k_{\varepsilon}(t, Y^\ast(t), v(t)), k_{v}(t, Y^\ast(t), v(t)))^\top =: (k_{\varepsilon, t}, k_{v,t})^\top,\,\tin$.}
\end{proof}

\bigskip

\begin{proof}[Proof of Corollary \ref{cor:heston_var_rho_nonzero_solution}]
Here we prove that for the Heston model and power-utility function there exist $D$ and $\lambda^v$ such that the VaR constraint is satisfied at $t=0$ and Conditions \eqref{cond:U_D_yy_y}, \eqref{cond:U_D_yv_y} and \eqref{cond:D_v} hold for all $\tin$. Then we apply Theorem \ref{MainTheo} to derive the optimal solution to \eqref{MainControlProb} and provide more explicit formulas for computing the optimal solution and the value function.

Recall that $D$ can be constructed, thanks to Proposition \ref{prop:equivalence_btw_sequence_and_single_D}, via a continuum of derivatives $
\widehat{D}$ depending only on the unconstrained optimal wealth process $Y^\ast$ and with time-changing (state-dependent) strike prices $k_{\varepsilon}$, $k_{v}$. Therefore, we show that at each $\tin$, the degrees of freedom $k_{\varepsilon}$ and $k_{v}$ of the payof{}f $\widehat{D}(\cdot;k_{\varepsilon}, k_{v})$ ensure the conditions necessary for the application of Theorem \ref{MainTheo}. For convenience, we state here the related payof{}f structure as per \eqref{eq:D_conjecture_Heston} and suppress the hat in $\widehat{D}$ to simplify the notation:
\begin{equation*}
    D\left(
Y^{\ast}(T)\right) =Y^{\ast}(T)+\left( K-Y^{\ast}(T)\right) 1_{\left\{ k_{\varepsilon}\leq Y^{\ast}(T)\leq
K\right\} }-\left( Y^{\ast}(T)-k_{v}\right) 1_{\left\{ k_{\varepsilon
}\leq Y^{\ast}(T)<k_{\varepsilon}\right\} }
\end{equation*}
 with $0 \leq k_{v}\leq k_{\varepsilon}\leq K$. Therefore, we can rewrite $D$ as follows:%
\begin{eqnarray}
D(y) &=&y+\left( K-y\right) 1_{\left\{ y \leq K\right\} }-\left( K-y\right)
1_{\left\{ y<k_{\varepsilon}\right\} }+\left( k_{v}-y\right) 1_{\left\{
y<k_{\varepsilon}\right\} } -\left( k_{v}-y\right) 1_{\left\{
y<k_{v}\right\} } \notag\\
&=&y+\left( K-y\right) 1_{\left\{ y \leq K\right\} }-\left( k_{\varepsilon
}-y\right) 1_{\left\{ y<k_{v}\right\} }-\left( K-k_{\varepsilon
}\right) 1_{\left\{ y<k_{\varepsilon}\right\} } \notag\\
&=:&D_{1}(y) + D_{2}(y) - D_{3}(y) - D_{4}(y) \notag
\end{eqnarray}

Observe that:
\begin{align*}
\left\{y \in \mathbb{R}: D(y) < K\right\} &=\left\{y \in \mathbb{R}: y+\left( K-y\right) 1_{\left\{ k_{\varepsilon}\leq
y\leq K\right\} }+\left( k_{v}-y\right) 1_{\left\{ k_{\varepsilon
}\leq y<k_{\varepsilon}\right\} } <  K\right\} =\left\{y \in \mathbb{R}: y < k_{\varepsilon}\right\}
\end{align*}%

We can also rewrite $\ubard(y):=\overline{U}(D(y))$ as follows:
\begin{eqnarray*}
\overline{U}(D(y)) &=&U(D(y))-\lambda _{\varepsilon }\rBrackets{1_{\left\{ D(y)<K\right\}
} - \varepsilon}\notag\\
&=&\frac{1}{\gamma }\left( y+\left( K-y\right) 1_{\left\{
y<K\right\} }-\left( k_{v}-y\right) 1_{\left\{
y<k_{v}\right\} }-\left( K-k_{v}\right) 1_{\left\{
y<k_{\varepsilon}\right\} }\right) ^{\gamma } -\lambda _{\varepsilon }1_{\left\{
y<k_{\varepsilon}\right\} }  +  {\color{black} \lambda _{\varepsilon }\varepsilon} \notag \\
&=&\frac{y^{\gamma }}{\gamma }+\frac{1}{\gamma }\left( K^{\gamma }-y^{\gamma }\right) 1_{\left\{ y \leq K\right\} }-\frac{1}{%
\gamma }\left( k_{v}^{\gamma }-y^{\gamma }\right) 1_{\left\{
y<k_{v}\right\} } -\frac{1}{\gamma }\left( \left( K^{\gamma}-k_{v}^{\gamma }\right) 1_{\left\{ y<k_{\varepsilon}\right\} } + \gamma \lambda _{\varepsilon }1_{\left\{ y<k_{\varepsilon}\right\} }\right) +  {\color{black} \lambda _{\varepsilon }\varepsilon } \notag \\
&=:&\ubardi{1}(y)+\ubardi{2} (y) - \ubardi{3}(y)-\ubardi{4}(y) + {\color{black} \lambda _{\varepsilon }\varepsilon} \notag
\end{eqnarray*}

\bigskip

The proof contains three Parts.\\

\begin{description}
\item[Part 1.] First, we show that Conditions \eqref{cond:U_D_yy_y} and \eqref{cond:U_D_yv_y} hold. By Lemma \ref{lem:sufficient_condition}, it is sufficient to show that \eqref{cond:SC_in_lemma} holds: $\udp_{y}=y^{\gamma -1} h(t) \exp \left(b(t)v\right) \dq_{y}$. This involves checking three cases, as the second and third terms are structurally the same, whereas the fifth term is independent of $y$:
    \begin{description}
    \item[Term 1] $D_1$ and $\overline{U}^{D}_1$,
    \item[Terms 2 and 3] $D_2$ and $\overline{U}^{D}_2$, $D_3$ amd $\overline{U}^{D}_3$. This involve writing the sufficient condition in terms of expectations leading to a new representation \eqref{eq:suffiecient_condition_as_financial_derivatives_SupMat}, then proving the equality via four steps:
        \begin{description}
        \item[Step 1] use FK theorem to derive the PDE of LHS of \eqref{eq:suffiecient_condition_as_financial_derivatives_SupMat};
        \item[Step 2] use FK theorem to derive the PDE of expectation term in the RHS of \eqref{eq:suffiecient_condition_as_financial_derivatives_SupMat};
        \item[Step 3] show that the terminal value of the LHS is equal to the value of the RHS, i.e., check that the terminal conditions of the corresponding PDEs are equal;
        \item[Step 4] show that
        RHS of \eqref{eq:suffiecient_condition_as_financial_derivatives_SupMat} solves the PDE for LHS of\linebreak \eqref{eq:suffiecient_condition_as_financial_derivatives_SupMat}.
        \end{description}
    \item[Term 4] $D_4$ and $\overline{U}^{D}_4$
    \end{description}
\item[Part 2.] Addressing Condition \eqref{cond:D_v}
\item[Part 3.] Addressing the VaR constraint and applying Theorem \ref{MainTheo}
\end{description}

\bigskip
We write for $i \in \{1, 2, 3, 4 \}$:
\begin{align*}
\overline{U}^{(i)}(t,y,v) &:= \EVtyv{P}{\ubardi{i}\rBrackets{Y^{\ast}(T)}};\\
D^{(i)}(t,y,v) & := \EVtyv{Q}{\exp\rBrackets{-r(T - t)}D_i\rBrackets{Y^{\ast}(T)}}.
\end{align*}

\textbf{Part 1. Term 1.\\}
For the first term of the modified utility function and the related (first) piece of the financial derivative on the optimal unconstrained wealth, we can check the sufficient condition \eqref{cond:SC_in_lemma} by explicitly calculating its LHS and RHS.

In LHS, $\overline{U}^{(1)}$ is the optimum of the objective function in \eqref{PowerUnconstProb}, which is known due to Proposition \ref{prop:unconstrained_problem_solution}:
\begin{equation*}
    \overline{U}^{(1)}=\frac{y^{\gamma }}{\gamma }\exp(a(t)+b(t)v) \quad \Rightarrow \quad \overline{U}_{y}^{(1)}=y^{\gamma -1}\exp (a(t)+b(t)v).
\end{equation*}

Regarding RHS, $D_{1}(y)=y$ and $\exp(-rt)Y^\ast(t)$ is a martingale under any EMM $\mathbb{Q}$. Thus, $D^{(1)}= y \Rightarrow D^{(1)}_{y}= 1$ and we conclude that for any $\rho \in [-1,1]$ and any $\mathbb{Q}$ the following holds:
\begin{equation*}
    \overline{U}_{y}^{(1)}=y^{\gamma -1}\exp (a(t)+b(t)v) \cdot 1= y^{\gamma -1} \underbrace{\exp \left(a(t)\right)}_{=h(t)}\exp\left( b(t)v\right) D_{y}^{(1)}.
\end{equation*}

\textbf{Part 1. Terms 2 and 3.\\}
We show now that the same relation holds for the second and third terms of the modified utility function, i.e., the utility of a put option on the unconstrained optimal wealth is linked to a price under the suitable $\mathbb{Q}(\lambda^{v})$ of a put option on the unconstrained optimal wealth.  For simplicity of presentation, we will write $\mathbb{Q}$ instead of $\mathbb{Q}(\lambda^v)$.

Recall that the expected values can be computed via the inverse Fourier
transform:%
\begin{align*}
\mathbb{E}_{t,z,v}^{\mathbb{M}}\left[ g\left( Z^\ast(T)\right) \right] &=\int g\left(
x\right) \left( \frac{1}{2\pi }\int \exp\rBrackets{-iux}\phi^{Z^\ast(T),\mathbb{M}} (u;t,z,v)du\right) dx \notag \\
&=\frac{1}{2\pi }\int \int g\left( x\right) \exp \left( -iu\left(
x-z\right) +A^{\mathbb{M}}(T-t,u) + B^{\mathbb{M}}(T-t,u)v\right) dudx,  \notag
\end{align*}%
where $\phi^{Z^\ast(T),\mathbb{M}}$ is the characteristic function of $Z$ under the measure $\mathbb{M} \in \{{\mathbb{P}}, {\mathbb{Q}} \}$ given in Proposition \ref{prop:characteristic_fct_log_Y}.

Changing variables, $Z^\ast(T) = \ln \rBrackets{Y^{\ast}(T)}$, $z=x-\ln y$, and using the inverse
Fourier Transform of $Z^\ast(T)$, we obtain $\forall \, i \in \{1,2,3, 4\}$:%
\begin{align}
\overline{U}^{(i)} &= \mathbb{E}^{\mathbb{P}}_{t,y,v}\left[ \overline{U}^{D}_{i}(Y^{\ast}(T))\right] \notag\\
& =\int \overline{U}^{D}_{i}(\exp\rBrackets{x})\left( \frac{1}{2\pi }\int \exp\rBrackets{-iux}\phi^{Z^\ast(T),\mathbb{P}} (u;t,\ln y,v)du\right) dx
\notag \\
&= \frac{1}{2\pi }\int \int \overline{U}^{D}_{i}(\exp\rBrackets{x})\exp \Bigl( -iu\left( x-\ln y\right)
+A^{\mathbb{P}}(T-t,u) +B^{\mathbb{P}}(T-t,u)v\Bigr) du\,dx \notag \\
&= \frac{1}{2\pi }\int \int \overline{U}^{D}_{i}(y\exp\rBrackets{z})\exp \left(
-iuz+A^{\mathbb{P}}(T-t,u)+B^{\mathbb{P}}(T-t,u)v\right) du\,dz \label{eq:U_i_Fourier_Transform_SupMat}\\
D^{(i)} &= \exp\rBrackets{-r(T-t)}\mathbb{E}_{t,y,v}^{\mathbb{Q}}\left[ D_{i}(Y^{\ast}(T))\right] \notag\\ 
&= \frac{\exp\rBrackets{-r(T-t)}}{2\pi } \cdot \int \int D_{i}(y\exp\rBrackets{z})\exp \left(
-iuz+A^{\mathbb{Q}}(T-t,u)+B^{\mathbb{Q}}(T-t,u)v\right) du\,dz \label{eq:Pi_i_Fourier_Transform_SupMat}.
\end{align}%

For $\overline{U}^{D}_{2}(y)=\frac{1}{\gamma }\left( K^{\gamma }-y^{\gamma }\right)
1_{\left\{ y<K\right\} }$ with $K > 0$ a given parameter, we receive, using \eqref{eq:U_i_Fourier_Transform_SupMat}:%
\begin{eqnarray*}
\overline{U}^{(2)} 
&=&\frac{1}{2\pi }\frac{1}{\gamma }\int \int \left( K^{\gamma }-\exp\rBrackets{\gamma
(z+\ln y)}\right) 1_{\left\{ z<\ln K-\ln y\right\} }  \cdot \exp \left( -iuz+A^{\mathbb{P}}(T-t,u)+B^{\mathbb{P}}(T-t,u)v\right) dudz \\
&=&\frac{1}{2\pi }\frac{K^{\gamma }}{\gamma }\int \int 1_{\left\{ z<\ln
K-\ln y\right\} }\exp \Bigl( -iuz+A^{\mathbb{P}}(T-t,u) +B^{\mathbb{P}}(T-t,u)v\Bigr) dudz \\
&&-\frac{1}{2\pi }\frac{1}{\gamma }\int \int 1_{\left\{ z<\ln K-\ln
y\right\} }\exp \Bigl( \gamma \ln y+\gamma
z-iuz+A^{\mathbb{P}}(T-t,u) + B^{\mathbb{P}}(T-t,u)v\Bigr) dudz\\
&=&\frac{1}{2\pi }\frac{K^{\gamma }}{\gamma }\int \limits_{-\infty}^{\ln\rBrackets{K/y}} \int\limits_{-\infty}^{+\infty} \exp \left( -iuz+A^{\mathbb{P}}(T-t,u)+B^{\mathbb{P}}(T-t,u)v\right) dudz \\
&&-\frac{1}{2\pi }\frac{1}{\gamma } \int \limits_{-\infty}^{\ln\rBrackets{K/y}} \int\limits_{-\infty}^{+\infty} y^{\gamma}\exp \left(\gamma
z-iuz+A^{\mathbb{P}}(T-t,u)+B^{\mathbb{P}}(T-t,u)v\right) dudz
\end{eqnarray*}

Next we state the Leibniz integral rule (LIR), as we will use it several times. For $g(\alpha, x), l(\alpha), m(\alpha)$ continuously differentiable functions it holds:
\begin{align}\label{eq:Leibniz_integral_rule}\tag{LIR}
    \frac{\partial}{\partial \alpha} \rBrackets{\int \limits_{l(\alpha)}^{m(\alpha)}g(\alpha,x) \,dx} = &g(\alpha,m(\alpha))m'(\alpha) - g(\alpha,l(\alpha))l'(\alpha) +  \rBrackets{\int \limits_{l(\alpha)}^{m(\alpha)} \rBrackets{\frac{\partial}{\partial \alpha}g(\alpha,x)} \,dx}.\notag
\end{align}

Taking the derivative of $\overline{U}^{(2)}$ yields:
\begin{flalign*}
    \overline{U}^{(2)}_y &= \ddy \Biggl({\frac{1}{2\pi }\frac{K^{\gamma }}{\gamma } \int \limits_{-\infty}^{\ln\rBrackets{K/y}} \underbrace{\int \limits_{-\infty}^{+\infty} \exp \left( -iuz+A^{\mathbb{P}}(T-t,u)+B^{\mathbb{P}}(T-t,u)v\right) du}_{=:g_1(y, z)} dz}\Biggr) \\
    & \quad - \ddy \Biggl({\frac{1}{2\pi }\frac{1}{\gamma }\int \limits_{-\infty}^{\ln\rBrackets{K/y}} \underbrace{\int \limits_{-\infty}^{+\infty} y^{\gamma} \exp \left(\gamma z - iuz+A^{\mathbb{P}}(T-t,u)+B^{\mathbb{P}}(T-t,u)v\right) du}_{=:g_2(y, z)} dz}\Biggr)\\
     & \stackrel{LIR}{=} \frac{1}{2\pi }\frac{K^{\gamma }}{\gamma } \Biggl( g_1(y, \ln\rBrackets{K/y}) \rBrackets{-\frac{1}{y}} - \lim_{c \downarrow -\infty} \biggl(  g_1(y, c) \underbrace{\frac{\partial c}{\partial y} }_{=0}\biggr)  + \int \limits_{-\infty}^{\ln\rBrackets{K/y}} \underbrace{\ddy g_1(y, z) }_{=0}dz \Biggr)\\
    & \quad - \frac{1}{2\pi }\frac{1}{\gamma } \Biggl( g_2(y, \ln\rBrackets{K/y})\rBrackets{-\frac{1}{y}} - \lim_{c \downarrow -\infty} \biggl(  g_2(y, c) \underbrace{\frac{\partial c}{\partial y} }_{=0}\biggr) + \int \limits_{-\infty}^{\ln\rBrackets{K/y}}  \ddy g_2(y, z)dz  \Biggr)\\
    & = - \frac{1}{2\pi }\frac{K^{\gamma }}{\gamma }\frac{1}{y} \int \limits_{-\infty}^{+\infty} \exp \left( -iu \ln\rBrackets{K/y}+A^{\mathbb{P}}(T-t,u)+B^{\mathbb{P}}(T-t,u)v\right) du \\
    &\quad + \frac{1}{2\pi }\frac{1}{\gamma } y^{\gamma - 1} \int \limits_{-\infty}^{+\infty} \exp \left( \gamma \ln \rBrackets{K/y} - iu\ln\rBrackets{K/y} + A^{\mathbb{P}}(T-t,u)+B^{\mathbb{P}}(T-t,u)v\right) du\\
    &\quad - \frac{1}{2\pi }\frac{1}{\gamma } \int \limits_{-\infty}^{\ln\rBrackets{K/y}} \int \limits_{-\infty}^{+\infty} \gamma y^{\gamma - 1} \exp \left(  \gamma z - iuz+A^{\mathbb{P}}(T-t,u)+B^{\mathbb{P}}(T-t,u)v\right) du\,dz\\
    & \stackrel{(a)}{=} \cancel{-\frac{1}{2\pi }\frac{K^{\gamma }}{\gamma }\frac{1}{y} \int \limits_{-\infty}^{+\infty} \exp \left( -iu \ln\rBrackets{K/y}+A^{\mathbb{P}}(T-t,u)+B^{\mathbb{P}}(T-t,u)v\right) du} \\
    &\quad + \cancel{\frac{1}{2\pi }\frac{K^{\gamma }}{\gamma }\frac{1}{y} \int \limits_{-\infty}^{+\infty} \exp \left( -iu \ln\rBrackets{K/y}+A^{\mathbb{P}}(T-t,u)+B^{\mathbb{P}}(T-t,u)v\right) du}\\
    &\quad - \frac{1}{2\pi }\cancel{\frac{1}{\gamma }} \int \limits_{-\infty}^{\ln\rBrackets{K/y}} \int \limits_{-\infty}^{+\infty} \cancel{\gamma} y^{\gamma - 1} \exp \left( \gamma z -iuz+A^{\mathbb{P}}(T-t,u)+B^{\mathbb{P}}(T-t,u)v\right) du\,dz\\
    & = -\frac{y^{\gamma -1}}{2\pi }\int \int 1_{\left\{ z<\ln K-\ln y\right\} }\exp \left(\gamma z-iuz+A^{\mathbb{P}}(T-t,u)+B^{\mathbb{P}}(T-t,u)v\right) dudz,
\end{flalign*}
where in (a) we used $\exp\rBrackets{\gamma \ln(K/y)}=\frac{K^{\gamma}}{y^{\gamma}}$.

Next we reconstruct the stochastic representation of $\overline{U}^{(2)}_{y}$:
\begin{equation*}
    \begin{aligned}
        \overline{U}^{(2)}_y &\stackrel{x = z + \ln(y)}{=} -\frac{y^{\gamma - 1 }}{2\pi }\int \int 1_{\left\{ x<\ln K\right\} }\exp \Bigl(
  \gamma (x -\ln(y)) - iu(x - \ln(y)) +A^{\mathbb{P}}(T-t,u)+B^{\mathbb{P}}(T-t,u)v\Bigr) dudx \\
        & \stackrel{-\gamma \ln(y) = \ln\left( y^{-\gamma} \right)}{=} -\frac{y^{\gamma - 1 }y^{-\gamma}}{2\pi }\int \int 1_{\left\{ x<\ln K\right\} }\exp \left(\gamma x - iux\right) \exp \Bigl(i u \ln(y) +A^{\mathbb{P}}(T-t,u) \\
        & \qquad \qquad + B^{\mathbb{P}}(T-t,u)v \Bigr) dudx \\
        & = -y^{- 1 } \int  1_{\left\{ x<\ln K\right\} }\exp \left(
        \gamma x\right) \underbrace{\left(\frac{1}{2\pi }\int \exp \left(- iux\right) \phi^{Z^\ast(T),\mathbb{P}}(u;t, \ln(y), v) du \right)}_{\text{density of}\,Z^\ast(T)\,\text{evaluated at}\, x} dx \\
        & = -y^{-1} \mathbb{E}^{\mathbb{P}}\left[\exp\rBrackets{\gamma Z^\ast(T)} 1_{\left\{ Z^\ast(T) <\ln K\right\} } | Z^\ast(t) = \ln(y), v(t) = v \right] \\
        & \stackrel{Z^\ast(t) := \ln(Y^\ast(t))}{=}-y^{-1} \mathbb{E}^{\mathbb{P}}\left[(Y^{\ast}(T))^{\gamma} 1_{\left\{ Y^{\ast}(T) < K\right\} } | Y^\ast(t) = y, v(t) = v \right].
    \end{aligned}
\end{equation*}

Applying the previous result for $\gamma = 1$ under the measure $\mathbb{Q}$ instead of $\mathbb{P}$, we receive the following expression for $D_{2}(x)=\left( K-y\right) 1_{\left\{
y<K\right\} }$ with $K > 0$ a given parameter:

\begin{eqnarray*}
D^{(2)} &=&\frac{1}{2\pi }\exp\rBrackets{-r(T-t)}\int \int D_2(y\exp\rBrackets{z})\exp \left( -iuz+A^{\mathbb{Q}}(T-t,u)+B^{\mathbb{Q}}(T-t,u)v\right) dudz \\
&=& -y^{-1} \mathbb{E}^{\mathbb{Q}}\left[\exp\rBrackets{-r(T - t)} Y^{\ast}(T) 1_{\left\{ Y^{\ast}(T) < K\right\} } | Y^{\ast}(t) = y, v(t) = v \right]
\end{eqnarray*}

Therefore, proving Condition \eqref{cond:SC_in_lemma} for the second and the third terms of the auxiliary utility function is equivalent to proving the following condition:
\begin{align*}
    -y^{-1} \mathbb{E}^{\mathbb{P}}&\left[\rBrackets{Y^{\ast}(T)}^{\gamma} 1_{\left\{ Y^{\ast}(T) < K\right\} } | Y^{\ast}(t) = y, v(t) = v \right] \stackrel{!}{=}  y^{\gamma - 1} \exp(a(t) + b(t)v) \left(-y^{-1}\right) \\
    & \cdot \mathbb{E}^{\mathbb{Q}}\left[\exp\rBrackets{-r(T - t)} Y^{\ast}(T) 1_{\left\{ Y^{\ast}(T) < K\right\} } | Y^{\ast}(t) = y, v(t) = v \right],
\end{align*}
which, in turn, is equivalent to the following one:
\begin{equation}\label{eq:suffiecient_condition_as_financial_derivatives_SupMat}\tag{ESC Put}
\begin{aligned}
    &\underbrace{\mathbb{E}^{\mathbb{P}}_{t,y,v}\left[\rBrackets{Y^{\ast}(T)}^{\gamma} 1_{\left\{ Y^{\ast}(T) < K\right\} } \right]}_{=:g^{\mathbb{P}}(t,y, v)} \\
    & \qquad  \stackrel{!}{=}  y^{\gamma - 1} \exp(a(t) + b(t)v)  \underbrace{\mathbb{E}^{\mathbb{Q}}_{t,y,v}\left[\exp\rBrackets{-r(T - t)} Y^{\ast}(T) 1_{\left\{ Y^{\ast}(T) < K\right\} }  \right]}_{=:g^{\mathbb{Q}}(t,y, v)}
    \end{aligned}
\end{equation}
ESC stands for equivalent sufficient condition.

We prove now \eqref{eq:suffiecient_condition_as_financial_derivatives_SupMat} via four steps.

\bigskip

\textbf{Part 1. Terms 2 and 3. Step 1. FK PDE for LHS of \eqref{eq:suffiecient_condition_as_financial_derivatives_SupMat}}

Recall that under the measure $\mathbb{P}$ we have:
\begin{align*}
dY^{\ast}(t)&=Y^{\ast}(t)\left[ \left( r+ \piu(t) \overline{\lambda }v(t)\right)
dt+\piu(t) \sqrt{v(t)}dW^{\mathbb{P}}_{1}(t)\right];\\
d v(t)& =\kappa \left( \theta -v(t)\right) dt+\sigma \rho \sqrt{v(t)}\,d W^{\mathbb{P}}_{1}(t) \,+\sigma \sqrt{v(t)}\sqrt{1-\rho ^{2}}d W^{\mathbb{P}}_{2}(t).
\end{align*}%
with $\piu(t) = \frac{\overline{\lambda }}{1-\gamma }+%
\frac{\sigma \rho b(t)}{1-\gamma}$.

Then $\mathbb{E}^{\mathbb{P}}_{t,y,v}\left[\rBrackets{Y^{\ast}(T)}^{\gamma} 1_{\left\{ Y^{\ast}(T) < K\right\} } \right] = g^{\mathbb{P}}(t, y, v)$ has the following FK representation:
\begin{align*}
0 &= g^{\mathbb{P}}_{t}  + y(r + \piu(t) \overline{\lambda }v)g^{\mathbb{P}}_{y} + \kappa \left( \theta -v\right)g^{\mathbb{P}}_{v} + \frac{1}{2} v y^2 \rBrackets{\piu(t)}^2 g^{\mathbb{P}}_{yy}  + \frac{1}{2} v \sigma^2 g^{\mathbb{P}}_{vv} + \rho \sigma y v \piu(t) g^{\mathbb{P}}_{yv};\\
y^{\gamma} 1_{\left\{ y < K\right\} } &= g^{\mathbb{P}}(T, y, v).
\end{align*}%

\bigskip
\textbf{Part 1. Terms 2 and 3. Step 2. FK PDE for $\mathbb{Q}$-expectation in RHS of \eqref{eq:suffiecient_condition_as_financial_derivatives_SupMat}}

Recall that under the measure $\mathbb{Q}$ we have:
\begin{eqnarray*}
dY^{\ast}(t) &=&Y^{\ast}(t)rdt+Y^{\ast}(t)\piu(t)\sqrt{v(t)}dW^{\mathbb{Q}}_{1}(t) \\
d v(t) &=&\tilde{\kappa}\left( \tilde{\theta}-v(t)\right) dt+\sigma \sqrt{%
v(t)}\rho d W^{\mathbb{Q}}_{1}(t)+\sigma \sqrt{v(t)}\sqrt{1-\rho ^{2}}d W^{\mathbb{Q}}_{2}(t)
\end{eqnarray*}

Then $\mathbb{E}^{\mathbb{Q}}\left[\exp\rBrackets{-r(T - t)} Y^{\ast}(T) 1_{\left\{ Y^{\ast}(T) < K\right\} } | Y^{\ast}(t) = y, v(t) = v \right] = g^{\mathbb{Q}}(T, y, v)$ has the following FK representation:
\begin{align*}
0 &= g^{\mathbb{Q}}_{t} - r g^{\mathbb{Q}} + y r g^{\mathbb{Q}}_{y} + \tilde{\kappa} \left( \tilde{\theta} -v\right)g^{\mathbb{Q}}_{v} + \frac{1}{2} v y^2 \rBrackets{\piu(t)}^2 g^{\mathbb{Q}}_{yy}  + \frac{1}{2} v \sigma^2 g^{\mathbb{Q}}_{vv} + \rho \sigma y v \piu(t) g^{\mathbb{Q}}_{yv};\\
y 1_{\left\{ y < K\right\} } &= g^{\mathbb{Q}}(T, y, v).
\end{align*}%

\bigskip
\textbf{Part 1. Terms 2 and 3. Step 3. Equality of terminal conditions}$\,$

Consider the ansatz $g^{\mathbb{P}}(t,y,v) = y^{\gamma - 1} \exp(a(t) + b(t)v)g^{\mathbb{Q}}(t,y,v)$ with $a(T) = b(T) = 0$. Then:
\begin{align*}
    g^{\mathbb{P}}(T, y, v) &= y^{\gamma} 1_{\left\{ y < K\right\} } = y^{\gamma - 1}  y 1_{\left\{ y < K\right\}} = y^{\gamma - 1} y 1_{\left\{ y < K\right\}} \exp(a(T) + b(T)v) \\
    & = y^{\gamma - 1} \exp(a(T) + b(T)v) g^{\mathbb{Q}}(T,y, v),
\end{align*}
i.e., the LHS and RHS coincide at time $t=T$.

\bigskip
\textbf{Part 1. Terms 2 and 3. Step 4. Verifying $g^{\mathbb{P}}(t,y,v) = y^{\gamma - 1} \exp(a(t) + b(t)v)g^{\mathbb{Q}}(t,y,v)$ via PDEs}

 Let us calculate the necessary partial derivatives of $g^{\mathbb{P}}$, which appear in its FK PDE:
\begin{flalign*}
    g^{\mathbb{P}}_{t} &=\frac{\partial}{\partial t} \rBrackets{ y^{\gamma - 1} \exp(a(t) + b(t)v)g^{\mathbb{Q}}(t,y,v) } = y^{\gamma - 1} \exp(a(t) + b(t)) \rBrackets{a'(t)+b'(t)v}g^{\mathbb{Q}} \\
    & \quad +  y^{\gamma - 1} \exp(a(t) + b(t))g^{\mathbb{Q}}_{t} = y^{\gamma - 1} \exp(a(t) + b(t))\rBrackets{ \rBrackets{a'(t)+b'(t)v}g^{\mathbb{Q}} + g^{\mathbb{Q}}_{t}}\\
    g^{\mathbb{P}}_{y} & = \exp(a(t) + b(t)v)\rBrackets{(\gamma - 1) y^{\gamma - 2}g^{\mathbb{Q}} + y^{\gamma - 1} g^{\mathbb{Q}}_{y}} = y^{\gamma - 2} \exp(a(t) + b(t)v) \rBrackets{(\gamma - 1) g^{\mathbb{Q}} + y g^{\mathbb{Q}}_{y}};\\
    g^{\mathbb{P}}_{v}
    & = y^{\gamma - 1}\rBrackets{\frac{\partial \exp(a(t) + b(t)v)}{\partial v}  g^{\mathbb{Q}} + \exp(a(t) + b(t)v) g^{\mathbb{Q}}_{v}} = y^{\gamma - 1}  \exp(a(t) + b(t)v) \rBrackets{b(t) g^{\mathbb{Q}} + g^{\mathbb{Q}}_v};\\
    g^{\mathbb{P}}_{yy} &= \frac{\partial}{\partial y} \rBrackets{ g^{\mathbb{P}}_{y}} = \exp(a(t) + b(t)v) \frac{\partial}{\partial y} \rBrackets{(\gamma - 1) y^{\gamma - 2}g^{\mathbb{Q}} + y^{\gamma - 1} g^{\mathbb{Q}}_{y} }\\
    & = \exp(a(t) + b(t)v)\rBrackets{(\gamma - 1)\rBrackets{(\gamma - 2)y^{\gamma - 3}g^{\mathbb{Q}} + y^{\gamma - 2}g^{\mathbb{Q}}_{y}} + \rBrackets{(\gamma - 1)y^{\gamma - 2}g^{\mathbb{Q}}_{y}} + y^{\gamma - 1}g^{\mathbb{Q}}_{yy}}\\
    & = y^{\gamma - 3} \exp(a(t) + b(t)v)  \rBrackets{(\gamma - 1)(\gamma - 2)g^{\mathbb{Q}} + 2(\gamma - 1) y g^{\mathbb{Q}}_{y} + y^2g^{\mathbb{Q}}_{yy}}\\
    g^{\mathbb{P}}_{vv} &= \frac{\partial}{\partial v} \rBrackets{ g^{\mathbb{P}}_{v}} = \frac{\partial}{\partial v} \rBrackets{ y^{\gamma - 1}  \exp(a(t) + b(t)v) \rBrackets{b(t) g^{\mathbb{Q}} + g^{\mathbb{Q}}_v}}&&\\
    & = y^{\gamma - 1} \rBrackets{ \exp(a(t) + b(t)v) b(t)\rBrackets{b(t) g^{\mathbb{Q}} + g^{\mathbb{Q}}_v} + \exp(a(t) + b(t)v)\rBrackets{b(t) g^{\mathbb{Q}}_v + g^{\mathbb{Q}}_{vv}}}&&\\
    & = y^{\gamma - 1} \exp(a(t) + b(t)v) \rBrackets{  \rBrackets{b(t)}^2  g^{\mathbb{Q}} + 2 b(t) g^{\mathbb{Q}}_v + g^{\mathbb{Q}}_{vv} } &&
    \end{flalign*}
    \begin{flalign*}
    g^{\mathbb{P}}_{yv} &= \frac{\partial}{\partial y} \rBrackets{ g^{\mathbb{P}}_{v}} = \frac{\partial}{\partial y} \rBrackets{ y^{\gamma - 1}  \exp(a(t) + b(t)v) \rBrackets{b(t) g^{\mathbb{Q}} + g^{\mathbb{Q}}_v} } &&\\
    & = \exp(a(t) + b(t)v) \rBrackets{ (\gamma - 1) y^{\gamma - 2} \rBrackets{b(t) g^{\mathbb{Q}} + g^{\mathbb{Q}}_v} + y^{\gamma - 1} \rBrackets{b(t) g^{\mathbb{Q}}_{y} + g^{\mathbb{Q}}_{yv}}} &&\\
    & = y^{\gamma - 2} \exp(a(t) + b(t)v) \rBrackets{ (\gamma - 1)  b(t) g^{\mathbb{Q}} + (\gamma - 1)  g^{\mathbb{Q}}_{v} + y b(t) g^{\mathbb{Q}}_{y} + y g^{\mathbb{Q}}_{yv}}.
\end{flalign*}

We plug those partial derivatives in the LHS PDE, i.e., FK PDE of $g^{\mathbb{P}}$, and get:
\begin{align*}
    0 &= y^{\gamma - 1} \exp(a(t) + b(t))\rBrackets{ \rBrackets{a'(t)+b'(t)v}g^{\mathbb{Q}} + g^{\mathbb{Q}}_{t}}  \\
    & \quad + y(r + \piu(t) \overline{\lambda }v) y^{\gamma - 2} \exp(a(t) + b(t)v) \rBrackets{(\gamma - 1) g^{\mathbb{Q}} + y g^{\mathbb{Q}}_{y}} \\
     &\quad + \kappa \left( \theta -v\right)y^{\gamma - 1}  \exp(a(t) + b(t)v) \rBrackets{b(t) g^{\mathbb{Q}} + g^{\mathbb{Q}}_v} \\
     &\quad + \frac{1}{2} v y^2 \rBrackets{\piu(t)}^2 y^{\gamma - 3} \exp(a(t) + b(t)v)  \rBrackets{(\gamma - 1)(\gamma - 2)g^{\mathbb{Q}} + 2(\gamma - 1) y g^{\mathbb{Q}}_{y} + y^2g^{\mathbb{Q}}_{yy}}\\
     &\quad + \frac{1}{2} v \sigma^2 y^{\gamma - 1} \exp(a(t) + b(t)v) \rBrackets{  \rBrackets{b(t)}^2  g^{\mathbb{Q}} + 2 b(t) g^{\mathbb{Q}}_v + g^{\mathbb{Q}}_{vv} }\\
     &\quad + \rho \sigma y v \piu(t) y^{\gamma - 2} \exp(a(t) + b(t)v) \rBrackets{ (\gamma - 1)  b(t) g^{\mathbb{Q}} + (\gamma - 1)  g^{\mathbb{Q}}_{v} + y b(t) g^{\mathbb{Q}}_{y} + y g^{\mathbb{Q}}_{yv}}.
\end{align*}

Since $\forall\, y > 0 , v>0$, we have $y^{\gamma - 1} \exp(a(t) + b(t)v) > 0$ and can divide by this term both sides of the PDE:
\begin{align*}
    0 &=  \rBrackets{a'(t)+b'(t)v}g^{\mathbb{Q}} + \underline{g^{\mathbb{Q}}_{t}}  + (r + \piu(t) \overline{\lambda }v) \rBrackets{(\gamma - \underline{1) g^{\mathbb{Q}}} + \underline{y g^{\mathbb{Q}}_{y}}}&\\
     &\quad + \kappa \left( \theta -v\right) \rBrackets{b(t) g^{\mathbb{Q}} + \underline{g^{\mathbb{Q}}_v}} &\\
     &\quad + \frac{1}{2} v  \rBrackets{\piu(t)}^2  \rBrackets{(\gamma - 1)(\gamma - 2)g^{\mathbb{Q}} + 2(\gamma - 1) y g^{\mathbb{Q}}_{y} + \underline{y^2g^{\mathbb{Q}}_{yy}}} &\\
     &\quad + \frac{1}{2} v \sigma^2  \rBrackets{  \rBrackets{b(t)}^2  g^{\mathbb{Q}} + 2 b(t) g^{\mathbb{Q}}_v + \underline{g^{\mathbb{Q}}_{vv}} } &\\
     &\quad + \rho \sigma  v \piu(t)  \rBrackets{ (\gamma - 1)  b(t) g^{\mathbb{Q}} + (\gamma - 1)  g^{\mathbb{Q}}_{v} + y b(t) g^{\mathbb{Q}}_{y} + \underline{y g^{\mathbb{Q}}_{yv}}}, &
\end{align*}
where we underlined terms related to the $g^{\mathbb{Q}}$ PDE. Collecting these terms, we get:
\begin{align*}
    0 &=  \rBrackets{a'(t)+b'(t)v}g^{\mathbb{Q}} + r\gamma g^{\mathbb{Q}} + \piu(t) \overline{\lambda }v \rBrackets{(\gamma - 1) g^{\mathbb{Q}} + y g^{\mathbb{Q}}_{y} }\\
    &\quad + \kappa \left( \theta -v\right) b(t) g^{\mathbb{Q}} \\
    &\quad + \frac{1}{2} v  \rBrackets{\piu(t)}^2  \rBrackets{(\gamma - 1)(\gamma - 2)g^{\mathbb{Q}} + 2(\gamma - 1) y g^{\mathbb{Q}}_{y}}\\
    &\quad + \frac{1}{2} v \sigma^2  \rBrackets{  \rBrackets{b(t)}^2  g^{\mathbb{Q}} + 2 b(t) g^{\mathbb{Q}}_v }\\
    &\quad + \rho \sigma  v \piu(t)  \rBrackets{ (\gamma - 1)  b(t) g^{\mathbb{Q}} + (\gamma - 1)  g^{\mathbb{Q}}_{v} + y b(t) g^{\mathbb{Q}}_{y} }\\
    &\quad + \biggl[ \underline{g^{\mathbb{Q}}_{t} - r g^{\mathbb{Q}} + r y g^{\mathbb{Q}}_{y} +\kappa \left( \theta -v\right) g^{\mathbb{Q}}_v + \frac{1}{2} v   \rBrackets{\piu(t)}^2 y^2 g^{\mathbb{Q}}_{yy} + \frac{1}{2} v \sigma^2 g^{\mathbb{Q}}_{vv} + \rho \sigma  v \piu(t) y g^{\mathbb{Q}}_{yv} }\biggr].
\end{align*}
Next we use the link between the variance process parameters under the different measures according to \eqref{eq:Heston_v_under_Q}:
\begin{equation*}
    \kappa \left( \theta -v\right) \stackrel{(i)}{=} \tilde{\kappa} \tilde{\theta}- \kappa v \stackrel{(ii)}{=} \tilde{\kappa} \tilde{\theta}- \rBrackets{\tilde{\kappa} - \sigma \lambar \rho - \sigma \lambda^{v}\sqrt{1 - \rho^2}}v = \tilde{\kappa} \left( \tilde{\theta} -v\right) + \sigma \lambar \rho v + \sigma \lambda^{v}\sqrt{1 - \rho^2}v,
\end{equation*}
where $(i)$ refers to  $\tilde{\theta} = \theta \kappa / \tilde{\kappa}$, $(ii)$  refers to $\tilde{\kappa} = \kappa + \sigma \lambar \rho + \sigma \lambda^{v}\sqrt{1 - \rho^2}$. Taking this as well as PDE of $g^{\mathbb{Q}}$ into account, we get:
\begin{align*}
    0 &=  \rBrackets{a'(t)+b'(t)v}g^{\mathbb{Q}} + r\gamma g^{\mathbb{Q}} + \piu(t) \overline{\lambda }v \rBrackets{(\gamma - 1) g^{\mathbb{Q}} + y g^{\mathbb{Q}}_{y} }\\
    &\quad + \kappa \left( \theta -v\right) b(t) g^{\mathbb{Q}} \\
    &\quad + \frac{1}{2} v  \rBrackets{\piu(t)}^2  \rBrackets{(\gamma - 1)(\gamma - 2)g^{\mathbb{Q}} + 2(\gamma - 1) y g^{\mathbb{Q}}_{y}}\\
    &\quad + \frac{1}{2} v \sigma^2  \rBrackets{  \rBrackets{b(t)}^2  g^{\mathbb{Q}} + 2 b(t) g^{\mathbb{Q}}_v }\\
    &\quad + \rho \sigma  v \piu(t)  \rBrackets{ (\gamma - 1)  b(t) g^{\mathbb{Q}} + (\gamma - 1)  g^{\mathbb{Q}}_{v} + y b(t) g^{\mathbb{Q}}_{y} }\\
    &\quad + \sigma \lambar \rho v g^{\mathbb{Q}}_v + \sigma \lambda^{v}\sqrt{1 - \rho^2} v g^{\mathbb{Q}}_v.
\end{align*}

Using the ODEs for $a(\tau), b(\tau)$ from \eqref{eq:PDE_a_tau} \eqref{eq:PDE_B(t)au} and the relation $\tau = T - t$, we conclude that:
\begin{align}
a^{\prime }(t )& =- \kappa \theta b(t) - \gamma r; \notag \\
b^{\prime }(t)& = - \frac{1}{2}{\left( \sigma ^{2}+\frac{\gamma \sigma
^{2}\rho ^{2}}{1-\gamma }\right) }b^{2}(t) + {\left( \kappa -%
\frac{\gamma \overline{\lambda }\sigma \rho }{1-\gamma }\right) }b(t) - %
\frac{1}{2}{\frac{\gamma \overline{\lambda }^{2}}{1-\gamma }}.  \notag
\end{align}%

Plugging the representation of $a'(t)$ and $b'(t)$ in the key relation we want to prove, we get:
\begin{align*}
    0 &=  \rBrackets{- \kappa \theta b(t) - \gamma r + v\cdot \rBrackets{- \frac{1}{2}\left( \sigma ^{2}+\frac{\gamma \sigma^{2} \rho^{2}}{1-\gamma }\right)(b(t))^2 + \left( \kappa - \frac{\gamma \overline{\lambda }\sigma \rho }{1-\gamma }\right)b(t) - \frac{1}{2} \frac{\gamma \overline{\lambda }^{2}}{1-\gamma } }}g^{\mathbb{Q}}\\
    &\quad + r\gamma g^{\mathbb{Q}} + \piu(t) \overline{\lambda }v \rBrackets{(\gamma - 1) g^{\mathbb{Q}} + y g^{\mathbb{Q}}_{y} }\\
    &\quad + \kappa \left( \theta -v\right) b(t) g^{\mathbb{Q}} \\
    &\quad + \frac{1}{2} v  \rBrackets{\piu(t)}^2  \rBrackets{(\gamma - 1)(\gamma - 2)g^{\mathbb{Q}} + 2(\gamma - 1) y g^{\mathbb{Q}}_{y}}\\
    &\quad + \frac{1}{2} v \sigma^2  \rBrackets{  \rBrackets{b(t)}^2  g^{\mathbb{Q}} + 2 b(t) g^{\mathbb{Q}}_v }\\
    &\quad + \rho \sigma  v \piu(t)  \rBrackets{ (\gamma - 1)  b(t) g^{\mathbb{Q}} + (\gamma - 1)  g^{\mathbb{Q}}_{v} + y b(t) g^{\mathbb{Q}}_{y} }\\
    &\quad + \sigma \lambar \rho v g^{\mathbb{Q}}_v + \sigma \lambda^{v}\sqrt{1 - \rho^2} v g^{\mathbb{Q}}_v.
\end{align*}
Next we indicate terms to be cancelled out directly and plug in the representation of $\piu(t) = \frac{\overline{\lambda }}{1-\gamma } + \frac{\sigma \rho b(t)}{1-\gamma}$:
\begin{align*}
    0 &=  \rBrackets{- \cancel{\kappa \theta b(t)} - \cancel{\gamma r} + v\cdot \rBrackets{- \frac{1}{2}\left(  \cancel{\sigma ^{2}}+\frac{\gamma \sigma^{2} \rho^{2}}{1-\gamma }\right)(b(t))^2 + \left( \cancel{\kappa} - \frac{\gamma \overline{\lambda }\sigma \rho }{1-\gamma }\right)b(t) - \frac{1}{2} \frac{\gamma \overline{\lambda }^{2}}{1-\gamma } }}g^{\mathbb{Q}}\\
    &\quad + \cancel{r\gamma g^{\mathbb{Q}}} + \rBrackets{\frac{\overline{\lambda }}{1-\gamma } + \frac{\sigma \rho b(t)}{1-\gamma}} \overline{\lambda }v \rBrackets{(\gamma - 1) g^{\mathbb{Q}} + y g^{\mathbb{Q}}_{y} }\\
    &\quad + \kappa \left( \cancel{\theta} - \cancel{v}\right) b(t) g^{\mathbb{Q}} \\
    &\quad + \frac{1}{2} v  \rBrackets{\frac{\overline{\lambda }}{1-\gamma } + \frac{\sigma \rho b(t)}{1-\gamma}}^2  \rBrackets{(\gamma - 1)(\gamma - 2)g^{\mathbb{Q}} + 2(\gamma - 1) y g^{\mathbb{Q}}_{y}}\\
    &\quad + \frac{1}{2} v \sigma^2  \rBrackets{ \cancel{ \rBrackets{ b(t)}^2  g^{\mathbb{Q}}} + 2 b(t) g^{\mathbb{Q}}_v }\\
    &\quad + \rho \sigma  v \rBrackets{\frac{\overline{\lambda }}{1-\gamma } + \frac{\sigma \rho b(t)}{1-\gamma}} \rBrackets{ (\gamma - 1)  b(t) g^{\mathbb{Q}} + (\gamma - 1)  g^{\mathbb{Q}}_{v} + y b(t) g^{\mathbb{Q}}_{y} }\\
    &\quad + \sigma \lambar \rho v g^{\mathbb{Q}}_v + \sigma \lambda^{v}\sqrt{1 - \rho^2} v g^{\mathbb{Q}}_v.
\end{align*}
Next, we use that $\frac{-\gamma}{1- \gamma}  = 1 - \frac{1}{1 - \gamma},\,\frac{\gamma - 1}{1 - \gamma} = -1, \frac{(\gamma - 1)(\gamma - 2)}{(1 - \gamma)(1 - \gamma)} = 1 + \frac{1}{1 - \gamma}$, expand several brackets with multiple summation terms, and move $y,v$ to the beginning of the corresponding product where they appear:

\begin{align*}
    0 &=   v\cdot \rBrackets{\frac{1}{2} \rBrackets{1 - \frac{1}{1 - \gamma}}\sigma^{2} \rho^{2} (b(t))^2 + \rBrackets{1 - \frac{1}{1 - \gamma}}\overline{\lambda }\sigma \rho b(t) + \frac{1}{2} \rBrackets{1 - \frac{1}{1 - \gamma}} \overline{\lambda }^{2} }g^{\mathbb{Q}}\\
    &\quad - v \rBrackets{\overline{\lambda } + \sigma \rho b(t)} \overline{\lambda } g^{\mathbb{Q}} + v y \overline{\lambda } (1 - \gamma)^{-1}\rBrackets{\overline{\lambda } + \sigma \rho b(t)}  g^{\mathbb{Q}}_{y} \\
    &\quad + v \frac{1}{2}  \rBrackets{1 + \frac{1}{1 - \gamma} } \rBrackets{\overline{\lambda } + \sigma \rho b(t)}^{2} g^{\mathbb{Q}} + v y (\gamma - 1)^{-1} \rBrackets{\overline{\lambda } + \sigma \rho b(t)}^2 g^{\mathbb{Q}}_{y}\\
    &\quad + v \sigma^2   b(t) g^{\mathbb{Q}}_v - v  \sigma  \rho \rBrackets{\overline{\lambda } + \sigma \rho b(t)}  b(t) g^{\mathbb{Q}}\\
    & \quad - v \sigma  \rho \rBrackets{\overline{\lambda } + \sigma \rho b(t)}  g^{\mathbb{Q}}_{v} + v y  \sigma  \rho (1 - \gamma)^{-1} \rBrackets{\overline{\lambda } + \sigma \rho b(t)}  b(t) g^{\mathbb{Q}}_{y} + v \lambar \sigma \rho  g^{\mathbb{Q}}_v + \sigma \lambda^{v}\sqrt{1 - \rho^2} v g^{\mathbb{Q}}_v.
\end{align*}

The above equality is true for any $y>0, v>0$ if the the terms next to $v g^{\mathbb{Q}}$, $v g^{\mathbb{Q}}_{v}$, $v y g^{\mathbb{Q}}_{y}$ are $0$.

\textbf{Coefficient next to $v g^{\mathbb{Q}}$}

Collecting all terms next to $v g^{\mathbb{Q}}$ yields:
\begin{align*}
     0 & = \frac{1}{2} \rBrackets{1 - \frac{1}{1 - \gamma}}\sigma^{2} \rho^{2} (b(t))^2 + \rBrackets{1 - \frac{1}{1 - \gamma}}\overline{\lambda }\sigma \rho b(t) + \frac{1}{2} \rBrackets{1 - \frac{1}{1 - \gamma}} \overline{\lambda }^{2} - \rBrackets{\overline{\lambda } + \sigma \rho b(t)} \overline{\lambda }\\
     & \quad + \frac{1}{2}  \rBrackets{1 + \frac{1}{1 - \gamma} } \rBrackets{\overline{\lambda } + \sigma \rho b(t)}^{2} - \sigma  \rho \rBrackets{\overline{\lambda } + \sigma \rho b(t)}  b(t)\\
     & = \frac{1}{2} \rBrackets{1 - \frac{1}{1 - \gamma}}\sigma^{2} \rho^{2} (b(t))^2 + \rBrackets{1 - \frac{1}{1 - \gamma}}\overline{\lambda }\sigma \rho b(t) + \frac{1}{2} \rBrackets{1 - \frac{1}{1 - \gamma}} \overline{\lambda }^{2} - \overline{\lambda }^2 - \overline{\lambda } \sigma \rho b(t) \\
     & \quad + \frac{1}{2}  \rBrackets{1 + \frac{1}{1 - \gamma} } \rBrackets{\overline{\lambda }^{2} + 2 \lambar \sigma \rho b(t) + \rBrackets{\sigma \rho b(t)}^{2}} - \overline{\lambda } \sigma  \rho b(t) -  \sigma^2 \rho^2 (b(t))^2.\\
\end{align*}

We show that the above equality is true by showing that the coefficients next to $(b(t))^2$, $b(t)^1$ and $b(t)^0$ are all equal to 0.

For the coefficient next to $(b(t))^2$ we obtain:
\begin{align*}
     \frac{1}{2} \rBrackets{1 - \frac{1}{1 - \gamma}}\sigma^{2} \rho^{2} + \frac{1}{2}  \rBrackets{1 + \frac{1}{1 - \gamma} } \sigma^2 \rho^2 - \sigma ^2 \rho^2 = \sigma^{2} \rho^{2}\rBrackets{\frac{1}{2} - \frac{1}{2 (1 - \gamma)} + \frac{1}{2} + \frac{1}{2 (1 - \gamma)} - 1} = 0.
\end{align*}

For the coefficient next to $(b(t))^1$ we obtain:
\begin{align*}
    \rBrackets{1 - \frac{1}{1 - \gamma}}& \overline{\lambda }\sigma \rho - \overline{\lambda }\sigma \rho +  \frac{1}{2}  \rBrackets{1 + \frac{1}{1 - \gamma} } 2 \lambar \sigma  \rho - \overline{\lambda } \sigma  \rho = \overline{\lambda } \sigma  \rho \rBrackets{1 - \frac{1}{1 - \gamma}  - 1 + \rBrackets{1 + \frac{1}{1 - \gamma} } - 1} = 0.
\end{align*}

For the coefficient next to $(b(t))^0$ we obtain:
\begin{align*}
    \frac{1}{2} \rBrackets{1 - \frac{1}{1 - \gamma}} \overline{\lambda }^{2} - \overline{\lambda }^2 + \frac{1}{2}  \rBrackets{1 + \frac{1}{1 - \gamma} } \overline{\lambda }^{2} = \overline{\lambda }^{2} \rBrackets{\frac{1}{2} - \frac{1}{2}\frac{1}{1 - \gamma} - 1 + \frac{1}{2} + \frac{1}{2} \frac{1}{1 - \gamma}} = 0.
\end{align*}

Hence, the coefficient next to $v g^{\mathbb{Q}}$ is $0$, i.e. $v g^{\mathbb{Q}}$ vanishes in the relation we are proving.

\textbf{Coefficient next to $v y g^{\mathbb{Q}}_{y}$}
The coefficient next to $v y g^{\mathbb{Q}}_{y}$ is equal to:
\begin{align*}
    \overline{\lambda } & (1 - \gamma)^{-1}\rBrackets{\overline{\lambda } + \sigma \rho b(t)} + (\gamma - 1)^{-1} \rBrackets{\overline{\lambda } + \sigma \rho b(t)}^2 + \sigma  \rho (1 - \gamma)^{-1} \rBrackets{\overline{\lambda } + \sigma \rho b(t)}  b(t)\\
    & = (1 - \gamma)^{-1} \rBrackets{\overline{\lambda }\rBrackets{\overline{\lambda } + \sigma \rho b(t)} - \rBrackets{\overline{\lambda } + \sigma \rho b(t)}^2 + \sigma  \rho  b(t) \rBrackets{\overline{\lambda } + \sigma \rho b(t)} }\\
    & = (1 - \gamma)^{-1} \rBrackets{\rBrackets{\overline{\lambda } + \sigma \rho b(t)}^2 - \rBrackets{\overline{\lambda } + \sigma \rho b(t)}^2} = 0.
\end{align*}
Hence, the coefficient next to $v y g^{\mathbb{Q}}_{y}$ is $0$, i.e., $v y g^{\mathbb{Q}}_{y}$ vanishes in the relation we are proving.

\textbf{Coefficient next to $v g^{\mathbb{Q}}_{v}$}
The coefficient next to $v g^{\mathbb{Q}}_{v}$ is equal to:
\begin{align*}
    \sigma^2  b(t) - \sigma  \rho \rBrackets{\overline{\lambda } + \sigma \rho b(t)}  + \lambar \sigma \rho + \lambda^v \sigma \sqrt{1- \rho^2} &= \sigma^2  b(t) - \sigma  \rho \overline{\lambda } - \sigma^2 \rho^2 b(t)  + \lambar \sigma \rho + \lambda^v \sigma \sqrt{1- \rho^2}\\
    &= b(t) \sigma^2 \rBrackets{1 - \rho^2} + \lambda^v \sigma \sqrt{1- \rho^2}
\end{align*}

The coefficient next to $v g^{\mathbb{Q}}_{v}$ is equal to zero if $\lambda^v = - \sigma \sqrt{1- \rho^2} b(t)$. This is equivalent to picking a convenient change of measure on the variance process.

So for $\lambda^v = - \sigma \sqrt{1- \rho^2} b(t)$  \eqref{eq:suffiecient_condition_as_financial_derivatives_SupMat} holds also for the 2nd and 3rd piece of the modified utility function:
\begin{eqnarray}
    \frac{\partial}{\partial y}\EVtyv{P}{\frac{1}{\gamma }\left( K^{\gamma }-\rBrackets{Y^{\ast}(T)}^{\gamma }\right) 1_{\left\{ Y^{\ast}(T) < K\right\} } } & = & \frac{\partial}{\partial y}\EVtyv{Q}{\exp(-r(T - t))\left( K-Y^{\ast}(T)\right) 1_{\left\{ Y^{\ast}(T) < K\right\} }}\notag \\
    &&\cdot y^{\gamma -1}\exp \left(a(t) + b(t)v\right)\quad \forall K > 0,\, \gamma < 1 \notag
\end{eqnarray}

\textbf{Part 1. Term 4. i.e. binary option\\}
Now we derive the relationship between $\lambda_{\varepsilon}$, $y$, $k_{\varepsilon}$ and $k_{v}$, which ensures that the last piece of the modified utility function also satisfies the same \eqref{cond:SC_in_lemma}, in particular $\overline{U}^{(4)}_{y} = y^{\gamma -1}\exp \left(a(t) + b(t)v\right) D^{(4)}_{y}$.

For $\overline{U}_{4}(y)=\frac{1}{\gamma }\left( K^{\gamma }-k_{\varepsilon
}^{\gamma }+\gamma \lambda _{\varepsilon }\right) 1_{\left\{ y<k_{\varepsilon}\right\}
}$ in \eqref{eq:U_i_Fourier_Transform_SupMat} we get:%
\begin{flalign*}
\overline{U}^{(4)} &=\frac{1}{2\pi }\frac{1}{\gamma }\int \int \left( K^{\gamma
}-k_{v}^{\gamma }+\gamma \lambda _{\varepsilon }\right)
1_{\left\{ z<\ln k_{\varepsilon}-\ln y\right\} }\exp \left(
-iuz+A^{\mathbb{P}}(T-t,u)+B^{\mathbb{P}}(T-t,u)v\right) dudz\\
&= \frac{1}{2\pi }\frac{ K^{\gamma
}-k_{v}^{\gamma }+\gamma \lambda _{\varepsilon }}{\gamma }\int \limits_{-\infty}^{\ln(k_{\varepsilon}/y)} \underbrace{\int\limits_{-\infty}^{+\infty} \exp \left(
-iuz+A^{\mathbb{P}}(T-t,u)+B^{\mathbb{P}}(T-t,u)v\right) du}_{=:g(y,z)}dz\\
     &\stackrel{LIR}{=} \frac{1}{2\pi }\frac{ K^{\gamma
}-k_{v}^{\gamma }+\gamma \lambda _{\varepsilon }}{\gamma } \Biggl(g(y, \ln(k_{\varepsilon}/y)) \rBrackets{-\frac{1}{y}} - \lim_{c \downarrow -\infty} \biggl(  g(y, c) \underbrace{\frac{\partial c}{\partial y} }_{=0}\biggr) + \int \limits_{-\infty}^{\ln\rBrackets{k_{\varepsilon}/y}} \underbrace{\ddy g(y, z) }_{=0}dz\Biggr)\\
    &\stackrel{g}{=} -\frac{1}{y} \frac{K^{\gamma
}-k_{v}^{\gamma }+\gamma \lambda _{\varepsilon }}{\gamma } \frac{1}{2\pi } \int \limits_{-\infty}^{+\infty} \exp
\left( -iu\left( \ln k_{\varepsilon}-\ln y\right) +A^{\mathbb{P}}(T-t,u)+B^{\mathbb{P}}(T-t,u)v\right) du\\
&=-\frac{1}{y} \frac{K^{\gamma}-k_{v}^{\gamma }+\gamma \lambda _{\varepsilon }}{\gamma } \frac{1}{2\pi }\int\limits_{-\infty}^{+\infty} \exp \left( -iu \ln k_{\varepsilon} \right) \exp\left( iu \ln y +A^{\mathbb{P}}(T-t,u)+B^{\mathbb{P}}(T-t,u)v\right) du \\
\end{flalign*}
So:
\begin{flalign*}
   \overline{U}_{y}^{(4)}&=-\frac{1}{y} \frac{K^{\gamma}-k_{v}^{\gamma }+\gamma \lambda _{\varepsilon }}{\gamma } \underbrace{\frac{1}{2\pi }\int\limits_{-\infty}^{+\infty} \exp \left( -iu \ln k_{\varepsilon} \right) \phi^{Z^\ast(T),\mathbb{P}}(u; t, \ln(y), v) du}_{=: f^{\mathbb{P}}_{Z^\ast(T)}(\ln k_{\varepsilon})} = -\frac{1}{y} \frac{K^{\gamma}-k_{v}^{\gamma }+\gamma \lambda _{\varepsilon }}{\gamma } f^{\mathbb{P}}_{Z^\ast(T)}(\ln k_{\varepsilon}),
\end{flalign*}
where $f^{\mathbb{P}}_{Z^\ast(T)}$ denotes the $\mathbb{P}$-density of $Z^\ast(T) = \ln(Y^{\ast}(T))$.

Applying the previous result for $\gamma = 1$, $\lambda_{\varepsilon} = 0$ and working under the measure $\mathbb{Q}$ instead of $\mathbb{P}$, we get for $D_{4}(y)=\left( K-k_{v}\right) 1_{\left\{ y<k_{\varepsilon}\right\} }$ in \eqref{eq:Pi_i_Fourier_Transform_SupMat} the following:%
\begin{eqnarray*}
D^{(4)} &=&\frac{1}{2\pi }\exp\rBrackets{-r(T-t)}\int \int \left( K-k_{\varepsilon
}\right) 1_{\left\{ z<\ln k_{\varepsilon}-\ln y\right\} }\exp \left(
-iuz+A^{\mathbb{Q}}(T-t,u)+B^{\mathbb{Q}}(T-t,u)v\right) dudz\\
&=& -\frac{1}{y} \left(K-k_{v}\right) \exp\rBrackets{-r(T-t)} f^{\mathbb{Q}}_{Z^\ast(T)}(\ln k_{\varepsilon}),
\end{eqnarray*}
where $f^{\mathbb{Q}}_{Z^\ast(T)}$ denotes the $\mathbb{Q}$-density of $Z^\ast(T) = \ln(Y^{\ast}(T))$.

Hence, the condition equivalent to \eqref{cond:SC_in_lemma} in the context of the fourth piece is given by:
\begin{align}
    \cancel{\rBrackets{-\frac{1}{y}}} \frac{K^{\gamma}-k_{v}^{\gamma }+\gamma \lambda _{\varepsilon }}{\gamma } f^{\mathbb{P}}_{Z^\ast(T)}(\ln k_{\varepsilon}) &\stackrel{!}{=} y^{\gamma -1}\exp \left(a(t) + b(t)v\right) \cancel{\rBrackets{-\frac{1}{y}}} \left(K-k_{v}\right)\notag \\
    & \quad \cdot \exp\rBrackets{-r(T-t)} f^{\mathbb{Q}}_{Z^\ast(T)}(\ln k_{\varepsilon})\notag\\
    \iff \frac{K^{\gamma}-k_{v}^{\gamma }+\gamma \lambda _{\varepsilon }}{\gamma } f^{\mathbb{P}}_{Z^\ast(T)}(\ln k_{\varepsilon}) &\stackrel{!}{=} y^{\gamma -1}\exp \left(a(t) + b(t)v\right) \left(K-k_{v}\right) \tag{ESC Binary} \label{eq:sufficient_conditions_binary_SupMat} \\
    & \quad \cdot \exp\rBrackets{-r(T-t)} f^{\mathbb{Q}}_{Z^\ast(T)}(\ln k_{\varepsilon}) \notag
\end{align}
Condition \eqref{eq:sufficient_conditions_binary_SupMat} is satisfied if the following relationship among $\lambda_{\varepsilon}$, $y$, $k_{\varepsilon}$ and $k_{v}$ holds:
\begin{equation}
    \lambda_{\varepsilon } = y^{\gamma -1}\exp \left(a(t) + b(t)v\right) \left(K-k_{v}\right) \exp\rBrackets{-r(T-t)} \frac{f^{\mathbb{Q}}_{Z^{\ast}(T)}(\ln k_{\varepsilon})}{f^{\mathbb{P}}_{Z^{\ast}(T)}(\ln k_{\varepsilon})} -  \frac{K^{\gamma}-k_{v}^{\gamma }}{\gamma }.\label{eq:lambda_epsilon}
\end{equation}

So by Lemma \ref{lem:sufficient_condition}, both \eqref{cond:U_D_yy_y} and \eqref{cond:U_D_yv_y} in Theorem \ref{MainTheo} are satisfied at an arbitrary but fixed $\tin$, when Condition \eqref{cond:SC_in_lemma} holds at $\tin$. In Part 1 of this proof, we have shown that for an arbitrary but fixed $\tin$, ensuring Condition \eqref{cond:SC_in_lemma} is equivalent to ensuring $\overline{U}^{(i)}_{y} = y^{\gamma -1}\exp \left(a(t) + b(t)v\right) D^{(i)}_{y}$ $\forall i  \in \{1,2,3,4\}$ for the constructed $D$. As we have also shown, these four equalities are satisfied when $\lambda^{v}(t) =  - \sigma \sqrt{1- \rho^2} b(t)$ and \eqref{eq:lambda_epsilon} hold, imposing a specific relationship among $\lambda_{\varepsilon }$, $k_{\varepsilon}$, $k_{v}$ and $y$ at $\tin$. The optimal Lagrange multiplier is determined at $t = 0$ via
\begin{equation*}
    \lambda _{\varepsilon}^\ast = y_0^{\gamma - 1}\exp \left(a(0) + b(0)v_0 - rT\right) \left(K-k_{v,0}\right) \frac{f^{\mathbb{Q}}_{Z^\ast(T)}(\ln k_{\varepsilon, 0})}{f^{\mathbb{P}}_{Z^\ast(T)}(\ln k_{\varepsilon, 0})} -  \frac{K^{\gamma} - k_{v,0}^{\gamma }}{\gamma }
\end{equation*}
and imposes the relationship among the degrees of freedom $k_{\varepsilon,t}$, $k_{v,t}$ and $y_t$ at each $\tin$. 

\bigskip

\textbf{Part 2.}  At any $\tin$, Condition \eqref{cond:D_v} is satisfied due to the assumption that $(y_t, k_{v,t}, k_{\varepsilon,t})$ solves the vega-neutrality equation in \eqref{eq:SNLE_rho_zero}, namely
\begin{equation*}
h_{VN}(y_t, k_{v, t}, k_{\varepsilon, t}) := \widehat{D}(t,y_t,v_t;k_{v, t}, k_{\varepsilon, t}) = 0.
\end{equation*}
Note that for any $t \in (0, T]$ the system \eqref{eq:SNLE_rho_zero} has three variables and three equations. The same applies to the system \eqref{eq:SNLE_t_zero} at $t = 0$.

\bigskip

\textbf{Part 3.} As argued in Parts 1 and 2, Conditions \eqref{cond:U_D_yy_y} -- \eqref{cond:D_v} are satisfied, at $t=0$, the second equation in \eqref{eq:SNLE_t_zero} ensures that the VaR constraint is satisfied:
\begin{equation*}
    h_{VaR}(y_0, k_{v, 0}, k_{\varepsilon, 0}) := \mathbb{P}\left( Y^{\ast}(T) <  k_{\varepsilon, 0}|Y^{\ast}(0) = y_0, v(0) = v_0\right) = \varepsilon.
\end{equation*}
Thus, we can apply Theorem \ref{MainTheo} for $\lambda^{v} = - \sigma \sqrt{1- \rho^2} b(t)$ and conclude that
\begin{eqnarray*}
    X^{x, \pi^\ast_c}(T) & = & D(Y^{y, \piu}(T))\quad  \text{with} \quad x = \dqlatyv  :=  \mathbb{E}_{t,y,v}^{\mathbb{Q}(\lambda^{v})}\left[\exp\rBrackets{-r(T-t)} D(Y^{y, \piu}(T))\right];\\
    \mathcal{V}^{c}\left( t, x ,v\right) &=& \udp(t,y,v);\\
    \pi ^{\ast }_{c}(t) &=& \piu(t)  \cdot y \cdot \frac{\dq_{y}(t,y,v)}{\dq(t,y,v)},
\end{eqnarray*}
where $D$ is the derivative constructed via a continuum of contingent claims with payoffs $\widehat{D}(\cdot; k_{v,t}, k_{\varepsilon,t})$, as allowed by Proposition \ref{prop:equivalence_btw_sequence_and_single_D}.
\end{proof}

\section{Explicit formulas for the left-hand side of \ref{eq:SNLE_rho_zero}}\label{app:explicit_formulas}

In this section of the appendix, we provide representations of the equations in \eqref{eq:SNLE_rho_zero} in the spirit of \cite{Carr1999}. \\

\textbf{Budget equation.} First, we provide a formula for the price of a plain-vanilla put option. Second, we derive the formula for the price of a digital put option. Afterwards, we will provide the formula for the LHS of the budget equation, which combines the formulas obtained in the previous two steps.

\textit{Put option.} Take any $\alpha_P > 1$ and any strike $K > 0$. Denote $k = \ln\rBrackets{K}$. Analogously to Equation (3.50) in Fabrice (2013), pages 82-83, we can get:
\begin{align}
   Put(k) & := Put(Y^{\ast}(T), K) = \EVtyv{Q}{\exp\rBrackets{-r(T - t)}\rBrackets{K - Y^{\ast}(T)}^{+}} \notag \\
    & = \frac{\exp\rBrackets{\alpha k}}{\pi} \int \limits_{0}^{+\infty} \text{Real}\rBrackets{\frac{\exp\rBrackets{-r (T - t)} \exp\rBrackets{-i u k}}{\alpha_P^2 - \alpha_P - u^2 + i u (1 - 2\alpha_P)}\phi^{Z^\ast(T), \mathbb{Q}}(u + (\alpha_P - 1) i;t,\ln y,v)}du. \label{eq:put_price_CM}
\end{align}

\textit{Digital put option.} Let $K > 0$ be an arbitrary but fixed strike of a digital put option with the nominal payment of 1 monetary unit. Denote $k = \ln\rBrackets{K}$. Then the price of such a digital put option is given by:
\begin{align}
     DigPut(k) & := DigPut(Y^{\ast}(T), K) = \EVtyv{Q}{\exp\rBrackets{-r(T - t)}\mathbbm{1}_{\{Y^{\ast}(T) < K\}}} \notag \\
     & \stackrel{Def}{=}  \mathbb{E}^{\mathbb{Q}}\sBrackets{\exp\rBrackets{-r(T - t)}\mathbbm{1}_{\{Z^\ast(T) < k\}}|Z^\ast(t) = \ln\rBrackets{y}, v(t) = v} \notag \\
    & = \exp\rBrackets{-r(T - t)} \mathbb{Q}\rBrackets{Z^\ast(T) < k |Z^\ast(t) = \ln\rBrackets{y}, v(t) = v} = \exp\rBrackets{-r(T - t)} \int \limits_{-\infty}^{k} f_{Z^\ast(T)}^{\mathbb{Q}}(z)\,dz \label{eq:digital_put_price}
\end{align}

Take any $\alpha_{DP} > 0$ and consider the following dampened price of a digital put option:
\begin{equation}\label{eq:digital_put_dampened_price}
    DigPut^{\rBrackets{\alpha_{DP}}}(k) = \exp\rBrackets{-\alpha_{DP} k} DigPut(k)
\end{equation}

Then the Fourier transform of $DigPut^{\rBrackets{\alpha_{DP}}}(k)$ is given by:
\begin{align}
    \phi^{DigPut^{\rBrackets{\alpha_{DP}}}}(k) & = \int\limits_{-\infty}^{+\infty} \exp\rBrackets{i u k } DigPut^{\rBrackets{\alpha_{DP}}}(k) \, dk  \stackrel{\eqref{eq:digital_put_dampened_price}}{=}   \int\limits_{-\infty}^{+\infty} \exp\rBrackets{i u k } \exp\rBrackets{-\alpha_{DP} k} DigPut(k) \, dk \notag \\
    & \stackrel{\eqref{eq:digital_put_price}}{=} \int\limits_{-\infty}^{+\infty} \exp\rBrackets{i u k } \exp\rBrackets{-\alpha_{DP} k} \exp\rBrackets{-r(T - t)} \int \limits_{-\infty}^{k} f_{Z^\ast(T)}^{\mathbb{Q}}(z) \,dz \, dk \notag \\
    & \stackrel{(i)}{=} \int\limits_{-\infty}^{+\infty} \int \limits_{z}^{+\infty}  \exp\rBrackets{i u k } \exp\rBrackets{-\alpha_{DP} k} \exp\rBrackets{-r(T - t)} f_{Z^\ast(T)}^{\mathbb{Q}}(z) \, dk \,dz  \notag \\
    & \stackrel{}{=} \int\limits_{-\infty}^{+\infty}  \exp\rBrackets{-r(T - t)} f_{Z^\ast(T)}^{\mathbb{Q}}(z) \rBrackets{ \int \limits_{z}^{+\infty}  \exp\rBrackets{i u k } \exp\rBrackets{-\alpha_{DP} k} \, dk }\,dz  \notag \\
    & \stackrel{\alpha_{DP} > 0}{=} \int\limits_{-\infty}^{+\infty}  \exp\rBrackets{-r(T - t)} f_{Z^\ast(T)}^{\mathbb{Q}}(z) \frac{\exp\rBrackets{iuz - \alpha_{DP} z}}{\alpha_{DP} - i u}\,dz \notag\\
    & = \frac{\exp\rBrackets{-r(T - t)}}{\alpha_{DP} - i u} \int\limits_{-\infty}^{+\infty} \exp\rBrackets{iz(u - \alpha_{DP} / i)} f_{Z^\ast(T)}^{\mathbb{Q}}(z) \, dz  \notag \\
    & = \frac{\exp\rBrackets{-r(T - t)}}{\alpha_{DP} - i u} \phi^{Z^\ast(T), \mathbb{Q}}(u - \alpha_{DP} / i;t,\ln y,v) \notag  \\
    & = \frac{\exp\rBrackets{-r(T - t)}}{\alpha_{DP} - i u} \phi^{Z^\ast(T), \mathbb{Q}}(u + \alpha_{DP} i;t,\ln y,v) \label{eq:digital_put_dampened_FT}
\end{align}
where in (i) we changed the order of integration.

Therefore, the price of a digital put option is given by:
\begin{align}
    DigPut(k) & = \exp\rBrackets{\alpha_{DP} k} \exp\rBrackets{-\alpha_{DP} k} DigPut(k) \stackrel{\eqref{eq:digital_put_dampened_price}}{=} \exp\rBrackets{\alpha_{DP} k} DigPut^{\rBrackets{\alpha_{DP}}}(k) \notag\\
    &  \stackrel{IFT}{=} \exp\rBrackets{\alpha_{DP} k} \frac{1}{2 \pi} \int\limits_{-\infty}^{+\infty}\text{Real} \rBrackets{\exp\rBrackets{-i u k } \phi^{DigPut^{\rBrackets{\alpha_{DP}}}}(u)} \, du \notag\\
    & \stackrel{\eqref{eq:digital_put_dampened_FT}}{=}  \exp\rBrackets{\alpha_{DP} k} \frac{1}{2 \pi} \int\limits_{-\infty}^{+\infty} \text{Real} \rBrackets{\exp\rBrackets{-i u k } \frac{\exp\rBrackets{-r(T - t)}}{\alpha_{DP} - i u} \phi^{Z^\ast(T), \mathbb{Q}}(u + \alpha_{DP} i;t,\ln y,v)} \, du \notag \\
    & = \exp\rBrackets{\alpha_{DP} k} \frac{1}{\pi} \int\limits_{0}^{+\infty} \text{Real} \rBrackets{\exp\rBrackets{-i u k } \frac{\exp\rBrackets{-r(T - t)}}{\alpha_{DP} - i u} \phi^{Z^\ast(T), \mathbb{Q}}(u + \alpha_{DP} i;t,\ln y,v)} \, du.\label{eq:digital_put_price_CM}
\end{align}

Therefore, the budget equation in \eqref{eq:SNLE_rho_zero} can be written as follows:
\begin{align*}
    & \dqla (t,y,v) = y + \int \limits_{0}^{+\infty} \text{Real}\rBrackets{\frac{\exp\rBrackets{-r (T - t)} \exp\rBrackets{-i u \ln(K)}}{\alpha_P^2 - \alpha_P - u^2 + i u (1 - 2 \alpha_P)}\phi^{Z^\ast(T), \mathbb{Q}}(u + (\alpha_P - 1) i;t,\ln y,v)}du \\
    &\quad \cdot \frac{\exp\rBrackets{\alpha_P \ln(K)}}{\pi} - \int \limits_{0}^{+\infty} \text{Real}\rBrackets{\frac{\exp\rBrackets{-r (T - t)} \exp\rBrackets{-i u \ln(k_{v})}}{\alpha_P^2 - \alpha_P - u^2 + i u (1 - 2 \alpha_P)}\phi^{Z^\ast(T), \mathbb{Q}}(u + (\alpha_P - 1) i;t,\ln y,v)}du \\
    & \quad \cdot \frac{\exp\rBrackets{\alpha_P \ln(k_{v})}}{\pi} -  \int\limits_{0}^{+\infty} \text{Real} \rBrackets{ \exp\rBrackets{-i u \ln\rBrackets{k_{\varepsilon}} } \frac{\exp\rBrackets{-r(T - t)}}{\alpha_{DP} - i u} \phi^{Z^\ast(T), \mathbb{Q}}(u + \alpha_{DP} i;t,\ln y,v)} \, du\\
    & \quad \cdot (K - k_{v}) \exp\rBrackets{\alpha_{DP} \ln\rBrackets{k_{\varepsilon}}} \frac{1}{ \pi}
\end{align*}

\textbf{VaR equation.}
The LHS of the VaR equation can be obtained from Equation \eqref{eq:digital_put_price_CM} by considering the measure $\mathbb{P}$ instead of $\mathbb{Q}$ and setting $r = 0$.
\begin{flalign*}
    \mathbb{P}\left( Y^{\ast}(T) <  k_{\varepsilon}|Y^{\ast}(t) = y, v(t) = v\right) =& \exp\rBrackets{\alpha_{DP} \ln\rBrackets{k_{\varepsilon}}} \\
    &\cdot \frac{1}{\pi} \int\limits_{0}^{+\infty} \text{Real} \rBrackets {\frac{\exp\rBrackets{-i u \ln\rBrackets{k_{\varepsilon}} }}{\alpha_{DP} - i u}  \phi^{Z^\ast(T),\mathbb{P}}(u + \alpha_{DP} i;t,\ln y,v) } \, du.
\end{flalign*}

\textbf{Vega equation.} Differentiating the budget equation w.r.t $v$ and using Remark 3 to Corollary \ref{cor:heston_var_rho_nonzero_solution}, we get:
\begin{flalign*}
    \dqla_{v}(t,y,v) & = \frac{\exp\rBrackets{\alpha_P \ln(K)}}{\pi} \int \limits_{0}^{+\infty} \text{Real}\Biggl(\frac{\exp\rBrackets{-r (T - t)} \exp\rBrackets{-i u \ln(K)} B^{\mathbb{Q}}(T -t, u + (\alpha_P - 1) i)}{\alpha_P^2 - \alpha_P - u^2 + i u (1 - 2 \alpha_P)}\\
    & \quad \cdot \phi^{Z^\ast(T), \mathbb{Q}}(u + (\alpha_P - 1) i;t,\ln y,v)\Biggr) \, du - \frac{\exp\rBrackets{\alpha_P \ln(k_{v})}}{\pi}\\
    &\quad  \cdot  \int \limits_{0}^{+\infty} \text{Real}\Biggl(\frac{\exp\rBrackets{-r (T - t)} \exp\rBrackets{-i u \ln(k_{v})} B^{\mathbb{Q}}(T -t, u + (\alpha_P - 1) i)}{\alpha_P^2 - \alpha_P - u^2 + i u (1 - 2 \alpha_P)} \\
    & \quad \cdot \phi^{Z^\ast(T), \mathbb{Q}}(u + (\alpha_P - 1) i;t,\ln y,v)\Biggr) \, du - (K - k_{v}) \exp\rBrackets{\alpha_{DP} \ln\rBrackets{k_{\varepsilon}}} \frac{1}{2 \pi}  \\
    & \quad  \cdot \int\limits_{-0}^{+\infty} \text{Real} \Biggl( \exp\rBrackets{-i u \ln\rBrackets{k_{\varepsilon}}} \frac{\exp\rBrackets{-r(T - t)} B^{\mathbb{Q}}(T -t, u + \alpha_{DP} i)}{\alpha_{DP} - i u} \\
    & \quad \cdot \phi^{Z^\ast(T), \mathbb{Q}}(u + \alpha_{DP} i;t,\ln y,v)\Biggr) \, du.
\end{flalign*}


\section{Numerical studies for more turbulent markets}\label{app:numerical_studies_turbulent_markets}
In this subsection of the Appendix, we consider $T=3$ as in the main part of the article, but a decision maker with a smaller relative risk-aversion and who invests in a more turbulent market than we had before, i.e., higher initial value of the variance process, a higher long-term average variance, and a lower mean reversion rate. In particular, we set $\gamma = -1$ and use the values of the Heston model parameters so that they are consistent with \cite{Schoutens2004}: $v_0 = 0.0654$, $\tilde{\theta} = 0.0707$, $\tilde{\kappa} = 0.6067$, $\sigma = 0.2928$, $\rho = -0.7571$. We plot in Figure \ref{fig:impact_of_rho_sigma_VaR_Schoutens} the sensitivity of the optimal constrained investment strategy w.r.t. $\rho$, $\sigma$, and $\kappa$. 

\begin{figure}[!ht]
        \centering
        \begin{subfigure}{0.5\textwidth}
          \centering
          \includegraphics[width=\linewidth]{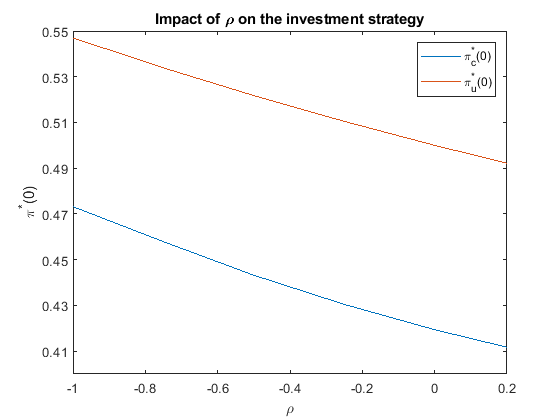}
          \caption{$\pic(0)$ \& $\piu(0)$ vs $\rho$}
          \label{sfig:piStar_vs_rho_VaR_Schoutens}
        \end{subfigure}%
        \begin{subfigure}{0.5\textwidth}
          \centering
          \includegraphics[width=\linewidth]{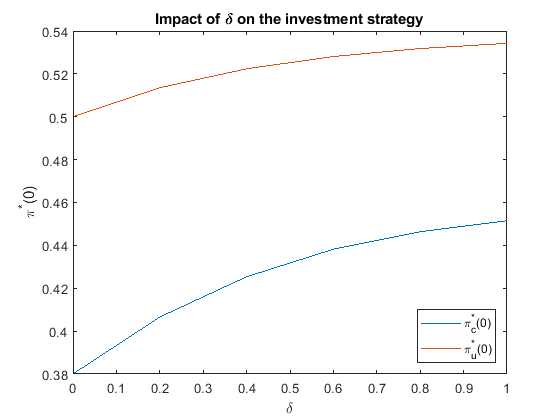}
          \caption{$\pic(0)$ \& $\piu(0)$  vs $\delta$ (influencing $\sigma_{\delta} = \sigma \cdot \delta$ and  $\kappa_{\delta} = \kappa \cdot \delta$)}
          \label{sfig:piStar_vs_sigma_VaR_Schoutens}
        \end{subfigure}
        \caption{The impact of $\rho$, $\sigma$, $\kappa$ on the optimal investment strategies in a more turbulent market}
        \label{fig:impact_of_rho_sigma_VaR_Schoutens}
\end{figure}

In contrast to the case of average parameters considered in the main text of the paper, the sensitivity of the optimal constrained investment strategies w.r.t. the correlation coefficient, mean-reversion rate, and the volatility of the variance process is higher in a more volatile market.  For example, according to the Subfigure \ref{sfig:piStar_vs_rho_VaR_Schoutens}, a decrease in the correlation coefficient from $-40\%$ to $-60\%$ leads to an increase of the initial optimal constrained investment strategy by more than $1\%$, namely from $42.7\%$ to approximately $44\%$. Looking at $\delta = 1$ and $\delta = 0.75$ in Subfigure \ref{sfig:piStar_vs_sigma_VaR_Schoutens}, we see that a decrease in volatility from $39.28\%$  to $21.96\%$ and the real-world-measure mean-reversion rate of the variance process from $0.8171$ to $0.6128$ would require a rational investor to decrease his/her initial constrained investment strategy by approximately $0.7\%$, namely, from $45.2\%$ to $44.7\%$.  The behavior is similar to that of the optimal unconstrained investment strategy.  It can have the following economic interpretation. The infinitesimal Sharpe ratio of the risky asset is $\bar{\lambda} \sqrt{v(t)}$. It is negatively correlated with the Wiener process $W_1^{\mathbb{P}}(t)$ driving the stock returns. As a result, low return \enquote{today} tends to occur when $dW_1^{\mathbb{P}}(t)$ is negative and $dW_2^{\mathbb{P}}(t)$ is positive, which in turn pushes the \enquote{tomorrow's} Sharpe ratio higher and may give hope to the investor for good investment in the risky asset. Consequently, an investor increases his/her position in the risky asset in comparison to the Black-Scholes market. The \enquote{more} incompleteness an investor sees in the market, the more chances he/she sees for making profit with the risky asset investment and the corresponding correction term will be larger.

\end{document}